\setlist{nolistsep}
\newcolumntype{C}[1]{>{\centering\arraybackslash}p{#1}}
\algnewcommand\algorithmicinput{\textbf{INPUT:}}
\algnewcommand\INPUT{\item[\algorithmicinput]}
\algnewcommand\algorithmicoutput{\textbf{OUTPUT:}}
\algnewcommand\OUTPUT{\item[\algorithmicoutput]}
\algnewcommand\algorithmicexploration{\textbf{Initialise:}}
\algnewcommand\Initialise{\item[\algorithmicexploration]}
\algnewcommand\algorithmicexploitation{\textbf{Exploitation:}}
\algnewcommand\Exploitation{\item[\algorithmicexploitation]}
\newtheorem{theorem}{Theorem}
\newtheorem{lemma}[theorem]{Lemma}
\newtheorem{proposition}[theorem]{Proposition}
\newtheorem{definition}{Definition}
\newtheorem{assumption}{Assumption}
\newtheorem{model}{Model}
\newcommand{\neutralize}[1]{\expandafter\let\csname c@#1\endcsname\count@}
\DeclareMathOperator*{\argmin}{argmin}
\DeclareMathOperator*{\argmax}{argmax}
\def\E{\mathbb{E}}
\def\R{\mathbb{R}}
\title{Change Point Localization and Inference\\ in Dynamic Multilayer Networks}
\author[1]{Fan Wang}
\author[2]{Kyle Ritscher}
\author[3]{Yik Lun Kei}
\author[4]{Xin Ma}
\author[2]{Oscar Hernan Madrid Padilla}
\affil[1]{Department of Statistics,
  University of Warwick}
\affil[2]{Department of Statistics,
  University of California, Los Angeles}
\affil[3]{Department of Statistics,
  University of California, Santa Cruz}
\affil[4]{Department of Biostatistics, Columbia University}
\begin{document}

\maketitle

\begin{abstract}
We study offline change point localization and inference in dynamic multilayer random dot product graphs (D-MRDPGs), where at each time point, a multilayer network is observed with shared node latent positions and time-varying, layer-specific connectivity patterns. 
We propose a novel two-stage algorithm that combines seeded binary segmentation with low-rank tensor estimation, and establish its consistency in estimating both the number and locations of change points.
Furthermore, we derive the limiting distributions of the refined estimators under both vanishing and non-vanishing jump regimes. To the best of our knowledge, this is the first result of its kind in the context of dynamic network data.
We also develop a fully data-driven procedure for constructing confidence intervals. Extensive numerical experiments demonstrate the superior performance and practical utility of our methods compared to existing alternatives.
\end{abstract}

\section{Introduction}\label{section:intro}
Statistical network analysis studies relationships among entities represented as nodes connected by edges.
While single-layer networks capture pairwise relationships efficiently, many real-world systems involve multiple types of interaction among the same set of nodes. 
Multilayer networks address this complexity by organizing these varied interactions into distinct layers over a common node set.
For instance, in social networks, individuals may simultaneously interact through various relationships such as collaboration and friendship, each forming a distinct layer \cite[e.g.][]{Porter2018WhatIA}. Modeling multilayer structures enables the capture of heterogeneity and the identification of common structures across various interaction types. 

In practice, network structures often evolve over time. For instance, transportation networks may exhibit gradual diurnal variations or sudden structural changes due to unexpected events like accidents or road closures. Detecting these sudden shifts and providing adaptive strategies, such as dynamic traffic signal control or rerouting recommendations, is crucial for efficient transportation management. These abrupt structural shifts are referred to as change points.  This naturally falls in the territory of change point analysis.

Change point analysis is a well-established area in statistics concerned with detecting abrupt structural changes in ordered data. It can be broadly classified into online and offline settings, depending on whether data are analyzed sequentially as they are collected or retrospectively after the full dataset has been observed. In the context of dynamic networks, online change point detection has been studied in models such as inhomogeneous Bernoulli networks \citep[e.g.,][]{yu2021optimal} and random weighted edge networks \citep[e.g.,][]{chen2024monitoring}. Offline detection has been explored in various network models, including inhomogeneous Bernoulli networks \citep[e.g.,][]{wang2021optimal}, stochastic block models \citep[e.g.,][]{xu2022statistical, bhattacharjee2020change} and random dot product graphs \citep[e.g.,][]{padilla2022change}. More recently, \cite{wang2023multilayer} investigated online change point detection in dynamic multilayer random dot product graphs (D-MRDPGs).

In this paper, we study offline change point localization and inference for D-MRDPGs.  Specifically, at each time point, we observe a realization of an $L$-layered multilayer network, where nodes are associated with fixed but latent positions, and layer-specific weight matrices capture heterogeneous interactions across layers. These weight matrices are allowed to vary over time. Our goal is to develop efficient procedures for localizing and inferring change points under this dynamic multilayer structure in the offline setting.

\subsection{List of contributions}

The main contributions of this paper are summarized as follows. 

First, to the best of our knowledge, this is the first work to address offline change point detection in dynamic multilayer networks.  We propose a novel two-stage procedure: the first stage generates a coarse set of candidates using seeded binary segmentation, in the spirit of \cite{kovacs2023seeded}, combined with CUSUM statistics. The second stage refines these candidates utilizing low-rank tensor estimation techniques. Under appropriate conditions, we establish consistency in both the number and locations of estimated change points. 

Second, we derive the limit distributions of the refined change point estimators.  Depending on whether the jump size remains fixed or vanishes as the time horizon diverges, the limiting distributions exhibit two distinct regimes. To the best of our knowledge, these are the first such results established in the network literature.  We further develop a completely data-driven procedure for constructing confidence intervals for the true change points.

Lastly, we conduct extensive numerical experiments to assess the performance of our proposed methods, demonstrating substantial improvements over existing state-of-the-art algorithms.

\subsection{Notation and organization}\label{sec:notation}

For any positive integer $p$, let $[p] = \{1, \ldots, p\}$. 
Let $\{a_n\}_{n \in \mathbb{N}^+}$ and $\{b_n\}_{n \in \mathbb{N}^+}$ be sequences of positive real numbers.  We write $a_n = O(b_n)$ if
$a_n \leq C b_n$ for some constant $C > 0$ independent of $n$ and all sufficiently large $n$, and  
$a_n = \Theta(b_n)$ if both $a_n = O(b_n)$ and $b_n = O(a_n)$. 
For a sequence of random variables $\{X_n\}_{n \in \mathbb{N}^+}$, we write  $ X_n = O_p(a_n) $ if $ \lim_{M \to \infty} \limsup_n \mathbb{P}(\vert X_n \vert  \geq M a_n) = 0 $. For any two sets $\mathcal{C}$ and $\mathcal{C}'$, define the one-sided Hausdorff distance $d(\mathcal{C}'|\mathcal{C}) = \max_{c \in \mathcal{C}} \min_{c' \in \mathcal{C}'} |c' - c|$ with  the convention that $d(\mathcal{C}'|\mathcal{C}) = \infty$ if either
$\mathcal{C}' = \emptyset$ or $\mathcal{C} = \emptyset$.

For any matrix $A \in \R^{p_1 \times p_2}$, let $A_i$ and $A^j$ be the $i$th row and $j$th column of $A$, respectively, and let $\sigma_1(A) \geq \dots \geq  \sigma_{p_1 \wedge p_2} (A)\geq 0$ be its singular values.  For any order-3 tensors $\mathbf{M}, \mathbf{Q}  \in \R^{p_1 \times p_2 \times p_3}$, define the inner product
$\langle \mathbf{M}, \mathbf{Q} \rangle = \sum_{i = 1}^{p_1} \sum_{j=1}^{p_2} \sum_{l = 1}^{p_3}   \mathbf{M} _{i, j, l } \mathbf{Q} _{i, j, l }$ and the Frobenius norm
$\|\mathbf{M}  \|_{\mathrm{F}}^2 =\langle \mathbf{M}, \mathbf{M} \rangle$.
The mode-$1$ matricization of a tensor  $\mathbf{M}$  is denoted by $\mathcal{M}_1(\mathbf{M}) \in \mathbb{R}^{p_{1} \times (p_{2} p_{3})}$
with entries $\mathcal{M}_1(\mathbf{M} )_{i_1, (i_2 - 1)p_{3} +i_3 } = \mathbf{M} _{i_1, i_2, i_3}$. The mode-2 and mode-3 matricizations are analogously defined as $\mathcal{M}_2(\mathbf{M}) \in \mathbb{R}^{p_2 \times (p_3 p_1)}$ and $\mathcal{M}_3(\mathbf{M}) \in \mathbb{R}^{p_3 \times (p_1 p_2)}$, respectively.
The Tucker ranks $(r_1, r_2, r_3)$ of $\mathbf{M}$ are given by $r_s = \mathrm{rank}(\mathcal{M}_s(\mathbf{M}))$ for $s \in [3]$. For any $s \in [3]$ and matrix $U_s \in \mathbb{R}^{q_s \times p_s}$, the marginal multiplication operator $\times_1$ is defined as
$ \mathbf{M}  \times_1 U_1 =  \{\sum_{k = 1}^{p_{1}} \mathbf{M}_{k, j, l} (U_1)_{i, k} \}_{i \in [q_1], \, j \in [p_2], \, l \in [p_3]} \in \mathbb{R}^{q_1 \times p_2 \times p_3}$. 
Marginal multiplications $\times_2$ and $\times_3$ are defined similarly.

The remainder of the paper is organized as follows.  Section~\ref{sec:optimal_localization} introduces the D-MRDPG model and our two-stage change point localization procedure, along with its theoretical guarantees. In Section~\ref{sec:limiting_distributions}, we derive the limiting distributions of the refined estimators and propose a data-driven procedure for constructing confidence intervals. Section~\ref{sec:experiment} presents extensive numerical experiments illustrating the empirical performance of the proposed methods. We conclude with a brief discussion in Section~\ref{sec:conclusion}. All proofs and auxiliary results are deferred to the Appendix.

\section{Change point localization}\label{sec:optimal_localization}

\subsection{Problem formulation}\label{sec:model}

We begin with the multilayer random dot product graph (MRDPG) model \citep{jones2020multilayer}, which generalizes the random dot product graph \citep{young2007random} to multilayer networks. Each layer is characterized by a distinct weight matrix, while all layers share a common set of latent positions.  We focus on undirected edges, as the directed case is similar and thus omitted.

\begin{definition}[Multilayer random dot product graphs, $\mbox{MRDPGs}$]\label{umrdpg-f}
 Given a sequence of deterministic matrices $\{W_{(l)}\}_{l = 1}^L \subset \R^{d \times d}$, let $\{X_i\}_{i=1}^{n} \subset \mathbb{R}^d$ be fixed vectors satisfying $X_i^{\top} W_{(l)} X_j \in [0, 1]$ for all $i, j \in [n], l \in [L]$.
 An adjacency tensor $\mathbf{A}  \in \{0, 1\}^{n \times n  \times L}$ follows an MRDPG if 
\begin{align*}
    \mathbb{P} \{\mathbf{A} \} & = \prod_{l = 1}^{L} \prod_{ 1 \leq i\leq j \leq n}\mathbf{P}_{i, j, l}^{\mathbf{A}_{i, j, l}} (1- \mathbf{P}_{i, j, l})^{1- \mathbf{A} _{i, j, l}} \\
    & = \prod_{l = 1}^{L} \prod_{ 1 \leq i \leq j \leq n} \big(X_i^{\top} W_{(l)} X_j\big)^{\mathbf{A} _{i, j, l}} \big(1 - X_i^{\top} W_{(l)} X_j\big)^{1 - \mathbf{A} _{i, j, l}}.
\end{align*}
We write $\mathbf{A} \sim \mathrm{MRDPG}(\{X_i\}_{i=1}^{n}, \{W_{(l)}\}_{l\in [L]})$ and denote the probability tensor by $\mathbf{P}  \in \mathbb{R}^{n \times n \times L}$. 
\end{definition}

We now extend this static model to a dynamic setting and introduce a change point framework.

\begin{definition}[Dynamic multilayer random dot product graphs, D-MRDPGs]\label{def-umrdpg-f-dynamic}
Let $\{X_i\}_{i=1}^{n} \subset \mathbb{R}^d$ be latent positions and $\{W_{(l)}(t)\}_{l\in[L], t\in [T]} \subset \mathbb{R}^{d\times d}$ be a weight matrix sequence.  A sequence of mutually independent adjacency  tensors  $\{\mathbf{A}(t)\}_{t \in [T]}$ follows the dynamic MRDPGs if 
$ \mathbf{A}(t) \sim \mathrm{MRDPG}( \{X_i\}_{i=1}^{n}, \{W_{(l)}(t)\}_{l \in [L]})$ for $ t \in [T]$. We write 
\[
\{\mathbf{A}(t)\}_{t=1}^T \sim \mathrm{D}\mbox{-}\mathrm{MRDPGs}(\{X_i\}_{i=1}^{n},   \{ \{W_{(l)}(t)\}_{l\in [L]} \}_{t=1}^T),
\]
and write $\{\mathbf{P}(t)\}_{t=1}^T$ as the corresponding  sequence of probability tensors. 
\end{definition}

\begin{model}\label{model-1}
   Let $\{\mathbf{A}(t)\}_{t \in [T]} \subset \{0,1\}^{n \times n \times L}$ follow $\mathrm{D}\mbox{-}\mathrm{MRDPGs}$  as in Definition~\ref{def-umrdpg-f-dynamic}.
\begin{enumerate}[ $(i)$]
    \item  Assume that there exist change points $0 = \eta_0 < \eta_1 < \dots < \eta_K < T = \eta_{K+1}$ such that for $t \in  [T-1]$,    $\{  W_{(l)}(t)\}_{l=1}^L \neq \{  W_{(l)}(t + 1 )\}_{l=1}^L$ if and only if $t \in\{ \eta_k\}_{k=1}^K$.
     Let $\Delta = \min_{k \in [K+1]} (\eta_{k} - \eta_{k-1})$ be the minimal spacing between two consecutive change points and assume $\Delta = \Theta(T)$.
     \item For each $k \in [K]$, define the $k$-th jump size and normalized jump tensor as 
     \[
     \kappa_k  = \| \mathbf{P}(\eta_{k+1}) - \mathbf{P}(\eta_{k}) \|_{\mathrm {F}} \quad \mbox{and}  \quad \boldsymbol{\mathbf{\Psi}}_k = \kappa_k^{-1} \{\mathbf{P}(\eta_{k+1}) - \mathbf{P}(\eta_{k}) \},
     \] 
     and let $\kappa = \min_{k \in [K] } \kappa_k$ denote the smallest jump magnitude. 
\end{enumerate}

\end{model}

Model~\ref{model-1} allows abrupt changes in the layers' connectivity patterns encoded in the weight matrices, while the latent positions remain unchanged over time.  This framework is motivated by a range of practical applications.
For example, in air transportation networks (see Section~\ref{sec:real}), nodes represent airports whose intrinsic attributes, such as geographical location and logistical capacity, are relatively stable.  In contrast, airline routing preferences reflected in the weight matrices may change due to factors such as seasonal demand fluctuations, route optimization strategies or policy interventions.

In Model~\ref{model-1}$(i)$, we assume that the minimal spacing $\Delta$ between successive change points is of the same order as the total time horizon $T$, which essentially bounds the number of change points $K$. This assumption can be relaxed, as discussed further in Section~\ref{sec:conclusion}. In Model~\ref{model-1}$(ii)$, the magnitude of a change is quantified via the Frobenius norm of the difference between expected adjacency tensors. This metric is sufficiently general to accommodate both dense changes - small but widespread deviations across many layers - and sparse changes - large deviations concentrated in a few layers. Throughout, we allow all model parameters, including the number of nodes $n$, number of layers $L$, latent dimension $d$, jump size $\kappa$ and minimal spacing $\Delta$ to diverge with $T$.

\subsection{Change point localization algorithm}\label{sec:algorithm}

In this section, we introduce a two-stage procedure for offline change point localization in dynamic multilayer networks, detailed in Algorithm~\ref{offline-algorithm}. 
\textbf{Stage I} generates a coarse set of change point candidates using seeded binary segmentation and CUSUM statistics. \textbf{Stage II} refines them via localized scan statistics constructed using a tensor-based low-rank estimation technique. This approach builds on \cite{wang2021optimal} for single-layer networks and extends it to the multilayer setting.

For \textbf{Stage I}, we begin by defining the seeded intervals \citep{kovacs2023seeded}  and CUSUM statistics \citep{page1954continuous} for dynamic multilayer networks in Definitions~\ref{def-seeded} and \ref{def-cusum-f}.

\begin{definition}[Seeded intervals]\label{def-seeded}
Let $J = \lceil C_{J} \log_2(T) \rceil$ for some sufficiently large absolute constant $C_{J} > 0$.
For each $j \in [J]$, define the collection of intervals  $\mathcal{J}_j$ as
\[
\mathcal{J}_j = \{ ( \lfloor (i - 1) T 2^{-j} \rfloor, \lceil (i - 1) T 2^{-j} + T 2^{-j+1} \rceil ] \colon  i \in  [2^j - 1] \}.
\]
The full collection of seeded intervals is defined as $\mathcal{J} = \bigcup_{j=1}^{J} \mathcal{J}_j$.
\end{definition}

\begin{definition}[CUSUM statistics]\label{def-cusum-f}
Given a tensor sequence $\{\mathbf{B}(t)\}_{t \in [T]}$ and any $0 \leq s < t < e \leq T$, define the CUSUM statistics as
\begin{equation}\label{def-omega}
    \widetilde{\mathbf{B}}^{s,e}(t) = \sum_{u=s+1}^{e}  \omega_{s, e}^t(u)  \mathbf{B}(u),  \quad \mbox{where} \quad \omega_{s, e}^t(u) = \begin{cases}
 	  \sqrt{\frac{e-t}{(e-s)(t-s)}},   \quad & \mbox{for }u 
      \in [t]\backslash[s],\\
  - \sqrt{ \frac{t-s}{(e-s)(e-t)}},   \quad & \mbox{for }u
  \in [e] \backslash [t].
     \end{cases}  
\end{equation}

\end{definition}

\textbf{Stage I} implements a modified version of seeded binary segmentation (SBS), a computationally efficient algorithm introduced by \cite{kovacs2023seeded}. SBS leverages seeded intervals to construct a multiscale collection of candidate regions for detecting multiple change points. Within each interval, the algorithm computes CUSUM statistics and retains time points where the statistic is maximized and exceeds a predefined threshold, as preliminary change point estimators

We next define the refined scan statistics used in \textbf{Stage II}, based on tensor heteroskedastic principal component analysis (TH-PCA), a low-rank tensor estimation method proposed by \cite{han2022optimal} and detailed in Algorithm~\ref{thpca} in Appendix~\ref{section:add-algorithm}.

\begin{definition}[Refined scan statistics]\label{def-cusum-new} 
Let $\{\mathbf{A}'(t)\}_{t \in [T]}$ and $\{\mathbf{B}'(t)\}_{t \in [T]}$ be independent sequences generated according to Definition~\ref{def-umrdpg-f-dynamic}. Given $\{(b_k, s_k, e_k)\}_{k=1}^{\widetilde{K}}$,   for any  $k \in [\widetilde{K}]$ and $t \in (s_k, e_k)$, we define the refined scan statistic as
\[
    \widehat{D}^{s_k, e_k}_{b_k}(t) =  \big \vert 
 \big\langle \widehat{\mathbf{P}}^{s_k,e_k}(b_k) /  \|\widehat{\mathbf{P}}^{s_k,e_k}(b_k) \|_{\mathrm{F}}, \widetilde{\mathbf{A}'}^{s_k,e_k} (t) \big\rangle  \big \vert, 
\]
where 
\[
\widehat{\mathbf{P}}^{s_k, e_k}(b_k) = \mathrm{TH}\text{-}\mathrm{PCA}\bigg(\widetilde{\mathbf{B'}}^{s_k,e_k}(b_k), (d, d, m^{s_k, e_k}_{b_k}),  \sqrt{\frac{(e_k-b_k) (b_k-s_k)}{e_k-s_k}},    \sqrt{\frac{(e_k-b_k) (b_k-s_k)}{e_k-s_k}} \bigg)
\]
with $\mathrm{TH}\mbox{-}\mathrm{PCA}$ detailed in Algorithm~\ref{thpca}, $\widetilde{\mathbf{B'}}^{\cdot, \cdot} (\cdot)$ defined in Definition \ref{def-cusum-f} and $m^{s, e}_{b_k}$  defined in Assumption~\ref{ass_X_f_Q_f}$(ii)$.
\end{definition}

\textbf{Stage II} refines each preliminary change point estimate from \textbf{Stage I} by locating the time point that maximizes a refined scan statistic within a local window around the initial estimate.  This step leverages the TH-PCA procedure with an additional truncation step (see Algorithm~\ref{thpca}) to more accurately estimate the local expected CUSUM adjacency tensors, leading to provably improved localization accuracy.

The assumption of mutual independence among all four sequences in Algorithm~\ref{offline-algorithm} is imposed for theoretical convenience. In practice (and in our numerical experiments in Section~\ref{sec:experiment}), \textbf{Stage I} and \textbf{Stage II} are implemented using the same two split tensor sequences.

\begin{algorithm}[t] 
\caption{Two-stage change point localization for D-MRDPGs} \label{offline-algorithm}
\begin{algorithmic}
\INPUT{Mutually independent sequences $\{\mathbf{A}(t)\}_{t \in [T]}, $ $\{\mathbf{A}'(t)\}_{t \in [T]}, \{\mathbf{B}(t)\}_{t \in [T]},  \{\mathbf{B}'(t)\}_{t \in [T]}  \subset \{ 0, 1\}^{n \times n \times L}$, threshold $\tau \in \mathbb{R}^+$, collection of seeded intervals $\mathcal{J}$ }
\Initialise{$s \leftarrow 0$, $e \leftarrow T$, $\widetilde{\mathcal{C}} \leftarrow \emptyset$}
\vspace{1mm}
\Statex{\textbf{Stage I:} Seeded Binary Segmentation, $\mathrm{SBS}\big((s, e), \tau, \mathcal{J} \big)$}
   \For{$\mathcal{I} = (\alpha', \beta'] \in \mathcal{J}$}
\If{$\mathcal{I} = (\alpha', \beta'] \subseteq (s, e]$
}
\State{$(\alpha, \beta] = ( \lfloor \alpha'  + 64^{-1}(\beta' - \alpha') \rfloor, \lceil \beta' - 64^{-1}(\beta' - \alpha') \rceil ]$
}
\If{$\beta - \alpha \geq 2$}
\State{$b_{\mathcal{I}} \leftarrow \argmax_{ \alpha < t  < \beta }  
     \big \vert \big\langle \widetilde{\mathbf{A}}^{\alpha,\beta}(t) , 
    \widetilde{\mathbf{B}}^{\alpha,\beta} (t) \rangle \big\vert, \, 
    a_{\mathcal{I}} \leftarrow    \big \vert \big\langle \widetilde{\mathbf{A}}^{\alpha,\beta}(b_{\mathcal{I}}) , 
    \widetilde{\mathbf{B}}^{\alpha,\beta} (b_{\mathcal{I}}) \rangle \big\vert$}
\Else{ $a_{\mathcal{I}}  \leftarrow -1$ }
\EndIf
\Else{ $a_{\mathcal{I}}  \leftarrow -1$ }
\EndIf
\EndFor
\State{$\mathcal{I}^* \leftarrow \argmax_{\mathcal{I} \in \mathcal{J}} a_{\mathcal{I}}$}
\If{$a_{\mathcal{I}^*} > \tau $}
\State{$\widetilde{\mathcal{C}} \leftarrow\widetilde{\mathcal{C}} \cup \{ b_{\mathcal{I}^*}\} $, $\mathrm{SBS}\big((s, b_{\mathcal{I}^*}), \tau, \mathcal{J} \big)$, $\mathrm{SBS} \big((b_{\mathcal{I}^*}, e), \tau, \mathcal{J} \big)$}
\EndIf
\vspace{1.5mm}
\Statex{\textbf{Stage II:} Local Refinement, $\mathrm{LR} ( \hspace{0.3mm} \widetilde{\mathcal{C}} \hspace{0.3mm} )$} 
\State{$\{b_k\}_{k=1}^{\widetilde{K}} \leftarrow \widetilde{\mathcal{C}}$ with $0 = b_0 < b_1 < \cdots  <b_{\widetilde{K}} < b_{\widetilde{K}+1} = T$}
\For{$k = 1$ to $\widetilde{K}$}
\State{$(s_k, e_k] \leftarrow \big( \lfloor (b_{k-1} + b_k)/2 \rfloor, \lceil (b_k + b_{k+1})/2 \rceil \big]$}
\State{$\widetilde{\eta}_k \leftarrow \argmax_{s_k < t < e_k} \widehat{D}_{b_k}^{s_k,e_k}(t)$  \Comment{See Definition~\ref{def-cusum-new}}}
    \EndFor
\vspace{2mm}
\OUTPUT{$\{\widetilde{\eta}_k\}_{k=1}^{\widetilde{K}}$}
\end{algorithmic}
\end{algorithm}

\subsection{Theoretical guarantees}\label{sec:theory}

This section establishes the theoretical guarantees of the proposed two-stage change point localization procedure (Algorithm~\ref{offline-algorithm}).  
We begin by justifying the use of low-rank tensor estimation via TH-PCA (Algorithm \ref{thpca}) in \textbf{Stage II} through an analysis of the expected CUSUM-transformed and average adjacency tensors. While the expected averaged adjacency tensors introduced below are not used in this section, they are essential for deriving the limiting distributions in Section~\ref{sec:limiting_distributions}. 

For any $ 0 \leq s < t< e \leq T$, define the expected CUSUM-transformed and average adjacency tensors as
\begin{equation}\label{def-average}
\widetilde{\mathbf{P}}^{s, e }(t) = \E \big\{ \widetilde{\mathbf{B}}^{s, e }(t) \big\} \quad \mbox{and} \quad 
{\mathbf{P}}^{s, e } = \E \big\{  \mathbf{B}^{s, e} \big\}, \quad \mbox{where} \quad   \mathbf{B}^{s, e} = (e-s)^{-1} \sum_{t=s+1}^e \mathbf{B}(t),
\end{equation}
and $\widetilde{\mathbf{B}}^{\cdot, \cdot} (\cdot)$ is defined in Definition \ref{def-cusum-f}.
Both tensors admit Tucker representations of the form
$
\widetilde{\mathbf{P}}^{s, e }(t) = \mathbf{S} \times_1 X \times_2 X \times_3 \widetilde{Q}^{s, e}(t),$ and ${\mathbf{P}}^{s, e } = \mathbf{S} \times_1 X \times_2 X \times_3 Q^{s, e}$,
where $X = (X_1, \ldots, X_{n})^{\top} \in \mathbb{R}^{n \times d}$  and 
$\mathbf{S} \in \R^{d \times d \times d^2}$ with $\mathbf{S}_{i, j, l} = \mathbbm{1}\{l = (i-1)d + j\}$.
The matrices $\widetilde{Q}^{s,e}(t)$ and $Q^{s, e}$ are given by 
\begin{equation}\label{def-tilde-Q}
\widetilde{Q}^{s,e}(t) = \sum_{u=s+1}^{e}   \omega_{s, e}^t(u) Q (u), \quad  Q^{s, e}  = (e-s)^{-1}\sum_{t=s+1}^e   Q(t),
\end{equation}
where $\omega_{s, e}^t(u)$ is define in \eqref{def-omega} and $ Q(u) \in \R^{L \times d^2} $ with rows
\begin{equation}\label{def-Q}
\big( Q(u) \big)_l = \big( (W_{(l)}(u))_1  \cdots  (W_{(l)}(u))_d \big), \quad l \in [L].
\end{equation}
To establish the low-rank structure of $\widetilde{\mathbf{P}}^{s, e}(t)$ and $\mathbf{P}^{s, e}$ (in terms of Tucker ranks, see Section~\ref{sec:notation}), and to state theoretical guarantees for Algorithm~\ref{offline-algorithm}, we state some necessary assumptions below.  

\begin{assumption}\label{ass_X_f_Q_f}
Consider $ \mathrm{D}\mbox{-}\mathrm{MRDPGs}(\{X_i\}_{i=1}^{n}, \{ \{W_{(l)}(t)\}_{l\in [L]} \}_{t=1}^T)$ from  Definition~\ref{def-umrdpg-f-dynamic}. 
\begin{enumerate}[$(i)$]
    \item  Let $X = (X_1, \ldots, X_{n})^{\top} \in \mathbb{R}^{n \times d}$. Assume that  $
\mathrm{rank}(X) = d$, $\sigma_1(X)/\sigma_d(X) \leq C_{\sigma}$ and $\sigma_{d}(X) \geq C_{\mathrm{gap}} \sqrt{n}$ with  absolute constants $C_{\sigma}, C_{\mathrm{gap}} >0$.
\item For any $0 \leq s < t < e \leq T$, let $\widetilde{Q}^{s, e}(t) \in \mathbb{R}^{L \times d^2}$ be defined in \eqref{def-tilde-Q}. Denote $m^{s, e}_t = \mathrm{rank}(\widetilde{Q}^{s, e}(t))$.  Assume that
$\sigma_1\big(\widetilde{Q}^{s, e}(t) \big)/\sigma_{m^{s, e}_t} \big( \widetilde{Q}^{s, e}(t) \big) \leq C_{\sigma}$ and $\sigma_{m^{s, e}_t}\big( \widetilde{Q}^{s, e}(t) \big) \geq C_{\mathrm{gap}}$ with  absolute constants $C_{\mathrm{gap}}, C_{\sigma}>0$.
\item  For any $0 \leq s <  e \leq T$, let $Q^{s, e} \in \mathbb{R}^{L \times d^2}$ be defined in \eqref{def-tilde-Q}. Denote $m^{s, e}= \mathrm{rank}(Q^{s, e})$. Assume that 
$\sigma_1\big(Q^{s, e} \big)/\sigma_{m^{s, e}} \big( Q^{s, e} \big) \leq C_{\sigma}$ and $\sigma_{m^{s, e}}\big( Q^{s, e} \big) \geq C_{\mathrm{gap}}$
with  absolute constants $C_{\mathrm{gap}}, C_{\sigma}>0$. 
\end{enumerate}

\end{assumption}

Assumption~\ref{ass_X_f_Q_f}$(i)$ imposes a full-rank condition on the latent position matrix $X$, requiring its smallest singular value to be at least of order $\sqrt{n}$, with all singular values of the same order. Since $X$ represents latent positions rather than observed data, the full-rankness of $X$ can be interpreted as a condition on the knowledge of the intrinsic dimension $d$, ensuring that the input dimension to TH-PCA is no smaller than the true latent dimension $d$. Further discussion on rank selection, see \cite{wang2023multilayer}.

Assumptions~\ref{ass_X_f_Q_f}$(ii)$ and $(iii)$ -  with $(iii)$ for Section~\ref{sec:limiting_distributions} -  impose low-rank conditions on the CUSUM and averaged forms of $\{Q(t)\}_{t=1}^T$,  where each $Q(t)$ comprises the weight matrices $\{W_{(l)}(t)\}_{l=1}^L$.
In Appendix~\ref{proof-theorem-2}, we show that, with high probability,  each working interval $(s_k,e_k]$ or $(\tilde{s}_k,\tilde{e}_{k}]$ contains exactly one change point $\eta_k$, implying  $ \max\{m^{s_k, e_k}_t, m^{\tilde{s}_k,\tilde{e}_k}\} \leq \mathrm{rank}(Q(\eta_k)) + \mathrm{rank}(Q(\eta_{k+1}))$ for $t \in (s_k, e_k)$.
This implicitly constraints the ranks of  $\{Q(\eta_k)\}_{k=1}^{K+1}$.
While this low-rank structure may not directly or transparently reflect the explicit model structure, such ambiguity is common in tensor-based models \citep[e.g.][]{jing2021community}.

The signal-to-noise ratio (SNR) is commonly used to characterize the inherent difficulty of change point detection. We now state the SNR condition required for our theoretical guarantees. 

\begin{assumption}[Signal-to-noise ratio condition]\label{ass-SNR}
Assume that there exists a large enough absolute constant $C_{\mathrm {SNR}}>0$ such that
\[
 \kappa \sqrt{\Delta}  \geq C_{\mathrm{SNR}} \log(T) \sqrt{n L^{1/2} + d^2m_{\max} +  nd+Lm_{\max} },
\]
where $m_{\max} = \max_{k \in [K+1]} \mathrm{rank}\big(Q(\eta_k)\big)$ with $Q(\eta_k)$ defined in \eqref{def-Q}.
\end{assumption} 

We compare Assumption~\ref{ass-SNR} to its counterpart in \cite{wang2021optimal}. When the sparsity parameter $\rho = 1$, their SNR condition (Assumption 3) becomes $\kappa \sqrt{\Delta} \geq C_{\mathrm{SNR}} \log^{1+\xi}(T) \sqrt{nd} $ for some $\xi > 0$. Our assumption is consistent with this and extends it to the multilayer setting by accounting for the additional complexity from multilayers and the low-rank structure of layers' weight matrices.

\begin{theorem}\label{theorem-2}
Let $\{ \widetilde{\eta}_k \}_{k=1}^{\widetilde{K}}$ be the output of Algorithm~\ref{offline-algorithm}.
Suppose the mutually independent adjacency tensor sequences $\{\mathbf{A}(t)\}_{t \in [T]}, \{\mathbf{A}'(t)\}_{t \in [T]},  \{\mathbf{B}(t)\}_{t \in [T]}, \{\mathbf{B}'(t)\}_{t \in [T]} \subset \{0, 1\}^{n \times n \times L}$ are generated according to Definition~\ref{def-umrdpg-f-dynamic} and satisfy Model~\ref{model-1}, Assumptions~\ref{ass_X_f_Q_f}$(i)$, $(ii)$ and \ref{ass-SNR}. 
Assume the threshold $\tau$ is chosen such that $c_{\tau,1}   n\sqrt{L} \log^{3/2}(T) <   \tau  < c_{\tau,2} \kappa^2 \Delta$,
where $c_{\tau,1}, c_{\tau,2} > 0$ are sufficiently large and small absolute constants, respectively. We have that 
\[
\mathbb{P} \Big\{ \widetilde{K} = K \mbox{ and } \vert \widetilde{\eta}_k - \eta_k \vert  \leq \epsilon_k, \, \forall k \in [K] \Big\} \geq 1 - CT^{-c}, \quad \mbox{where }  \epsilon_k = C_{\epsilon}      \frac{\log (T)}{\kappa_k^2}, 
\]
and $C_{\epsilon}, C, c >0$ are absolute constants. 
\end{theorem}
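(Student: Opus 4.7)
The plan is to condition on a high-probability event $\mathcal{G}$ under which all CUSUM inner products and tensor estimates are simultaneously well-controlled, and then argue deterministically on $\mathcal{G}$. The event has two ingredients. First, for every $(s,e] \in \mathcal{J}$ and $t \in (s,e)$, $\bigl|\langle \widetilde{\mathbf{A}}^{s,e}(t), \widetilde{\mathbf{B}}^{s,e}(t)\rangle - \|\widetilde{\mathbf{P}}^{s,e}(t)\|_{\mathrm{F}}^2\bigr| \leq c_1 n\sqrt{L}\log^{3/2}(T)$. This is obtained by decomposing the inner product into two noise-signal cross terms and one noise-noise term (using independence of $\mathbf{A}$ and $\mathbf{B}$), applying Bernstein-type inequalities to each piece with the CUSUM weights $\omega^t_{s,e}$ controlling the variance, and taking a union bound over $|\mathcal{J}| = O(T)$ intervals and $T$ candidate times. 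Second, for each Stage II window $(s_k,e_k]$, the TH-PCA output $\widehat{\mathbf{P}}^{s_k,e_k}(b_k)$ is close in Frobenius norm to the target $\widetilde{\mathbf{P}}^{s_k,e_k}(b_k)$ at a rate governed by $\sqrt{nd + d^2 m + Lm}$, which follows from the guarantees of \cite{han2022optimal} once the Tucker ranks $(d, d, m^{s_k,e_k}_{b_k})$ of the target CUSUM tensor are verified from Assumption~\ref{ass_X_f_Q_f}$(i)$--$(ii)$.

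For Stage I I would proceed by induction on the SBS recursion tree. In any invoked subinterval $(s,e]$, either the interval contains no change point with sufficient associated signal, in which case $\max_\mathcal{I} a_\mathcal{I} \leq \tau$ on $\mathcal{G}$ and the recursion halts without spurious detections; or, by the seeded-interval property of \cite{kovacs2023seeded} combined with $\Delta = \Theta(T)$, some $\mathcal{I}^* = (\alpha,\beta] \subset (s,e]$ isolates a single $\eta_k$ in its $64^{-1}$-trimmed central portion. In the latter case $\|\widetilde{\mathbf{P}}^{\alpha,\beta}(\eta_k)\|_{\mathrm{F}}^2 \gtrsim \kappa_k^2 \Delta$, which exceeds $\tau$ by the upper bound on $\tau$, triggering detection; combining the signal curvature $\|\widetilde{\mathbf{P}}^{\alpha,\beta}(\eta_k)\|_{\mathrm{F}}^2 - \|\widetilde{\mathbf{P}}^{\alpha,\beta}(t)\|_{\mathrm{F}}^2 \gtrsim \kappa_k^2 |t-\eta_k|$ with the deviation bound from $\mathcal{G}$ yields $|b_{\mathcal{I}^*} - \eta_k| \leq c_2\, n\sqrt{L}\log^{3/2}(T)/\kappa_k^2$. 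Standard SBS bookkeeping then establishes a bijection $b_k \leftrightarrow \eta_k$ at this coarse localization rate and in particular $\widetilde K = K$.

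Stage II sharpens the ambient-dimension factor in the coarse rate to a pure $\log(T)$. On the refinement window $(s_k,e_k]$, the Stage I rate together with $\Delta = \Theta(T)$ forces $\eta_k$ to be the only change point in $(s_k,e_k]$ and $\min\{b_k - s_k, e_k - b_k\} \asymp \Delta$; hence $\widetilde{\mathbf{P}}^{s_k,e_k}(b_k) = \lambda_k \boldsymbol{\Psi}_k$ with $\lambda_k \asymp \kappa_k\sqrt{\Delta}$, and by part (b) of $\mathcal{G}$ and a Wedin-type stability argument, the normalized direction $\widehat{\mathbf{v}}_k = \widehat{\mathbf{P}}^{s_k,e_k}(b_k)/\|\widehat{\mathbf{P}}^{s_k,e_k}(b_k)\|_{\mathrm{F}}$ agrees with $\pm\boldsymbol{\Psi}_k$ up to $o(1)$ error under Assumption~\ref{ass-SNR}. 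Writing $\widehat D^{s_k,e_k}_{b_k}(t) = |S(t) + N(t)|$ with $S(t) = \langle \widehat{\mathbf{v}}_k, \widetilde{\mathbf{P}}^{s_k,e_k}(t)\rangle$ and $N(t) = \langle \widehat{\mathbf{v}}_k, \widetilde{\mathbf{A}'}^{s_k,e_k}(t) - \widetilde{\mathbf{P}}^{s_k,e_k}(t)\rangle$, the argmax inequality $\widehat D^2(\widetilde\eta_k) \geq \widehat D^2(\eta_k)$ becomes a quadratic inequality in $|\widetilde\eta_k - \eta_k|$. The signal drop satisfies $S^2(\eta_k) - S^2(\widetilde\eta_k) \gtrsim \kappa_k^2 |\widetilde\eta_k - \eta_k|$ (from the piecewise-linear structure of $S$ around the peak), while the cross and noise-squared terms are controlled by a \emph{local} Bernstein bound on the increment $N(t) - N(\eta_k)$, whose variance is of order $|t-\eta_k|/\Delta$. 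Combining these pieces gives $\kappa_k^2 |\widetilde\eta_k - \eta_k| \lesssim \log(T)$, which is the claimed rate $C_\epsilon \log(T)/\kappa_k^2$.

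The main obstacle is the Stage II argument, specifically the interplay between the data-dependent direction $\widehat{\mathbf{v}}_k$ and the uniform-in-$t$ behaviour of $N(t)$; it is precisely to avoid a circular dependence here that Algorithm~\ref{offline-algorithm} uses four mutually independent tensor sequences, so that $\widehat{\mathbf{v}}_k$ (measurable with respect to $\{\mathbf{B}'(t)\}$) is independent of the fluctuations driven by $\{\mathbf{A}'(t)\}$. Secondary technical points are (i) achieving the sharp $\log^{3/2}(T)$ rather than $\log^2(T)$ factor in the Stage I concentration, which requires a truncated Bernstein bound tailored to the heavy tails of noise-noise terms, and (ii) verifying that the CUSUM tensors used throughout indeed have the low Tucker ranks asserted in Assumption~\ref{ass_X_f_Q_f}, which follows from the rank inflation bound $m^{s_k,e_k}_t \leq \mathrm{rank}(Q(\eta_k)) + \mathrm{rank}(Q(\eta_{k+1}))$ noted after that assumption.
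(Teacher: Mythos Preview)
Your overall two-stage strategy and the independence-conditioning device match the paper's proof closely, and your Stage~II argument is essentially the paper's (the paper reduces to the univariate series $y(t)=\langle\widehat{\mathbf v}_k,\mathbf A'(t)\rangle$ and invokes Lemma~12 of \cite{wang2017optimal}, which packages exactly your curvature-plus-increment-noise inequality). There is, however, a genuine gap in your Stage~I concentration event.

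The claimed bound $\bigl|\langle \widetilde{\mathbf A}^{s,e}(t),\widetilde{\mathbf B}^{s,e}(t)\rangle-\|\widetilde{\mathbf P}^{s,e}(t)\|_{\mathrm F}^2\bigr|\le c_1 n\sqrt L\,\log^{3/2}(T)$ is false as stated. In your own decomposition, each signal--noise cross term, e.g.\ $\langle\widetilde{\mathbf P}^{s,e}(t),\widetilde{\mathbf B}^{s,e}(t)-\widetilde{\mathbf P}^{s,e}(t)\rangle$, has variance of order $\|\widetilde{\mathbf P}^{s,e}(t)\|_{\mathrm F}^2$ (the CUSUM weights satisfy $\sum_u\omega^2=1$), so Bernstein gives a deviation of order $\|\widetilde{\mathbf P}^{s,e}(t)\|_{\mathrm F}\sqrt{\log T}$, which can far exceed $n\sqrt L\log^{3/2}(T)$ since $\|\widetilde{\mathbf P}^{s,e}(t)\|_{\mathrm F}$ can be as large as $\kappa_k\sqrt\Delta$. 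The correct event (the paper's $\mathcal A$, from Lemma~S.4 of \cite{wang2021optimal}) carries both a signal-independent piece $n\sqrt L\log^{3/2}(T)$ and a signal-dependent piece $C\log(T)\|\widetilde{\mathbf P}^{s,e}(t)\|_{\mathrm F}$. With the corrected bound, your curvature argument yields the Stage~I rate $\tilde\epsilon\asymp\log(T)\bigl\{n\sqrt L\log^{1/2}(T)/\kappa^2+\sqrt\Delta/\kappa\bigr\}$, not just the first term; the extra $\sqrt\Delta\log(T)/\kappa$ is still $\le\Delta/6$ under Assumption~\ref{ass-SNR}, so Stage~II proceeds unchanged and the final rate is unaffected, but the argument as written does not go through. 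A related subtlety you gloss over: the interval $\mathcal I^*$ selected by SBS is the argmax over \emph{all} seeded intervals and need not be the well-isolated one you exhibit; the paper handles this via Lemma~S.5 of \cite{wang2021optimal}, which controls localization even when $(\alpha^*,\beta^*)$ may contain several change points.

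A smaller omission in Stage~II: your Bernstein bound on $N(t)$ requires control of $\|\widehat{\mathbf v}_k\|_\infty$, and it is precisely the explicit truncation step in $\mathrm{TH}\text{-}\mathrm{PCA}$ (Algorithm~\ref{thpca}) at level $\sqrt{(e_k-b_k)(b_k-s_k)/(e_k-s_k)}$ that forces $\|\widehat{\mathbf P}^{s_k,e_k}(b_k)\|_\infty/\|\widehat{\mathbf P}^{s_k,e_k}(b_k)\|_{\mathrm F}\lesssim (\kappa_k\sqrt\Delta)^{-1}$ and hence delivers the $\sqrt{\log T}$ noise bound; this should be made explicit.
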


Theorem~\ref{theorem-2}  implies that, with probability tending to $1$ as $T \to \infty$, 
the estimated number of change points satisfies $\widetilde{K} = K$ and the relative localization error vanishes:
\[
\max_{k \in [K]} \Delta^{-1} |\widetilde{\eta}_k - \eta_k| 
\leq C_\epsilon  \Delta^{-1} \kappa^{-2} \log(T)  \to 0
\]
by Assumption~\ref{ass-SNR}. This establishes the consistency of Algorithm~\ref{offline-algorithm}  in both detecting and localizing all change points.
The localization error rate matches the minimax-optimal rate up to a logarithmic factor established for single-layer networks in \cite{wang2021optimal}, showing that our method generalizes these guarantees to the more complex multilayer regime without sacrificing localization accuracy. Moreover, compared to the online framework for D-MRDPGs studied in \cite{wang2023multilayer}, which yields a localization error rate of order $ \kappa^{-2} (d^2m_{\max} + nd + Lm_{\max}) \log(\Delta/\alpha)$, where $\alpha$ controls the Type-I error rate, our two-stage procedure achieves a significantly sharper rate in the offline setting.

\section{Limiting distributions}\label{sec:limiting_distributions}

Inference on change points is generally more challenging than establishing high-probability bounds on localization errors. To address this, we introduce a final refinement step, inspired by approaches such as those in \cite{madrid2023change, xue2024change, xu2024change}.

Let $\{\mathbf{A}(t)\}_{t \in [T]}$ and $ \{\mathbf{B}(t)\}_{t \in [T]} $ be independent samples as defined in Definition~\ref{def-umrdpg-f-dynamic}. Let $\{\widetilde{\eta}_k\}_{k=1}^{K}$ be the output of Algorithm~\ref{offline-algorithm} with $0 =\widetilde{\eta}_0 < \widetilde{\eta}_1 < \cdots < \widetilde{\eta}_{\widetilde{K}} <  \widetilde{\eta}_{\widetilde{K}+1} = T $. For each $k \in [\widetilde{K}]$, define the final estimators as
\begin{equation}\label{def-estimator-final}
    \widehat{\eta}_k = \argmin_{\tilde{s}_k < t < \tilde{e}_k} \mathcal{Q}_k(t)  =   \argmin_{\tilde{s}_k < t < \tilde{e}_k}\sum_{u=\tilde{s}_k+1}^{t} \| \mathbf{A}(u) - \widehat{\mathbf{P}}^{\widetilde{\eta}_{k-1}, \widetilde{\eta}_{k}} \|_{\mathrm{F}}^2 + \sum_{u=t +1}^{\tilde{e}_k} \| \mathbf{A}(u) - \widehat{\mathbf{P}}^{\widetilde{\eta}_{k}, \widetilde{\eta}_{k+1}} \|_{\mathrm{F}}^2 ,
\end{equation}
where $(\tilde{s}_k,  \tilde{e}_k ] = ((\widetilde{\eta}_{k-1} + \widetilde{\eta}_k)/2,  (\widetilde{\eta}_k + \widetilde{\eta}_{k+1})/2]$ and
\begin{equation}\label{def-estimator-final-P}
\widehat{\mathbf{P}}^{\widetilde{\eta}_{k-1}, \widetilde{\eta}_{k}} = \mathrm{TH}\text{-}\mathrm{PCA} ( \mathbf{B}^{\widetilde{\eta}_{k-1}, \widetilde{\eta}_{k}},  (d, d,  m^{\widetilde{\eta}_{k-1}, \widetilde{\eta}_{k}}), 1, 0 ),
\end{equation}
with $\mathrm{TH}\mbox{-}\mathrm{PCA}$ detailed in Algorithm~\ref{thpca},  $\mathbf{B}^{\cdot , \cdot}$ defined in \eqref{def-average} and $m_{\widetilde{\eta}^{k-1}, \widetilde{\eta}_{k}}$ defined in Assumption~\ref{ass_X_f_Q_f}$(iii)$.

\begin{theorem}\label{theorem-inference}

Let  $\{\mathbf{A}(t)\}_{t \in [T]},\{\mathbf{A}'(t)\}_{t \in [T]},  \{\mathbf{B}(t)\}_{t \in [T]},  \{\mathbf{B}'(t)\}_{t \in [T]} \subset \{0, 1\}^{n \times n \times L} $ be mutually independent adjacency tensor sequences generated according to Definition~\ref{def-umrdpg-f-dynamic} and satisfying Model~\ref{model-1}, Assumptions~\ref{ass_X_f_Q_f} and \ref{ass-SNR}.  Let $\{ \widehat{\eta}_k \}_{k=1}^{\widetilde{K}}$ be defined in \eqref{def-estimator-final} with  $\{ \widetilde{\eta}_k \}_{k=1}^{\widetilde{K}}$ obtained from Algorithm~\ref{offline-algorithm}, using a threshold $\tau$ satisfying condition stated in Theorem~\ref{theorem-2}.

For $ k \in [K] $, if $ \kappa_k \to 0 $, as $ T \to \infty $, then when $ T \to \infty $, we have $ \vert \widehat{\eta}_k - \eta_k\vert  = O_p(\kappa_k^{-2})$ and
    \[
    \kappa_k^2 (\widehat{\eta}_k - \eta_k) \xrightarrow{\mathcal{D}} \argmin_{r \in \mathbb{R}} \mathcal{P}'_k(r), \quad  \mbox{where} \quad   \mathcal{P}'_k(r) =
    \begin{cases}
        - r  + 2 \sigma_{k, k} \mathbb{B}_1(-r), & r < 0, \\
        0, & r = 0, \\
        r  + 2 \sigma_{k, k+1} \mathbb{B}_2(r), & r > 0,
    \end{cases}
    \]
    for $ r \in \mathbb{Z} $. Here, $\mathbb{B}_1(r) $ and $\mathbb{B}_2(r) $ are independent standard Brownian motions, and for any $k' \in \{k , k+1\}$,  $\sigma_{k, k'}^2 =  \operatorname{Var} \big( \langle \boldsymbol{\mathbf{\Psi}}_k , \mathbf{E}_{k'}(1) \rangle \big)$, where $\boldsymbol{\mathbf{\Psi}}_k$ is the normalized jump tensor (Model~\ref{model-1}), and 
     $\mathbf{E}_{k'}(t) = \mathbf{A}_{k'}(t) - \mathbf{P}(\eta_{k'})$ with $\{\mathbf{A}_{k'}(t)\}_{t \in \mathbb{Z}} \overset{\mathrm{i.i.d.}}{\sim}  \mathrm{MRDPG}(\{X_i\}_{i=1}^{n}, \{W_{(l)}(\eta_{k'})\}_{l\in [L]})$.
\end{theorem}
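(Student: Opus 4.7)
The plan is to analyze the recentered objective $M_k(r) = \mathcal{Q}_k(\eta_k + r) - \mathcal{Q}_k(\eta_k)$, so that $\widehat{\eta}_k - \eta_k = \argmin_{r} M_k(r)$ over $r \in (\tilde{s}_k - \eta_k, \tilde{e}_k - \eta_k) \cap \mathbb{Z}$. First, Theorem~\ref{theorem-2} guarantees, with probability tending to one, that each $(\tilde{s}_k, \tilde{e}_k]$ contains exactly $\eta_k$ and $|\widetilde{\eta}_j - \eta_j| = O(\log(T)/\kappa_j^2) = o(\Delta)$ for every $j$, so that $(\widetilde{\eta}_{k-1}, \widetilde{\eta}_{k}]$ is dominated by $\mathbf{P}(\eta_k)$ and $(\widetilde{\eta}_k, \widetilde{\eta}_{k+1}]$ by $\mathbf{P}(\eta_{k+1})$. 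Using independence of $\{\mathbf{B}(t)\}$ from $\{\mathbf{A}(t)\}$, I would condition on $\mathbf{B}$ and invoke Assumption~\ref{ass_X_f_Q_f}$(iii)$ together with the TH-PCA bounds of \cite{han2022optimal} to obtain $\|\widehat{\mathbf{P}}^{\widetilde{\eta}_{k-1}, \widetilde{\eta}_{k}} - \mathbf{P}(\eta_k)\|_{\mathrm{F}}$, $\|\widehat{\mathbf{P}}^{\widetilde{\eta}_{k}, \widetilde{\eta}_{k+1}} - \mathbf{P}(\eta_{k+1})\|_{\mathrm{F}} = o(\kappa_k)$ with high probability, using $\Delta = \Theta(T)$ and Assumption~\ref{ass-SNR}.

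Next, for $r > 0$, expanding squared Frobenius norms and substituting $\mathbf{A}(u) = \mathbf{P}(\eta_{k+1}) + \mathbf{E}_{k+1}(u)$ for $u \in (\eta_k, \eta_{k+1}]$ yields
\begin{equation*}
M_k(\eta_k + r) = r\, D_k - 2 \sum_{u=\eta_k+1}^{\eta_k + r} \langle \mathbf{E}_{k+1}(u),\, \widehat{\mathbf{P}}_{1,k} - \widehat{\mathbf{P}}_{2,k}\rangle,
\end{equation*}
where $\widehat{\mathbf{P}}_{1,k} = \widehat{\mathbf{P}}^{\widetilde{\eta}_{k-1}, \widetilde{\eta}_{k}}$, $\widehat{\mathbf{P}}_{2,k} = \widehat{\mathbf{P}}^{\widetilde{\eta}_{k}, \widetilde{\eta}_{k+1}}$, and $D_k = -2\langle \mathbf{P}(\eta_{k+1}), \widehat{\mathbf{P}}_{1,k} - \widehat{\mathbf{P}}_{2,k}\rangle + \|\widehat{\mathbf{P}}_{1,k}\|_{\mathrm{F}}^2 - \|\widehat{\mathbf{P}}_{2,k}\|_{\mathrm{F}}^2$. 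Replacing the plug-ins by the oracle values collapses $D_k$ to $\|\mathbf{P}(\eta_{k+1}) - \mathbf{P}(\eta_k)\|_{\mathrm{F}}^2 = \kappa_k^2$ and the stochastic piece to $2\kappa_k \sum_{u=\eta_k+1}^{\eta_k+r}\langle \boldsymbol{\mathbf{\Psi}}_k, \mathbf{E}_{k+1}(u)\rangle$; the mirror-image expansion holds for $r < 0$ with $\mathbf{E}_k$. A standard peeling argument over dyadic bands then yields the rate $|\widehat{\eta}_k - \eta_k| = O_p(\kappa_k^{-2})$, since outside a window of order $\kappa_k^{-2}$ the drift $|r|\kappa_k^2$ dominates the noise standard deviation $O(\kappa_k\sqrt{|r|})$ uniformly with high probability.

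I would then restrict to this window and set $r = \lfloor \tilde r/\kappa_k^2\rfloor$: the drift becomes $|\tilde r|$ and, by Donsker's invariance principle applied to the i.i.d.\ sequence $\{\langle \boldsymbol{\mathbf{\Psi}}_k, \mathbf{E}_{k'}(u)\rangle\}_{u}$ of variance $\sigma_{k,k'}^2$, the rescaled noise converges weakly to $2\sigma_{k,k+1}\mathbb{B}_2(\tilde r)$ on the positive side and $2\sigma_{k,k}\mathbb{B}_1(-\tilde r)$ on the negative side, with independence across sides coming from disjoint time indices. Tightness is immediate for random walks with linear drift, and $\mathcal{P}'_k$ has an almost surely unique finite minimizer because the linear drift eventually dominates the Brownian fluctuations. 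Applying the argmin continuous mapping theorem then delivers $\kappa_k^2(\widehat{\eta}_k - \eta_k) \xrightarrow{\mathcal{D}} \argmin_{r} \mathcal{P}'_k(r)$.

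The hard part will be the uniform control of the plug-in errors on the localized window. Writing the noise as the oracle term $2\kappa_k \sum \langle \boldsymbol{\mathbf{\Psi}}_k, \mathbf{E}_{k+1}(u)\rangle$ plus a perturbation, the perturbation has variance of order $|r|\,\|(\widehat{\mathbf{P}}_{1,k} - \widehat{\mathbf{P}}_{2,k}) - (\mathbf{P}(\eta_k) - \mathbf{P}(\eta_{k+1}))\|_{\mathrm{F}}^2$; forcing this to be $o_p(\kappa_k^2 |r|)$ uniformly over $|r| \leq C\kappa_k^{-2}$ requires plug-in Frobenius errors of size $o(\kappa_k)$, which leans crucially on the sharp TH-PCA rate together with Assumption~\ref{ass-SNR}. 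A related second-order expansion is needed to show $|r|\,|D_k - \kappa_k^2| = o_p(1)$ uniformly on the same window, after which Slutsky's theorem closes the gap between the plug-in process and its oracle version.
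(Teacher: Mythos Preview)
Your proposal is correct and follows essentially the same route as the paper: condition on the good event from Theorem~\ref{theorem-2}, use the TH-PCA rate plus the bias bound $\|\mathbf{P}^{\widetilde{\eta}_{k-1},\widetilde{\eta}_k}-\mathbf{P}(\eta_k)\|_{\mathrm{F}}=o(\kappa_k)$ to show the recentered objective differs from its oracle version $\widetilde{\mathcal{Q}}_k$ by $o_p(|r|\kappa_k^2+\sqrt{|r|}\,\kappa_k)$, deduce $|\widehat{\eta}_k-\eta_k|=O_p(\kappa_k^{-2})$, and finish with Donsker plus the argmin continuous mapping theorem. The only cosmetic differences are that the paper telescopes through a five-term decomposition $I\!-\!II\!+\!III\!-\!IV\!+\!V$ (separating the TH-PCA error from the averaging bias) rather than your direct drift--noise split, and obtains the rate by plugging the random minimizer into the optimality inequality rather than by dyadic peeling.
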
 

Theorem~\ref{theorem-inference} establishes the localization error bounds and limiting distributions for the refined change point estimators in the vanishing jump regime ($ \kappa_k \to 0 $). 
In particular, it shows the uniform tightness
$
\kappa_k^2 |\widehat{\eta}_k - \eta_k| = O_p(1),
$
which improves upon Theorem~\ref{theorem-2} by a logarithmic factor and guarantees the existence of limiting distributions.
To the best of our knowledge, Theorem~\ref{theorem-inference} is the first to derive limiting distributions for change point estimators in network data.  These limiting distributions are associated with a two-sided Brownian motion. Results for the non-vanishing jump regime ($\kappa_k \to \rho_k > 0$) are deferred to Appendix~\ref{sec:add-theory}.

\subsection{Confidence interval construction}\label{sec:CI}

Using Theorem~\ref{theorem-inference}, we construct $(1-\alpha)$ confidence intervals for $\eta_k$, $k \in [K]$, in the vanishing regime, for a user-specified confidence level $\alpha \in (0, 1)$.

\noindent \textbf{Step 1: Estimate the jump size and normalized jump tensor.}
Compute the estimated jump size 
\[
\hat{\kappa}_k = \| \widehat{\mathbf{P}}^{\widetilde{\eta}_{k}, \widetilde{\eta}_{k+1}} - \widehat{\mathbf{P}}^{\widetilde{\eta}_{k-1}, \widetilde{\eta}_{k}} \|_{\mathrm{F}},
\]
and the estimated normalized jump tensor 
\[
\widehat{\boldsymbol{\mathbf{\Psi}}}_k = \hat{\kappa}_k^{-1}  ( \widehat{\mathbf{P}}^{\widetilde{\eta}_{k}, \widetilde{\eta}_{k+1}} - \widehat{\mathbf{P}}^{\widetilde{\eta}_{k-1}, \widetilde{\eta}_{k}} ),\]
where $\widehat{\mathbf{P}}^{\cdot, \cdot}$ is defined in \eqref{def-estimator-final-P}.

\noindent \textbf{Step 2: Estimate the variances.}
For each $k' \in \{k, k+1\}$, compute
\[
\hat{\sigma}_{k, k'}^2 = \frac{1}{\widetilde{\eta}_{k'} - \widetilde{\eta}_{k'-1} - 1} \sum_{t=\widetilde{\eta}_{k'-1}+1}^{\widetilde{\eta}_{k'}} \big( \langle \boldsymbol{\widehat{\mathbf{\Psi}}}_{k} , \mathbf{A}(t) - \widehat{\mathbf{P}}^{\widetilde{\eta}_{k'-1}, \widetilde{\eta}_{k'}} \rangle \big)^2.
\]
\textbf{Step 3: Simulate limiting distributions.}
Let $B \in  \mathbb{N}^+$ and $M \in \R^+$. For each $b \in [B] $, let
\[
\hat{u}^{(b)}_k = \argmin_{r \in (-M, M)}   \widehat{\mathcal{P}}'_k(r),
\quad \mbox{where} \quad 
   \widehat{\mathcal{P}}'_k(r) =
    \begin{cases}
        - r  +   \frac{2 \hat{\sigma}_{k, k}}{\sqrt{T}} \sum_{i = \lceil Tr \rceil}^{-1} z_i^{(b)} , & r < 0, \\
        0, & r = 0, \\
        r   +  \frac{ 2 \hat{\sigma}_{k, k+1}}{\sqrt{T} }\sum_{i = 1}^{\lfloor Tr \rfloor} z_i^{(b)} , & r > 0,
    \end{cases}
\]
with independent standard Gaussian random variables $\{ z_i^{(b)} \}_{i = -\lfloor TM \rfloor}^{\lceil TM \rceil}$. 

\noindent \textbf{Step 4: Construct the confidence interval.} Let  $\hat{q}_{\alpha/2}, \hat{q}_{1 - \alpha/2}$ be empirical quantiles of $\{\hat{u}_k^{(b)}\}_{b=1}^B$. The $(1-\alpha)$ confidence interval for $\eta_k$ is given by
\[
\bigg[ \widehat{\eta}_k - \frac{\hat{q}_{1 - \alpha/2}}{\hat{\kappa}_k^2} \mathbbm{1}\{\hat{\kappa}_k \neq0 \}, \, \widehat{\eta}_k -  \frac{\hat{q}_{\alpha/2}}{\hat{\kappa}_k^2} \mathbbm{1}\{\hat{\kappa}_k \neq0 \} \bigg].
\]

The empirical performance of this procedure is evaluated in Section~\ref{sec:simulation}.

\section{Numerical experiments}\label{sec:experiment}

We evaluate the proposed method on synthetic and real data in Sections~\ref{sec:simulation} and \ref{sec:real}, respectively.

\subsection{Simulation studies}\label{sec:simulation}

\begin{table}
\caption{Means of evaluation metrics for networks simulated from Scenarios 1 and 2. }
\vspace{1em}
\label{tbl:s1s2}
\centering
\resizebox{\textwidth}{!}{ 
\begin{tabular}{ll llll  llll}
\toprule 
&  & \multicolumn{4}{c}{Scenario 1} & \multicolumn{4}{c}{Scenario 2} \\
$n$ & Method & $|\widehat{K}-K|\downarrow$ & $d(\widehat{\mathcal{C}}|\mathcal{C})\downarrow$ & $d(\mathcal{C}|\widehat{\mathcal{C}})\downarrow$ & $C(\mathcal{G},\mathcal{G'})\uparrow$ & $|\widehat{K}-K|\downarrow$ & $d(\widehat{\mathcal{C}}|\mathcal{C})\downarrow$ & $d(\mathcal{C}|\widehat{\mathcal{C}})\downarrow$ & $C(\mathcal{G},\mathcal{G'})\uparrow$\\ 
\midrule
\multirow{5}{2em}{$50$} 
& CPDmrdpg       & $0.01$ & $0.00$ & $0.42$ & $99.86\%$                & $0.00$ & $0.00$ & $0.00$ & $100\%$ \\
& gSeg (nets.)   & $1.09$ & $\text{Inf}$ & $\text{Inf}$ & $52.82\%$    & $1.60$ & $\text{Inf}$ & $\text{Inf}$ & $67.68\%$ \\
& kerSeg (nets.) & $0.10$ & $0.00$ & $3.12$ & $99.13\%$                & $0.15$ & $0.00$ & $1.53$ & $99.32\%$ \\
& gSeg (frob.)   & $0.52$ & $\text{Inf}$ & $\text{Inf}$ & $90.12\%$    & $0.23$ & $\text{Inf}$ & $\text{Inf}$ & $97.71\%$ \\
& kerSeg (frob.) & $0.26$ & $0.00$ & $5.76$ & $98.35\%$                & $0.35$ & $0.11$ & $3.43$ & $98.37\%$ \\
\midrule
\multirow{5}{2em}{$100$} 
& CPDmrdpg       & $0.00$ & $0.00$ & $0.00$ & $100\%$                  & $0.00$ & $0.00$ & $0.00$ & $100\%$ \\
& gSeg (nets.)   & $1.12$ & $\text{Inf}$ & $\text{Inf}$ & $52.62\%$    & $1.58$ & $\text{Inf}$ & $\text{Inf}$ & $69.24\%$ \\
& kerSeg (nets.) & $0.12$ & $0.00$ & $2.82$ & $99.17\%$                & $0.16$ & $0.00$ & $1.81$ & $99.31\%$ \\
& gSeg (frob.)   & $0.47$ & $\text{Inf}$ & $\text{Inf}$ & $88.71\%$    & $0.16$ & $0.04$ & $1.65$ & $99.17\%$ \\
& kerSeg (frob.) & $0.30$ & $0.00$ & $6.07$ & $98.11\%$                & $0.40$ & $0.02$ & $4.42$ & $97.81\%$ \\
\bottomrule
\end{tabular}
} 
\end{table}

To evaluate the performance of our method (Algorithm~\ref{offline-algorithm}) for change point detection and localization, we compare it to gSeg~\citep{chen2015gseg} and kerSeg~\citep{song2024practical}. For the competitors, we consider two input types: networks (nets.) and their layer-wise Frobenius norms (frob.). For gSeg, we construct the similarity graph using the minimum spanning tree and apply the original edge-count scan statistics. For kerSeg, we use the kernel-based scan statistics fGKCP$_1$. For both methods, we set the significance level $\alpha=0.05$. Our proposed method is referred to as CPDmrdpg. 
Following \cite{wang2023multilayer}, we use relatively large Tucker ranks as inputs to TH-PCA (Algorithm~\ref{thpca}) for robustness, setting $r_1= r_2 = 15$ and $r_3=L$ to compute the refined scan statistics (Definition~\ref{def-cusum-new}).
Based on Theorem~\ref{theorem-2}, we set the threshold $\tau = c_{\tau,1}   n\sqrt{L} \log^{3/2}(T)  $ with $c_{\tau,1} = 0.1$; justification and sensitivity analysis are provided in Appendix~\ref{appendix_syn_data}.   We also assess the confidence intervals constructed utilizing the procedure in Section~\ref{sec:CI},  a capability not supported by the competitors. We set $B = 500$ and $M = T$ as suggested by \cite{xu2024change}.

Performance is quantified using the following metrics:  
\begin{enumerate} [$(i)$]
    \item Absolute error: $|\widehat{K}-K|$ where $\widehat{K}$ and $K$ denote the numbers of estimated and true change points, respectively;
    \item One-sided Hausdorff distances (see Section~\ref{sec:notation}):
$d(\widehat{\mathcal{C}}|\mathcal{C})$ and $d(\mathcal{C}|\widehat{\mathcal{C}})$
where $\widehat{\mathcal{C}}$ and $\mathcal{C}$ denote the sets of estimated and true change points, respectively; 
\item Time segment coverage: 
$C(\mathcal{G},\mathcal{G'}) = T^{-1} \sum_{\mathcal{A} \in \mathcal{G}} |\mathcal{A}| \cdot \max_{\mathcal{A'} \in \mathcal{G'}} |\mathcal{A} \cap \mathcal{A'}|/ |\mathcal{A} \cup \mathcal{A'}|$
where $\mathcal{G}$ and $\mathcal{G}'$ denote the partitions of the time span $[1, T]$ into intervals between consecutive true and estimated change points, respectively.
\end{enumerate} 

Throughout, we set the time horizon to $T = 200$ and the number of layers to $L = 4$, and consider node sizes $n \in \{50, 100\}$. Each setting is evaluated over 100 Monte Carlo trials. We consider two network models: the Dirichlet distribution model~(DDM) and the multilayer stochastic block model (MSBM), with structural changes specified in each scenario. In the DDM, we generate latent positions 
\[
\{X_i\}_{i=1}^n \cup \{Y_i\}_{i=1}^n \overset{\mathrm{i.i.d.}}{\sim}  \text{Dirichlet}(\mathbf{1}_d)
\]
with $d = 5$ and $\mathbf{1}_d \in \mathbb{R}^d$ denoting the all-one vector. For each time $t$, we sample weight matrices $\{W_{(l)}(t)\}_{l=1}^L \subset \mathbb{R}^{d \times d}$ with entries 
\[
(W_{(l)}(t))_{u,v} \sim \text{Uniform}((\rho_t L + l)/(4L), (\rho_t L + l + 1)/(4L)).
\]
The edge probabilities are given by $\mathbf{P}_{i,j,l}(t) = X_i^\top W_{(l)}(t) Y_j$ and the adjacency entries are sampled as $\mathbf{A}_{i,j,l}(t) \sim \text{Bernoulli}(\mathbf{P}_{i,j,l}(t))$. In the MSBM, the edge probability tensor $\mathbf{P}_{i,j,l}(t) \in [0,1]^{n \times n \times L}$ is defined as $\mathbf{P}_{i,j,l}(t) = p_{1,l}$ if nodes $i, j \in \mathcal{B}_c$ for some $c \in [C_t]$, and $p_{2,l}$ otherwise, where $\{\mathcal{B}_{c}\}_{c \in [C_t]}$ partitions the nodes into $C_t$ communities. The connection probabilities are drawn from 
\[
p_{1,l} \sim \text{Uniform}((3L+l-1)/(4L), (3L+l)/(4L))
\]
and 
\[
p_{2,l} \sim \text{Uniform}((2L+l-1)/(4L), (2L+l)/(4L)).
\] 
The adjacency tensor $\mathbf{A}(t) \in \{0,1\}^{n \times n \times L}$ is then sampled $\mathbf{A}_{i,j,l}(t) \overset{\mathrm{ind.}}{\sim} \text{Bernoulli}(\mathbf{P}_{i,j,l}(t))$.

\medskip
\noindent\textbf{Scenario 1.} We consider the DDM  with $K = 2$ change points at $t \in \{70, 140\}$, yielding $3$ time segments $\{\mathcal{A}_i\}_{i=1}^3$. We set  $\rho_t = 2$ for $t \in \mathcal{A}_1 \cup \mathcal{A}_3$, and $\rho_t = 3$ with reversed layer order for $t \in \mathcal{A}_2$.

\medskip
\noindent\textbf{Scenario 2.} We consider the MSBM with $K=5$ change points at $t \in \{20,60,80,160,180\}$, resulting in $6$ time segments $\{\mathcal{A}_i\}_{i=1}^6$.  We let $\{\mathcal{B}_c(t)\}_{c \in [C_t]}$ be evenly-sized communities and specify the changes as follows: $C_t = 4$ for $t \in \mathcal{A}_1$, $C_t = 2$ for $t \in \mathcal{A}_2$,  $C_t = 4$ for $t \in \mathcal{A}_3$, $C_t = 4$ with reversed layer order for $t \in \mathcal{A}_4$,   $C_t = 3$ for $t \in \mathcal{A}_5$ and $C_t = 4$ for $t \in \mathcal{A}_6$.

\medskip
\noindent\textbf{Scenario 3.}
We consider the MSBM with $K=3$ change points  at $t \in \{50, 100, 150\}$, yielding $4$ time segments $\{\mathcal{A}_i\}_{i=1}^4$.
The number of communities is fixed at $C_t =3$ but in the first layer, the the community sizes vary across segments $(0.3n, 0.4n, 0.3n)$ in $\mathcal{A}_1 \cup \mathcal{A}_4$, $(0.4n, 0.3n, 0.3n)$ in $\mathcal{A}_2$ and $(0.5n, 0.3n, 0.2n)$ in $\mathcal{A}_3$.  The remaining layers retain equal-sized communities.

\medskip
\noindent\textbf{Scenario 4.}
We consider the MSBM with $K = 5$ change points  at $t \in \{20,60,80,160,180\}$, resulting in $6$ time segments $\{\mathcal{A}_i\}_{i=1}^6$. The number of communities is fixed at $C_t =4$ with equal-sized partitions, while the connection probabilities vary across segments. Specifically, for $\epsilon = 0.1$, we let
\[
p_{1,l}  \sim \text{Uniform}\left(0.5 \cdot [0.21 + \delta_t \cdot \epsilon], 0.5 \cdot [0.25 + \delta_t \cdot \epsilon] \right)
\]
and 
\[
p_{2,l} \sim \text{Uniform}\left(0.21 + \delta_t \cdot \epsilon, 0.25 + \delta_t \cdot \epsilon\right),
\]
where $\delta_t = 0$ for $t \in \mathcal{A}_1 \cup \mathcal{A}_5$, $\delta_t = 1$ for $t \in \mathcal{A}_2 \cup \mathcal{A}_4 \cup \mathcal{A}_6$ and $\delta_t = 2$ for $t \in \mathcal{A}_3$.

The changes in \textbf{Scenarios 1} and \textbf{4} follow Model \ref{model-1}, while those in \textbf{Scenarios 2} and \textbf{3} do not, allowing us to assess the robustness of our methods. Table~\ref{tbl:s1s2} presents results for \textbf{Scenarios 1} and \textbf{2}, and Table~\ref{tbl:s3s4} in Appendix~\ref{appendix_syn_data} for \textbf{Scenarios 3} and \textbf{4}.
 Across all scenarios, our method achieves the best overall performance, nearly accurately estimating both the number and locations of change points, and remains robust even when Model~\ref{model-1} is violated.
 For gSeg, Frobenius norm (frob.) inputs yield better results than networks (nets.), while kerSeg performs better with networks, benefiting from its high-dimensional kernel-based design. Although both competitors exhibit low Hausdorff distances $d(\widehat{\mathcal{C}}|\mathcal{C})$, their higher reverse distances $d(\mathcal{C}|\widehat{\mathcal{C}})$ and frequent errors in estimating the number of change points suggest they often detect spurious change points.

Table~\ref{tbl:CI} reports the coverage and average lengths of the confidence intervals constructed via the procedure in Section~\ref{sec:CI} for node size $n \in \{100,150\}$. The proposed method generally achieves strong coverage with reasonably narrow intervals. Coverage is lower in \textbf{Scenario 3}, where violations of Model~\ref{model-1} and relatively small, layer-specific changes pose greater challenges. The performance improves with larger $n$ as the change magnitudes $\kappa_k$ (Model \ref{model-1}) increase.

\begin{table}
\caption{The $95\%$ confidence interval coverage 
(average length) for change points across all scenarios.}
\vspace{1em}
\label{tbl:CI}
\centering
\resizebox{0.7\textwidth}{!}{ 
\begin{tabular}{ l llll }
\toprule
$n$ & Scenario 1 & Scenario 2 & Scenario 3 & Scenario 4 \\
\midrule
$100$ & $100\%$ $(0.003)$ & $100\%$ $(0.106)$ & $76.67\%$ $(1.528)$ & $100\%$ $(0.605)$\\
$150$ & $100\%$ $(0.001)$ & $100\%$ $(0.029)$ & $95.33\%$ $(0.653)$ & $100\%$ $(0.294)$\\
\bottomrule
\end{tabular}
} 
\end{table}

\subsection{Real data experiments}\label{sec:real}

Our analysis incorporates two real data sets, the worldwide agricultural trade network data set presented here and the U.S.~air transport network data set in Appendix~\ref{appendix_real_data}. 

\noindent\textbf{The worldwide agricultural trade network data} are available from \cite{FAOSTAT}. The dataset comprises annual multilayer networks from 1986 to 2020~($T = 35$), with nodes representing countries and layers representing agricultural products. A directed edge within a layer indicates the trade relation between two countries of a specific agricultural product. We use the top $L = 4$ agricultural products by the trade volume and the $n =75$ most active countries based on import/export volume.  
Tuning parameters follow the setup in Section \ref{sec:simulation}, and our method detects change points in  1991, 1999, 2005, and 2013. Results for competing methods and confidence intervals (Section~\ref{sec:CI}) are provided in Appendix~\ref{appendix_real_data}.

The 1991 change point aligns with the German reunification and the dissolution of the Soviet Union, both of which triggered major political shifts that significantly affected the trade dynamics. The 1999 change point corresponds to the World Trade Organization’s (WTO) Third Ministerial Conference,  a key moment in debates on globalization, particularly regarding agricultural subsidies and tariff reductions, with developing nations demanding fairer trade terms. The 2005 change point marks a WTO agreement to eliminate agricultural export subsidies, promoting greater equity in global markets.  Finally, the 2013 change point corresponds to the adoption of the WTO’s Bali Package, the first fully endorsed multilateral agreement, which introduced the Trade Facilitation Agreement and key provisions on food security and tariff quota administration,  significantly impacting agricultural trade.

\section{Conclusion}\label{sec:conclusion}

In this paper, we study offline change point localization and inference in dynamic multilayer networks — a setting that, to the best of our knowledge, has not has not been previously addressed.
We propose a two-stage algorithm and establish its consistency in estimating both the number and locations of change points. 
We further develop local refinement procedures, derive the limiting distributions and introduce a data-driven method for constructing confidence intervals for the true change points.

Several limitations of this work remain open for future research. First, the assumption $\Delta = \Theta(T)$ precludes frequent change points. This could be relaxed using alternative selection strategies such as the narrowest-over-threshold approach  \citep{Baranowski_2019}  instead of greedy selection in this paper. Second, our inference procedure is limited to vanishing jumps. It would be interesting to explore practical procedures for the non-vanishing regime, potentially building on bootstrap methods \citep[e.g.][]{cho2022bootstrap}.  Lastly, the current framework assumes temporal independence; a natural extension would be to incorporate temporal dependence structures  \citep[e.g.][]{padilla2022change, cho2023high}.

\bibliographystyle{plainnat}
\bibliography{references.bib}

\begin{thebibliography}{29}
\providecommand{\natexlab}[1]{#1}
\providecommand{\url}[1]{\texttt{#1}}
\expandafter\ifx\csname urlstyle\endcsname\relax
  \providecommand{\doi}[1]{doi: #1}\else
  \providecommand{\doi}{doi: \begingroup \urlstyle{rm}\Url}\fi

\bibitem[Baranowski et~al.(2019)Baranowski, Chen, and Fryzlewicz]{Baranowski_2019}
Rafal Baranowski, Yining Chen, and Piotr Fryzlewicz.
\newblock Narrowest-over-threshold detection of multiple change points and change-point-like features.
\newblock \emph{Journal of the Royal Statistical Society Series B: Statistical Methodology}, 81\penalty0 (3):\penalty0 649–672, May 2019.
\newblock ISSN 1467-9868.
\newblock \doi{10.1111/rssb.12322}.
\newblock URL \url{http://dx.doi.org/10.1111/rssb.12322}.

\bibitem[Bhattacharjee et~al.(2020)Bhattacharjee, Banerjee, and Michailidis]{bhattacharjee2020change}
Monika Bhattacharjee, Moulinath Banerjee, and George Michailidis.
\newblock Change point estimation in a dynamic stochastic block model.
\newblock \emph{Journal of machine learning research}, 21\penalty0 (107):\penalty0 1--59, 2020.

\bibitem[{Bureau of Transportation Statistics}(2022)]{BTS2022}
{Bureau of Transportation Statistics}.
\newblock T-100 domestic market (u.s. carriers), 2022.
\newblock URL \url{https://www.transtats.bts.gov/DL_SelectFields.aspx?gnoyr_VQ=GDL&QO_fu146_anzr=Nv4}.

\bibitem[Chen et~al.(2024)Chen, Okhrin, and Wang]{chen2024monitoring}
Cathy Yi-Hsuan Chen, Yarema Okhrin, and Tengyao Wang.
\newblock Monitoring network changes in social media.
\newblock \emph{Journal of Business \& Economic Statistics}, 42\penalty0 (2):\penalty0 391--406, 2024.

\bibitem[Chen and Zhang(2015)]{chen2015gseg}
Hao Chen and Nancy Zhang.
\newblock {Graph-based change-point detection}.
\newblock \emph{The Annals of Statistics}, 43\penalty0 (1):\penalty0 139 -- 176, 2015.
\newblock \doi{10.1214/14-AOS1269}.
\newblock URL \url{https://doi.org/10.1214/14-AOS1269}.

\bibitem[Cho and Kirch(2022)]{cho2022bootstrap}
Haeran Cho and Claudia Kirch.
\newblock Bootstrap confidence intervals for multiple change points based on moving sum procedures.
\newblock \emph{Computational Statistics \& Data Analysis}, 175:\penalty0 107552, 2022.

\bibitem[Cho and Owens(2023)]{cho2023high}
Haeran Cho and Dom Owens.
\newblock High-dimensional data segmentation in regression settings permitting temporal dependence and non-gaussianity, 2023.
\newblock URL \url{https://arxiv.org/abs/2209.08892}.

\bibitem[{Food and Agricultural Organization of the United Nations}(2022)]{FAOSTAT}
{Food and Agricultural Organization of the United Nations}.
\newblock Food and agriculture data, 2022.
\newblock URL \url{https://www.fao.org/faostat/en/#data/TM}.

\bibitem[Han et~al.(2022)Han, Willett, and Zhang]{han2022optimal}
Rungang Han, Rebecca Willett, and Anru~R Zhang.
\newblock An optimal statistical and computational framework for generalized tensor estimation.
\newblock \emph{The Annals of Statistics}, 50\penalty0 (1):\penalty0 1--29, 2022.

\bibitem[Jing et~al.(2021)Jing, Li, Lyu, and Xia]{jing2021community}
Bing-Yi Jing, Ting Li, Zhongyuan Lyu, and Dong Xia.
\newblock Community detection on mixture multilayer networks via regularized tensor decomposition.
\newblock \emph{The Annals of Statistics}, 49\penalty0 (6):\penalty0 3181--3205, 2021.

\bibitem[Jones and Rubin-Delanchy(2020)]{jones2020multilayer}
Andrew Jones and Patrick Rubin-Delanchy.
\newblock The multilayer random dot product graph.
\newblock \emph{arXiv preprint arXiv:2007.10455}, 2020.

\bibitem[Kov{\'a}cs et~al.(2023)Kov{\'a}cs, B{\"u}hlmann, Li, and Munk]{kovacs2023seeded}
Solt Kov{\'a}cs, Peter B{\"u}hlmann, Housen Li, and Axel Munk.
\newblock Seeded binary segmentation: a general methodology for fast and optimal changepoint detection.
\newblock \emph{Biometrika}, 110\penalty0 (1):\penalty0 249--256, 2023.

\bibitem[Madrid~Padilla et~al.(2022)Madrid~Padilla, Wang, Zhao, and Yu]{madrid2022change}
Carlos~Misael Madrid~Padilla, Daren Wang, Zifeng Zhao, and Yi~Yu.
\newblock Change-point detection for sparse and dense functional data in general dimensions.
\newblock \emph{Advances in Neural Information Processing Systems}, 35:\penalty0 37121--37133, 2022.

\bibitem[Madrid~Padilla et~al.(2023)Madrid~Padilla, Xu, Wang, MADRID~PADILLA, and Yu]{madrid2023change}
Carlos~Misael Madrid~Padilla, Haotian Xu, Daren Wang, OSCAR~HERNAN MADRID~PADILLA, and Yi~Yu.
\newblock Change point detection and inference in multivariate non-parametric models under mixing conditions.
\newblock \emph{Advances in Neural Information Processing Systems}, 36:\penalty0 21081--21134, 2023.

\bibitem[Padilla et~al.(2022)Padilla, Yu, and Priebe]{padilla2022change}
Oscar Hernan~Madrid Padilla, Yi~Yu, and Carey~E Priebe.
\newblock Change point localization in dependent dynamic nonparametric random dot product graphs.
\newblock \emph{Journal of Machine Learning Research}, 23\penalty0 (234):\penalty0 1--59, 2022.

\bibitem[Page(1954)]{page1954continuous}
Ewan~S Page.
\newblock Continuous inspection schemes.
\newblock \emph{Biometrika}, 41\penalty0 (1/2):\penalty0 100--115, 1954.

\bibitem[Porter(2018)]{Porter2018WhatIA}
Mason~A. Porter.
\newblock What is... a multilayer network?
\newblock \emph{Notices of the American Mathematical Society}, 2018.
\newblock URL \url{https://api.semanticscholar.org/CorpusID:53702798}.

\bibitem[Song and Chen(2024)]{song2024practical}
Hoseung Song and Hao Chen.
\newblock Practical and powerful kernel-based change-point detection.
\newblock \emph{IEEE Transactions on Signal Processing}, 2024.

\bibitem[Vershynin(2018)]{vershynin2018high}
Roman Vershynin.
\newblock \emph{High-dimensional probability: An introduction with applications in data science}, volume~47.
\newblock Cambridge university press, 2018.

\bibitem[Wang et~al.(2017)Wang, Yu, and Rinaldo]{wang2017optimal}
Daren Wang, Yi~Yu, and Alessandro Rinaldo.
\newblock Optimal covariance change point localization in high dimension.
\newblock \emph{arXiv preprint arXiv:1712.09912}, 2017.

\bibitem[Wang et~al.(2021)Wang, Yu, and Rinaldo]{wang2021optimal}
Daren Wang, Yi~Yu, and Alessandro Rinaldo.
\newblock Optimal change point detection and localization in sparse dynamic networks.
\newblock \emph{The Annals of Statistics}, 49\penalty0 (1):\penalty0 203--232, 2021.

\bibitem[Wang et~al.(2023)Wang, Li, Padilla, Yu, and Rinaldo]{wang2023multilayer}
Fan Wang, Wanshan Li, Oscar Hernan~Madrid Padilla, Yi~Yu, and Alessandro Rinaldo.
\newblock Multilayer random dot product graphs: Estimation and online change point detection.
\newblock \emph{arXiv preprint arXiv:2306.15286}, 2023.

\bibitem[Wellner et~al.(2013)]{wellner2013weak}
Jon Wellner et~al.
\newblock \emph{Weak convergence and empirical processes: with applications to statistics}.
\newblock Springer Science \& Business Media, 2013.

\bibitem[Xu and Lee(2022)]{xu2022statistical}
Cong Xu and Thomas~CM Lee.
\newblock Statistical consistency for change point detection and community estimation in time-evolving dynamic networks.
\newblock \emph{IEEE Transactions on Signal and Information Processing over Networks}, 8:\penalty0 215--227, 2022.

\bibitem[Xu et~al.(2024)Xu, Wang, Zhao, and Yu]{xu2024change}
Haotian Xu, Daren Wang, Zifeng Zhao, and Yi~Yu.
\newblock Change-point inference in high-dimensional regression models under temporal dependence.
\newblock \emph{The Annals of Statistics}, 52\penalty0 (3):\penalty0 999--1026, 2024.

\bibitem[Xue et~al.(2024)Xue, Xu, and Yu]{xue2024change}
Gengyu Xue, Haotian Xu, and Yi~Yu.
\newblock Change point localisation and inference in fragmented functional data.
\newblock \emph{arXiv preprint arXiv:2405.05730}, 2024.

\bibitem[Young and Scheinerman(2007)]{young2007random}
Stephen~J Young and Edward~R Scheinerman.
\newblock Random dot product graph models for social networks.
\newblock In \emph{International Workshop on Algorithms and Models for the Web-Graph}, pages 138--149. Springer, 2007.

\bibitem[Yu et~al.(2021)Yu, Padilla, Wang, and Rinaldo]{yu2021optimal}
Yi~Yu, Oscar Hernan~Madrid Padilla, Daren Wang, and Alessandro Rinaldo.
\newblock Optimal network online change point localisation.
\newblock \emph{arXiv preprint arXiv:2101.05477}, 2021.

\bibitem[Zhang et~al.(2022)Zhang, Cai, and Wu]{zhang2022heteroskedastic}
Anru~R Zhang, T~Tony Cai, and Yihong Wu.
\newblock Heteroskedastic pca: Algorithm, optimality, and applications.
\newblock \emph{The Annals of Statistics}, 50\penalty0 (1):\penalty0 53--80, 2022.

\end{thebibliography}

\newpage
\appendix
\section*{Appendix}

All technical details of this paper are deferred to the Appendix.  Limiting distributions for the non-vanishing regime are established in Appendix~\ref{sec:add-theory}. Additional algorithms used in our procedures are provided in Appendix~\ref{section:add-algorithm}.
The proof of Theorem~\ref{theorem-2} is given in Appendix~\ref{proof-theorem-2}, while the proofs of the limiting distribution results, including Theorem~\ref{theorem-inference} in the main text, 
and Theorem~\ref{theorem-inference-cont} in Appendix~\ref{sec:add-theory} are presented in Appendix~\ref{proof-theorem-inference}.  Further details and results for Section~\ref{sec:experiment} are collected in Appendix~\ref{sec-add-simulation}.

\section{Limiting distributions in the non-vanishing regime}\label{sec:add-theory}

\begin{theorem}\label{theorem-inference-cont}

Let  $\{\mathbf{A}(t)\}_{t \in [T]},\{\mathbf{A}'(t)\}_{t \in [T]},  \{\mathbf{B}(t)\}_{t \in [T]},  \{\mathbf{B}'(t)\}_{t \in [T]} \subset \{0, 1\}^{n \times n \times L} $ be mutually independent adjacency tensor sequences generated according to Definition~\ref{def-umrdpg-f-dynamic} and satisfying Model~\ref{model-1}, Assumptions~\ref{ass_X_f_Q_f} and \ref{ass-SNR}.  Let $\{ \widehat{\eta}_k \}_{k=1}^{\widetilde{K}}$ be defined in \eqref{def-estimator-final} with  $\{ \widetilde{\eta}_k \}_{k=1}^{\widetilde{K}}$ obtained from Algorithm~\ref{offline-algorithm}, using a threshold $\tau$ satisfying condition stated in Theorem~\ref{theorem-2}.

For $k \in [K]$, if $\kappa_k \to \rho_k$, as $T \to \infty$, with $\rho_k > 0$ being an absolute constant, then  when $ T \to \infty$, we have   $ \vert \widehat{\eta}_k - \eta_k\vert  = O_p(1)$  and 
    \[
    \widehat{\eta}_k - \eta_k \xrightarrow{\mathcal{D}} \argmin_{r \in \mathbb{Z}} \mathcal{P}_k(r), \quad  \mbox{where}  \quad   \mathcal{P}_k(r) =
    \begin{cases}
      - r \rho_k^2 - 2\rho_k\sum_{t=r+1}^0 \langle \boldsymbol{\mathbf{\Psi}}_k ,\mathbf{E}_{k}(t) \rangle, & r < 0, \\
        0, & r = 0, \\
        r\rho_k^2 + 2 \rho_k \sum_{t=1}^r \langle \boldsymbol{\mathbf{\Psi}}_k ,\mathbf{E}_{k+1}(t) \rangle, & r > 0,
    \end{cases}
 \]
    for $r \in \mathbb{Z}$. Here, the normalized jump tensor 
     $\boldsymbol{\mathbf{\Psi}}_k$  is defined in Model~\ref{model-1}, and for any $k \in [K+1]$ and $t \in \mathbb{Z}$,
     $\mathbf{E}_{k}(t) = \mathbf{A}_{k}(t) - \mathbf{P}(\eta_{k})$ with $\{\mathbf{A}_{k}(t)\}_{t \in \mathbb{Z}} \overset{\mathrm{i.i.d.}}{\sim}  \mathrm{MRDPG}(\{X_i\}_{i=1}^{n}, \{W_{(l)}(\eta_{k})\}_{l\in [L]})$.

\end{theorem}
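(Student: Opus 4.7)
The strategy mirrors that of Theorem~\ref{theorem-inference} but without time rescaling, so the limiting process lives on $\mathbb{Z}$ rather than $\mathbb{R}$. First, I would apply Theorem~\ref{theorem-2} to fix a good event of probability $1 - O(T^{-c})$ on which $\widetilde K = K$ and $|\widetilde\eta_j - \eta_j| \lesssim \log(T)/\kappa_j^2 \lesssim \log T$ for every $j$; in particular the working interval $(\tilde s_k,\tilde e_k]$ contains $\eta_k$ as its only true change point and is well-separated from $\eta_{k-1},\eta_{k+1}$. Using the mutual independence of the $\mathbf B$-sequence from the $\mathbf A$-sequence, TH-PCA concentration (Algorithm~\ref{thpca}) under Assumption~\ref{ass_X_f_Q_f}$(iii)$, and the fact that at most $O(\log T)$ of the $\Theta(T)$ summands in $\mathbf B^{\widetilde\eta_{k-1},\widetilde\eta_k}$ are contaminated by data from the wrong side, one gets
\begin{equation*}
\bigl\|\widehat{\mathbf P}^{\widetilde\eta_{k-1},\widetilde\eta_k} - \mathbf P(\eta_k)\bigr\|_{\mathrm F} + \bigl\|\widehat{\mathbf P}^{\widetilde\eta_k,\widetilde\eta_{k+1}} - \mathbf P(\eta_{k+1})\bigr\|_{\mathrm F} = o_p(1).
\end{equation*}

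\textbf{Core decomposition.} Write $\widehat P_k^- := \widehat{\mathbf P}^{\widetilde\eta_{k-1},\widetilde\eta_k}$ and $\widehat P_k^+ := \widehat{\mathbf P}^{\widetilde\eta_k,\widetilde\eta_{k+1}}$. For integer $r>0$ with $\eta_k+r \in (\tilde s_k,\tilde e_k)$, I would use the identity $\|a-b\|^2-\|a-c\|^2 = -2\langle a,b-c\rangle + \|b\|^2-\|c\|^2$ together with $\mathbf A(\eta_k+t) = \mathbf P(\eta_{k+1}) + \mathbf E_{k+1}(t)$ to obtain
\begin{equation*}
\mathcal Q_k(\eta_k+r) - \mathcal Q_k(\eta_k) = r\bigl(\|\widehat P_k^- - \mathbf P(\eta_{k+1})\|_{\mathrm F}^2 - \|\widehat P_k^+ - \mathbf P(\eta_{k+1})\|_{\mathrm F}^2\bigr) - 2\sum_{t=1}^r \bigl\langle \mathbf E_{k+1}(t),\,\widehat P_k^- - \widehat P_k^+\bigr\rangle.
\end{equation*}
The deterministic coefficient converges in probability to $\|\mathbf P(\eta_k)-\mathbf P(\eta_{k+1})\|_{\mathrm F}^2 = \kappa_k^2 \to \rho_k^2$. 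Writing $\widehat P_k^- - \widehat P_k^+ = -\kappa_k\boldsymbol{\mathbf{\Psi}}_k + o_p(1)$ reduces the stochastic part to $2\rho_k \sum_{t=1}^r \langle \boldsymbol{\mathbf{\Psi}}_k, \mathbf E_{k+1}(t)\rangle$ up to negligible error. The analogous identity for $r<0$ produces the $\mathbf E_k$-term, so the $r$-indexed process matches $\mathcal P_k(r)$.

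\textbf{Tightness and argmin.} To establish $|\widehat\eta_k-\eta_k| = O_p(1)$, I would note that on $\{|r|\leq M\}$ the drift grows like $|r|(\rho_k^2 + o_p(1))$ while the martingale $\sum_{t=1}^r \langle \boldsymbol{\mathbf{\Psi}}_k, \mathbf E_{k+1}(t)\rangle$ has maximum on $[1,M]$ of order $\sqrt M$ by Doob's inequality, so the minimizer lies in a bounded window with probability tending to one as $M\to\infty$, uniformly in $T$. Independence of $\{\mathbf E_k(t)\}_t$ from $\{\mathbf E_{k+1}(t)\}_t$ (distinct segments of the $\mathbf A$-sequence) and from $\widehat P_k^\pm$ (which are built from the $\mathbf B$-sequence) gives finite-dimensional convergence of the recentered objective to $\mathcal P_k$. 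Applying the argmin continuous mapping theorem on the discrete lattice (e.g., Kim–Pollard) then concludes $\widehat\eta_k - \eta_k \xrightarrow{\mathcal D} \argmin_{r\in\mathbb Z}\mathcal P_k(r)$.

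\textbf{Main obstacle.} The delicate step is propagating the $o_p(1)$ TH-PCA error in $\widehat P_k^\pm$ uniformly through the decomposition: the drift coefficient must converge to $\rho_k^2$ and the stochastic linear functional must collapse to $2\rho_k\langle\boldsymbol{\mathbf{\Psi}}_k,\mathbf E_{k+1}(t)\rangle$ after summation of up to $M$ terms, without the plug-in residuals contaminating either. This requires a sharp Frobenius-norm rate for $\widehat P_k^\pm - \mathbf P(\eta_k)$ despite the $O(\log T)$ mismatched observations caused by $\widetilde\eta_k \neq \eta_k$, and a uniform control of $\sum_{t=1}^r \langle \mathbf E_{k+1}(t), \widehat P_k^\pm - \mathbf P(\eta_{k\text{ or }k+1})\rangle = o_p(1)$ on compact $r$-windows; handling this interplay between the Stage I localization accuracy and the Stage III plug-in estimators is what makes the full proof substantially longer than the heuristic above.
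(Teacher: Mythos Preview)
Your proposal is correct and follows the same high-level strategy as the paper: condition on the good event from Theorem~\ref{theorem-2}, combine TH-PCA concentration with the bias from the $O(\log T)$ mislabeled observations to get $\widehat P_k^\pm$ Frobenius-consistent for $\mathbf P(\eta_k),\mathbf P(\eta_{k+1})$, decompose $\mathcal Q_k(\eta_k+r)-\mathcal Q_k(\eta_k)$, establish tightness, and finish with the argmin continuous mapping theorem. The difference is bookkeeping. The paper inserts the \emph{averaged} population tensors $\mathbf P^{\widetilde\eta_{k-1},\widetilde\eta_k}$ as intermediaries, giving a five-term decomposition $I-II+III-IV+V$ that separately tracks the TH-PCA error ($\widehat{\mathbf P}^{\cdot,\cdot}$ vs.\ $\mathbf P^{\cdot,\cdot}$) and the contamination bias ($\mathbf P^{\cdot,\cdot}$ vs.\ $\mathbf P(\eta_\cdot)$), and shows each of $I$--$IV$ is $o_p(r\kappa_k^2+\sqrt r\,\kappa_k)$; your two-term drift-plus-martingale identity collapses these into a single $o_p(1)$ correction and is equally valid. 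For tightness the paper plugs $r=\widehat\eta_k-\eta_k$ directly into the decomposition, uses $\mathcal Q_k(\widehat\eta_k)\le\mathcal Q_k(\eta_k)$, and solves algebraically for $r\kappa_k^2=O_p(1)$; your Doob-type argument also works but is phrased loosely (you need to control the supremum over $|r|>M$, not over $|r|\le M$, to rule out escape of the minimizer). The paper's route is a bit more modular; yours is a bit more direct.
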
 

The proof of Theorem~\ref{theorem-inference-cont} is given in Appendix~\ref{proof-theorem-inference}.

Similar to Theorem~\ref{theorem-inference}, Theorem~\ref{theorem-inference-cont} establishes the uniform tightness $\kappa_k^2 |\widehat{\eta}_k - \eta_k| = O_p(1)$ and further derives the limiting distributions of the refined change point estimators defined in \eqref{def-estimator-final}, which are associated with a two-sided random walk.

\section{Additional algorithms}\label{section:add-algorithm}
We present the tensor heteroskedastic principal component analysis (TH-PCA) algorithm introduced in \cite{han2022optimal}, incorporating an additional truncation step, in Algorithm~\ref{thpca}. Its subroutine, 
the heteroskedastic principal component analysis (H-PCA) algorithm proposed by  \cite{zhang2022heteroskedastic}, is provided in Algorithm~\ref{hpca}.

\begin{algorithm}
\caption{Tensor heteroskedastic principal component analysis, TH-PCA$(\mathbf{A}, (r_1, r_2, r_3), \tau_1, \tau_2)$} \label{thpca}
\begin{algorithmic}
    \INPUT{Tensor $\mathbf{A} \in \mathbb{R}^{p_1 \times p_2 \times p_3}$, ranks $r_1, r_2, r_3 \in \mathbb{N}^+$, thresholds $\tau_1, \tau_2 \geq 0$}
    \For{$s \in [3]$} 
        \State{$
        \widehat{U}_s \leftarrow \mbox{H-PCA}( \mathcal{M}_s (\mathbf{A})  \mathcal{M}_s (\mathbf{A})^{\top}, r_s)$ \Comment{{\small See Algorithm~\ref{hpca} for H-PCA and Section \ref{sec:notation} for $\mathcal{M}_s (\mathbf{A})$}}}
    \EndFor
    \State{$
    \widetilde{\mathbf{P}} \leftarrow \mathbf{A} \times_1 \widehat{U}_1 \widehat{U}_1^{\top} \times_2 \widehat{U}_2 \widehat{U}_2^{\top} \times_3 \widehat{U}_3  \widehat{U}_3^{\top}$ \Comment{{\small See Section \ref{sec:notation} for $\times_s$}}}
    \For{$\{i, j, l\} \in [p_1] \times [p_2] \times [p_3]$}
        \State{$\widehat{\mathbf{P}}_{i,j,l} \leftarrow \min\big\{\tau_1, \max\{-\tau_2, \widetilde{\mathbf{P}}_{i,j,l}\} \big\}$}
    \EndFor
    \OUTPUT{$\widehat{\mathbf{P}} \in \mathbb{R}^{p_1 \times p_2 \times p_3}$}
\end{algorithmic}
\end{algorithm}

\begin{algorithm}
\caption{Heteroskedastic principal component analysis, H-PCA$(\Sigma, r)$} \label{hpca}
\begin{algorithmic}
    \INPUT{Matrix $\Sigma \in \mathbb{R}^{n \times n}$, rank $r \in \mathbb{N}^+$.}
    \Initialise{ $\widehat{\Sigma}^{(0)} \leftarrow \Sigma$, $\mathrm{diag}\big(\widehat{\Sigma}^{(0)}\big) \leftarrow 0$, $T \leftarrow 5\log\{\sigma_{\min}(\Sigma)/n\}$}
    \For{$t \in \{0\} \cup [T-1]$}
        \State{Singular value decomposition
        $\widehat{\Sigma}^{(t)} = \sum_{i=1}^n\sigma^{i, (t)} \mathbf{u}^{i,(t)} ( \mathbf{v}^{i, (t)})^{\top},  \quad  \sigma^{1, (t)} \geq \cdots \geq  \sigma^{n, (t)}  \geq 0 
        $}
        \State{$
        \widetilde{\Sigma}^{(t)} \leftarrow \sum_{i = 1}^r \sigma^{i, (t)} \mathbf{u}^{i, (t)} \big(\mathbf{v}^{i, (t)}\big)^{\top}
        $}
        \State{$\widehat{\Sigma}^{(t+1)} \leftarrow \widehat{\Sigma}^{(t)}$,   $\mathrm{diag}\big(\widehat{\Sigma}^{(t+1)}\big) \leftarrow \mathrm{diag}\big(\widetilde{\Sigma}^{(t)}\big)
        $}
    \EndFor
    \State{$U \leftarrow (\mathbf{u}^1, \ldots, \mathbf{u}^r)$ from top-$r$ left singular vectors of $\widehat{\Sigma}^{(T)}$}
    \OUTPUT{$U \in \R^{n \times r}$}
\end{algorithmic}
\end{algorithm}

\section{Proof of Theorem~\ref{theorem-2}}\label{proof-theorem-2} 

The proof of Theorem~\ref{theorem-2} is in Appendix~\ref{sec:proof-theorem-2} with all necessary auxiliary results in Appendix~\ref{sec:proof-auz}.

\subsection{Proof of Theorem~\ref{theorem-2}}\label{sec:proof-theorem-2}

\begin{proof}
We first define the event
\[
\mathcal{A}   = \Big\{ \widetilde{K} = K \mbox{ and } \vert b_k - \eta_k \vert  \leq \tilde{\epsilon}, \, \forall k \in [K] \Big\}, \quad \mbox{where }  \tilde{\epsilon} = C_{\tilde{\epsilon}}       \log(T) \bigg\{ \frac{n \sqrt{L} \log^{1/2}(T)}{\kappa^2} + \frac{\sqrt{\Delta}}{\kappa}  \bigg\},
\]
where $\{b_k\}_{k=1}^{\widetilde{K}}$ are preliminary change point estimates obtained from \textbf{Stage I} in Algorithm~\ref{offline-algorithm}. 
Then by Proposition~\ref{theorem-1}, it holds that  
\[
\mathbb{P}\{ \mathcal{A}\} \geq 1 - C_0T^{-c_0}
\]
and $C_{\tilde{\epsilon}}, C_0, c_0 >0$ are absolute constants. Since $\{\mathbf{A}'(t)\}_{t=1}^T \cup \{\mathbf{B}'(t)\}_{t=1}^T$ are independent of $\{\mathbf{A}(t)\}_{t=1}^T \cup \{\mathbf{B}(t)\}_{t=1}^T$, the distribution of $\{\mathbf{A}'(t)\}_{t=1}^T \cup \{\mathbf{B}'(t)\}_{t=1}^T$  remains unaffected under the conditioning on the event $\mathcal{A}$. All subsequent analysis in this proof is carried out under the event~$\mathcal{A}$. Consequently, we can derive that 
\begin{equation}\label{theorem-2-nu_k}
    \vert b_k - \eta_{k} \vert \leq \tilde{\epsilon} \leq \Delta/6,  \quad \forall k \in [K],
\end{equation}
where the last inequality follows from  Assumption~\ref{ass-SNR} and the fact that $C_{\mathrm{SNR}}$ is a sufficiently large constant.

\medskip
\noindent \textbf{Step 1.}  We first establish that for any $k \in [K]$,  each working interval $(s_k, e_k)$ contains exactly one true change point, namely $\eta_k$, and the two endpoints are well separated. 

From \eqref{theorem-2-nu_k}, we obtain that
$ \eta_k \in [b_{k-1}, b_{k+1}]$, 
\[
\eta_k - b_{k-1} \geq
  \eta_k - \eta_{k-1} - \vert \eta_{k-1} - b_{k-1}\vert 
   \geq  \Delta - \Delta/6    \geq 5\Delta/6,
\]
and 
\[
b_{k+1} - \eta_k \geq  \eta_{k+1} - \eta_k - \vert \eta_{k+1} - b_{k+1} \vert \geq  \Delta - \Delta/6 \geq 5\Delta/6.
\]
Similarly, we can derive that
\[
\min\{b_k - b_{k-1}, \,b_{k+1} - b_k\} \geq  2\Delta/3.
\]
As a result, the working interval 
\[ (s_k,e_k] =  \big(b_{k-1}  +(b_k - b_{k-1})/2,\;
   b_{k+1} - (b_{k+1} - b_k)/2 \big],
\]
contains exactly one change point $\eta_k$. 
For any $t \in (s_k, e_k)$,  denote $m^{s_k, e_k}_t = \mathrm{rank}(\widetilde{Q}^{s_k, e_k}(t))$ with $\widetilde{Q}^{s_k, e_k}(t)$ defined in \eqref{def-tilde-Q}. Then we have that  
\begin{equation}\label{theorem-2-eq-2}
    m^{s_k, e_k}_t \leq m_k + m_{k+1}\leq 2 m_{\max},
\end{equation}
where $m_{k} = \mathrm{rank}\big(Q(\eta_k)\big)$ with $Q(\eta_k)$ defined in \eqref{def-Q},
and $m_{\max} = \max_{k \in [K+1]} m_k$.

In addition, we  have that 
\[
b_k - s_k =  (b_k - b_{k-1})/2\geq  \Delta/3,
\]
and
\[
e_k - b_k  =  (b_{k+1} - b_{k})/2  \geq  \Delta/3.
\]
Therefore,
\begin{equation}\label{theorem-2-eq-1}
\min \{e_k - b_k,b_k - s_k\} \geq  \Delta/3.
\end{equation}

\medskip
\noindent \textbf{Step 2.} We now show that the population statistics $\widetilde{\mathbf{P}}^{s_k,e_k}(b_k)$ provide a sufficiently strong signal within each working interval $(s_k, e_k]$.

By Lemma~\ref{lemma-s-15}, it holds that
\[
\big\|\widetilde{\mathbf{P}}^{s_k,e_k}(t)\big\|_{\mathrm{F}}^2 
=
\begin{cases}
\frac{t - s_k}{(e_k - s_k)(e_k - t)}(e_k- \eta_k)^2 
\kappa_k^2,
& s_k <t \le \eta_k, \\
\frac{e_k - t}{(e_k - s_k)(t - s_k)}(\eta_k - s_k)^2 
\kappa_k^2,
& \eta_k <  t <e_k.
\end{cases}
\]
Define the scaling factor
\[
\widetilde{\Delta}_k = 
\sqrt{\frac{(b_k - s_k)\,(e_k - b_k)}{e_k - s_k}}.
\]
Assuming without loss of generality that  $b_k \le \eta_k$ and using \eqref{theorem-2-eq-1}, we obtain
\begin{equation}\label{eq2-step2.1.0}
\widetilde{\Delta}_k^2 
\geq \frac{\min\{b_k - s_k,\, e_k - b_k\}}{2} 
\geq  \frac{\Delta}{6}.
\end{equation}
Thus, we have that
\begin{align}\label{eq2-step2.1}
\big\|\widetilde{\mathbf{P}}^{s_k,e_k}(b_k)\big\|_{\mathrm{F}}^2
= & 
\frac{b_k - s_k}{(e_k - s_k)(e_k - b_k)}
(e_k - \eta_k)^2 \ \kappa_k^2 
= \widetilde{\Delta}_k^2 
\bigg(\frac{e_k - \eta_k}{e_k - b_k}\bigg)^{2}
\kappa_k^2 \nonumber\\
=  &  \widetilde{\Delta}_k^2 
\bigg(1 - \frac{\eta_k - b_k}{e_k - b_k}\bigg)^{2} \kappa_k^2 \geq  \frac{\Delta}{6} (1 - \frac{\Delta/6}{\Delta/3})^{2} \kappa_k^2 =  \Delta\kappa_k^2/24,
\end{align}
where the first inequality follows from \eqref{theorem-2-nu_k}, \eqref{theorem-2-eq-1} and \eqref{eq2-step2.1.0}.

\medskip
\noindent 
\textbf{Step 3.} 
Note that each entry of the tensor $\widetilde{\mathbf{B}'}^{s, e}(b_k)$ is independently $c_{\sigma}$-sub-Gaussian distributed with mean tensor $\E\{ \widetilde{\mathbf{B}'}^{s, e}(b_k) \}= \widetilde{\mathbf{P}}^{s_k,e_k}(\eta_k)$  and an absolute constant $c_{\sigma} >0$. By Theorem 4 and Lemma 5 in \cite{wang2023multilayer}, and Assumption~\ref{ass_X_f_Q_f} $(i)$ and $(ii)$, for any $t \in (s, e)$, it holds that
\[
\mathbb{P} \bigg\{
\big\| \widehat{\mathbf{P}}^{s_k,e_k}(t) - \widetilde{\mathbf{P}}^{s_k,e_k}(t) \big\|_F \leq C_{1} \sqrt{(d^2m^{s, e}_t + nd+Lm^{s, e}_t )   \log(T)} \bigg\}  \geq 1 - T^{-c_{1}},
\]
for some constants $C_{1} >0$ and $c_1 >3$. By \eqref{theorem-2-eq-2}, we can derive that
\[
\mathbb{P} \bigg\{
\big\| \widehat{\mathbf{P}}^{s_k,e_k}(t) - \widetilde{\mathbf{P}}^{s_k,e_k}(t) \big\|_F \leq C_{2} \sqrt{( d^2m_{\max} + nd+Lm_{\max})   \log(T)} \bigg\}  \geq 1 - T^{-c_{1}},
\]
where $C_{2} >0$ is a constant. Define the event
\begin{align}\label{def-event-B}
\mathcal{B} = \Bigg\{
\sup_{\substack{0 \leq s_k < t < e_k \leq T \\    (s_k,e_k) \mbox{ contains only one change point } \eta_k }} & \big\| \widehat{\mathbf{P}}^{s_k,e_k}(t) - \widetilde{\mathbf{P}}^{s_k,e_k}(t) \big\|_F \leq \nonumber\\
& C_{\mathcal{B}} \sqrt{(d^2m_{\max} + nd+Lm_{\max} )   \log(T)} \Bigg\}.
\end{align}
with a constant $C_{\mathcal{B}}>0$.
By the union bound argument, it holds that 
\[
\mathbb{P} \big\{ \mathcal{B} \big\}  \geq 1 - T^{- c_{\mathcal{B}}},
\]
with a constant $c_{\mathcal{B}}>0$.
By the event $\mathcal{B}$ and the triangle equality, we have that 
\begin{equation}\label{eq2-step3.1}
\big\| \widehat{\mathbf{P}}^{s_k,e_k}(b_k) \big\|_{\mathrm{F}}  \geq  \big\| \widetilde{\mathbf{P}}^{s_k,e_k}(b_k) \big\|_{\mathrm{F}}  - 
C_{\mathcal{B}} \sqrt{(d^2m_{\max} + nd+Lm_{\max} )   \log(T)}  \geq  \kappa_{k}\sqrt{\Delta}/48,
\end{equation}
where the last inequality follows from  \eqref{eq2-step2.1},  Assumption~\ref{ass-SNR} and the fact that $C_{\mathrm{SNR}}$ is a sufficiently large constant. 
As a consequence,
\begin{align*}
&  2 \Big\langle
\widetilde{\mathbf{P}}^{s_k,e_k}(b_k) / \|\widetilde{\mathbf{P}}^{s_k,e_k}(b_k)\|_{\mathrm{F}} ,
\widehat{\mathbf{P}}^{s_k,e_k}(b_k) / \|\widehat{\mathbf{P}}^{s_k,e_k}(b_k)\|_{\mathrm{F}} 
\Big \rangle \\
= &   2 -  \Bigg\|  \frac{
\widetilde{\mathbf{P}}^{s_k,e_k}(b_k)}{\|\widetilde{\mathbf{P}}^{s_k,e_k}(b_k)\|_{\mathrm{F}}} - \frac{ \widehat{\mathbf{P}}^{s_k,e_k}(b_k) 
}{\|\widehat{\mathbf{P}}^{s_k,e_k}(b_k)\|_{\mathrm{F}} }\Bigg\|_{\mathrm{F}}^2 \\
= &  2 -  \Bigg\|  \frac{
 \big( \widetilde{\mathbf{P}}^{s_k,e_k}(b_k) - \widehat{\mathbf{P}}^{s_k,e_k}(b_k) \big)  \|\widetilde{\mathbf{P}}^{s_k,e_k}(b_k)\|_{\mathrm{F}}  }{\|\widetilde{\mathbf{P}}^{s_k,e_k}(b_k)\|_{\mathrm{F}} \|\widehat{\mathbf{P}}^{s_k,e_k}(b_k)\|_{\mathrm{F}} } \\
 & \hspace{0.5cm}  +   \frac{
  \widetilde{\mathbf{P}}^{s_k,e_k}(b_k)  \big( \|\widehat{\mathbf{P}}^{s_k,e_k}(b_k)\|_{\mathrm{F}} - \|\widetilde{\mathbf{P}}^{s_k,e_k} (b_k)\|_{\mathrm{F}}\big) }{\|\widetilde{\mathbf{P}}^{s_k,e_k}(b_k)\|_{\mathrm{F}} \|\widehat{\mathbf{P}}^{s_k,e_k}(b_k)\|_{\mathrm{F}} }\Bigg\|_{\mathrm{F}}^2 \\
 \geq  &  2 -  \Bigg( \frac{
\| \widetilde{\mathbf{P}}^{s_k,e_k}(b_k) - \widehat{\mathbf{P}}^{s_k,e_k}(b_k) \|_{\mathrm{F}} \|\widetilde{\mathbf{P}}^{s_k,e_k}(b_k)\|_{\mathrm{F}} }{\|\widetilde{\mathbf{P}}^{s_k,e_k}(b_k)\|_{\mathrm{F}} \|\widehat{\mathbf{P}}^{s_k,e_k}(b_k)\|_{\mathrm{F}} }\\
& \hspace{0.5cm}  + \frac{
\| \widetilde{\mathbf{P}}^{s_k,e_k}(b_k)\|_{\mathrm{F}} \|\widehat{\mathbf{P}}^{s_k,e_k}(b_k) - \widetilde{\mathbf{P}}^{s_k,e_k} (b_k)\|_{\mathrm{F}}}{\|\widetilde{\mathbf{P}}^{s_k,e_k}(b_k)\|_{\mathrm{F}} \|\widehat{\mathbf{P}}^{s_k,e_k}(b_k)\|_{\mathrm{F}} } \Bigg)^2\\
=  &  2 -  \Bigg( \frac{
\big( \| \widetilde{\mathbf{P}}^{s_k,e_k}(b_k)\|_{\mathrm{F}} + \|\widetilde{\mathbf{P}}^{s_k,e_k}(b_k)\|_{\mathrm{F}}  \big) \|\widehat{\mathbf{P}}^{s_k,e_k}(b_k) - \widetilde{\mathbf{P}}^{s_k,e_k} (b_k)\|_{\mathrm{F}}}{\|\widetilde{\mathbf{P}}^{s_k,e_k}(b_k)\|_{\mathrm{F}} \|\widehat{\mathbf{P}}^{s_k,e_k}(b_k)\|_{\mathrm{F}} } \Bigg)^2\\
\geq &   2 - 4 \Bigg( \frac{
\|\widetilde{\mathbf{P}}^{s_k,e_k}(b_k) -  \widehat{\mathbf{P}}^{s_k,e_k}(b_k)\|_{\mathrm{F}} 
}{
\min\{\|\widetilde{\mathbf{P}}^{s_k,e_k}(b_k)\|_{\mathrm{F}}, \|\widehat{\mathbf{P}}^{s_k,e_k}(b_k)\|_{\mathrm{F}} \}
} \Bigg)^2 \\
\geq & 
2 - 4 \frac{48^2 C_{\mathcal{B}}^2 (d^2m^{s, e}_t + nd+Lm^{s, e}_t )   \log(T)}{\kappa_{k}^2 \Delta} \geq 1,
\end{align*}
where the first inequality follows from the reverse triangle inequality, the third inequality follows from
the definition of the event $\mathcal{B}$, \eqref{eq2-step2.1} and \eqref{eq2-step3.1},
and the final inequality follows from Assumption~\ref{ass-SNR} and the fact that $C_{\mathrm{SNR}}$ is a sufficiently large constant. Therefore,
\begin{equation}\label{eq2-step3.2}
\Big\langle
\widetilde{\mathbf{P}}^{s_k,e_k}(b_k),
\widehat{\mathbf{P}}^{s_k,e_k}(b_k) / \|\widehat{\mathbf{P}}^{s_k,e_k}(b_k)\|_{\mathrm{F}} 
\Big\rangle \geq \|\widetilde{\mathbf{P}}^{s_k,e_k}(b_k)\|_{\mathrm{F}} /2 \geq \kappa_k\sqrt{\Delta/96}, 
\end{equation}
where the last inequality follows from \eqref{eq2-step2.1}.

\medskip
\noindent 
\textbf{Step 4.} Since $\{\mathbf{A}'(t)\}_{t=1}^T$ is independent of $\{\mathbf{B}'(t)\}_{t=1}^T$, the distribution of $\{\mathbf{A}'(t)\}_{t=1}^T$ remain unaffected under the conditioning on the event $\mathcal{B}$. By the truncation in the construction of $\widehat{\mathbf{P}}^{s_k,e_k}(b_k )$ stated in Algorithm~\ref{thpca}, 
we have that
\[
\|\widehat{\mathbf{P}}^{s_k,e_k}(b_k )\|_{\infty}  \leq   \sqrt{\frac{(e_k-b_k) (b_k-s_k)}{(e_k-s_k)} }
\]
Combined with \eqref{eq2-step3.1}, it follows that
\[
(e_k-s_k)^{-1/2} \|\widehat{\mathbf{P}}^{s_k,e_k}(b_k )\|_{\infty}/ \|\widehat{\mathbf{P}}^{s_k,e_k}(b_k )\|_{\mathrm{F}} \leq\frac{1}{ c_3 \kappa_k \sqrt{\Delta} },
\]
for some constant $c_3 >0$.  Applying Lemma \ref{lemma-s-2}, we obtain for any $\varepsilon > 0$
\begin{align*}
& \mathbb{P} \Big\{ \Big\vert \frac{1}{\sqrt{e_k-s_k}} \sum_{t=s_k+1}^{e_k} \Big\langle \mathbf{P}(t) - \mathbf{A}'(t), \widehat{\mathbf{P}}^{s_k,e_k}(b_k )/ \|\widehat{\mathbf{P}}^{s_k,e_k}(b_k )\|_{\mathrm{F}} \Big\rangle \Big\vert \geq \varepsilon \Big\} \\
\leq & 2 \exp \Bigg\{ -c_4 \frac{\varepsilon^2}{1  + \varepsilon  / (c_3 \kappa_k \sqrt{\Delta})} \Bigg\}.
\end{align*}
where $c_4 >0$ is a constant.
Choosing $\varepsilon = C_3 \sqrt{\log(T)}$ for a large enough constant $C_3 >0$, and applying Assumption~\ref{ass-SNR} and   the fact that $C_{\mathrm {SNR}}$ is a sufficiently large constant,  we finally derive that
\begin{equation}\label{eq2-step4.1}
\mathbb{P} \Big\{ \Big \vert \frac{1}{\sqrt{e_k-s_k}} \sum_{t=s_k+1}^{e_k} \Big\langle \mathbf{P}(t) - \mathbf{A}'(t), \widehat{\mathbf{P}}^{s_k,e_k}(b_k )/ \|\widehat{\mathbf{P}}^{s_k,e_k}(b_k )\|_{\mathrm{F}}  \Big\rangle \Big\vert \geq C_3 \sqrt{ \log(T)} \Big\} 
\leq 2 T^{-c_5},
\end{equation}
where $c_5 >3$ is a constant.
A similar argument also demonstrates that
\begin{equation}\label{eq2-step4.2}
\mathbb{P} \Big\{ \Big\vert \Big\langle \widetilde{\mathbf{P}}^{s_k,e_k}(t) - \widetilde{\mathbf{A}'}^{s_k,e_k}(t), \widehat{\mathbf{P}}^{s_k,e_k}(b_k ) / \|\widehat{\mathbf{P}}^{s_k,e_k}(b_k )\|_{\mathrm{F}}   \Big\rangle \Big\vert \geq C_3 \sqrt{\log(T)} \Bigg\} 
\leq 2 T^{-c_5}.
\end{equation}

\medskip
\noindent
\textbf{Step 5.} We now consider the univariate time series defined for all $t \in (s_k, e_k)$ as
\[
y(t) = \Big\langle \mathbf{A}'(t), \widehat{\mathbf{P}}^{s_k,e_k}(b_k ) /\|\widehat{\mathbf{P}}^{s_k,e_k}(b_k )\|_{\mathrm{F}} \Big \rangle
\]
and 
\[
y^{s_k, e_k}(t) = \Big\langle \widetilde{\mathbf{A}'}^{s_k, e_k}(t), \widehat{\mathbf{P}}^{s_k,e_k}(b_k ) /\|\widehat{\mathbf{P}}^{s_k,e_k}(b_k )\|_{\mathrm{F}} \Big \rangle.
\]
Conditional on the event $\mathcal{B}$, define the corresponding mean functions
\[
 f(t) = \mathbb{E}(y(t)) = \Big \langle \mathbf{P}(t), \widehat{\mathbf{P}}^{s_k,e_k}(b_k ) /\|\widehat{\mathbf{P}}^{s_k,e_k}(b_k )\|_{\mathrm{F}} \Big\rangle,
 \]
and
\[
f^{s_k, e_k}(t) = \mathbb{E}(y^{s_k, e_k}(t)) = \Big\langle \widetilde{\mathbf{P}}^{s_k, e_k}(t), \widehat{\mathbf{P}}^{s_k,e_k}(b_k ) /\|\widehat{\mathbf{P}}^{s_k,e_k}(b_k )\|_{\mathrm{F}} \Big \rangle.
\]
The function $f(t)$ is a piecewise constant on $(s_k, e_k]$ with a single change point at $\eta_k$.   Using \eqref{eq2-step3.2}, we obtain that
\[
\vert f^{s_k,e_k}(\eta_k) \vert  
\Big\vert \geq \kappa_k \sqrt{\Delta /96},
\]
Moreover, from \eqref{eq2-step4.1}, \eqref{eq2-step4.2} and an union bound argument, we have that
\begin{align*}
\mathbb{P} \Bigg\{ \sup_{\substack{0 \leq s_k < t < e_k \leq T \\   (s_k,e_k) \mbox{ contains only one change point } \eta_k }} &  \bigg| \frac{1}{\sqrt{e_k-s_k}} \sum_{t=s_k+1}^{e_k} \left( y(t) - f(t) \right) \bigg| \\
& \hspace{0.5cm} \geq C_3 \sqrt{ \log(T)} \Bigg\} \leq 2T^{-c_6}
\end{align*}
and
\[
\mathbb{P} \Bigg\{ \sup_{\substack{0 \leq s_k < t < e_k \leq T \\  (s_k,e_k) \mbox{ contains only one change point } \eta_k }}  \big\vert y^{s_k,e_k}(t) - f^{s_k,e_k}(t) \big\vert \geq C_3 \sqrt{ \log(T)} \Bigg\} \leq 2T^{-c_6},
\]
for some constant $c_6 > 0$.  Applying Lemma 12 from \cite{wang2017optimal} with $\lambda = C \sqrt{\log(T)}$, it follows that the estimated change point $\widetilde{\eta}_k = \argmax_{s_k < t < e_k} \vert y^{s_k,e_k}(t)\vert $ is an undetected change point and satisfies for a large enough constant $C_5 > 0$,
\[
\vert \widetilde{\eta}_k - \eta_k\vert  \leq C_5 \frac{\log (T)}{\kappa_k^2},
\]
which completes the proof. 

\end{proof}

\subsection{Additional results}\label{sec:proof-auz}

\begin{proposition}\label{theorem-1}
Let $\{ b_k \}_{k=1}^{\widetilde{K}}$ denote the output of \textbf{Stage I} in Algorithm~\ref{offline-algorithm} applied to two independent adjacency tensor sequences $\{\mathbf{A}(t)\}_{t \in [T]}, \{\mathbf{B}(t)\}_{t \in [T]} \subset \{0, 1\}^{n \times n \times L} $, generated according to  Definition~\ref{def-umrdpg-f-dynamic} and satisfying Model~\ref{model-1} and Assumption~\ref{ass-SNR}. Suppose the threshold $\tau$ is chosen such that 
\begin{equation}\label{eq-tau-def-thm_f_new}
   c_{\tau,1}   n\sqrt{L} \log^{3/2}(T) <   \tau  < c_{\tau,2} \kappa^2 \Delta,
\end{equation}
where $c_{\tau,1}, c_{\tau,2} > 0$ are sufficiently large and small absolute constants, respectively.

Then, it holds that 
\[
\mathbb{P} \Big\{ \widetilde{K} = K \mbox{ and } \vert b_k - \eta_k \vert  \leq \tilde{\epsilon}, \, \forall k \in [K] \Big\} \geq 1 - CT^{-c},
\]
where 
\[
 \tilde{\epsilon} = C_{\tilde{\epsilon}}       \log(T) \bigg\{ \frac{n \sqrt{L} \log^{1/2}(T)}{\kappa^2} + \frac{\sqrt{\Delta}}{\kappa}  \bigg\},
\]
and $C_{\tilde{\epsilon}}, C, c >0$ are absolute constants. 
\end{proposition}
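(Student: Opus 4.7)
}

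The plan is to follow the standard architecture for analyzing seeded binary segmentation (\cite{kovacs2023seeded}), combined with the two-independent-sample CUSUM-inner-product construction used in \cite{wang2021optimal} for single-layer networks, adapted here to the multilayer tensor setting. The proof decomposes into a concentration step and an induction over the recursion tree. First I would define the good event
\[
\mathcal{E} = \Big\{ \max_{(\alpha,\beta]\in\mathcal{J}}\max_{\alpha<t<\beta} \big|\langle \widetilde{\mathbf{A}}^{\alpha,\beta}(t),\widetilde{\mathbf{B}}^{\alpha,\beta}(t)\rangle - \|\widetilde{\mathbf{P}}^{\alpha,\beta}(t)\|_{\mathrm F}^2\big|\le C_1 n\sqrt{L}\log^{3/2}(T)\Big\},
\]
and verify $\P(\mathcal{E})\ge 1-CT^{-c}$. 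Writing $\widetilde{\mathbf{A}}=\widetilde{\mathbf{P}}+\widetilde{\mathbf{E}}_A$ and $\widetilde{\mathbf{B}}=\widetilde{\mathbf{P}}+\widetilde{\mathbf{E}}_B$ with independent mean-zero tensor CUSUM errors, the deviation splits into two linear terms $\langle\widetilde{\mathbf{P}},\widetilde{\mathbf{E}}_A+\widetilde{\mathbf{E}}_B\rangle$, controlled by Hoeffding because the CUSUM weights satisfy $\sum_u (\omega^t_{s,e}(u))^2=1$ and $\|\widetilde{\mathbf{P}}\|_{\mathrm F}\lesssim \kappa\sqrt{\Delta}$, and a bilinear term $\langle\widetilde{\mathbf{E}}_A,\widetilde{\mathbf{E}}_B\rangle$, which conditionally on $\widetilde{\mathbf{E}}_B$ is sub-Gaussian with proxy $\|\widetilde{\mathbf{E}}_B\|_{\mathrm F}^2\lesssim n^2 L$ w.h.p. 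A union bound over $|\mathcal{J}|=O(T)$ seeded intervals and the $O(T)$ candidate split points, combined with sub-Gaussian / sub-exponential Bernoulli tails, yields the claimed $n\sqrt{L}\log^{3/2}(T)$ rate.

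Working on $\mathcal{E}$, I would argue by induction on the depth of the recursion $\mathrm{SBS}((s,e),\tau,\mathcal{J})$, maintaining the invariant that every previously placed $b_k\in\widetilde{\mathcal{C}}$ satisfies $|b_k-\eta_{j(k)}|\le\tilde\epsilon$ for some distinct true change point $\eta_{j(k)}$, and that the endpoints $s,e$ of the current call are either $\{0,T\}$ or lie within $\tilde\epsilon$ of a true change point. Let $\mathcal{K}_{s,e}=\{\eta_k : s<\eta_k<e\}$ and analyze two cases. \emph{False-alarm control:} if $\mathcal{K}_{s,e}=\varnothing$, then $\widetilde{\mathbf{P}}^{\alpha,\beta}(t)=0$ for every shrunk seeded interval $(\alpha,\beta]\subseteq(s,e]$ and every $t$, so on $\mathcal{E}$ the maximum statistic is at most $C_1 n\sqrt{L}\log^{3/2}(T)<\tau$ by the lower bound in \eqref{eq-tau-def-thm_f_new}, and the call terminates without spurious detection. \emph{Detection:} if $\mathcal{K}_{s,e}\ne\varnothing$, the invariant together with $\tilde\epsilon\le\Delta/6$ (guaranteed by Assumption~\ref{ass-SNR} for $C_{\mathrm{SNR}}$ large enough) ensures that some $\eta_k\in\mathcal{K}_{s,e}$ is separated by $\gtrsim\Delta$ from both $s$ and $e$. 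The seeded-interval construction of Definition~\ref{def-seeded} then guarantees an interval $\mathcal{I}=(\alpha',\beta']\in\mathcal{J}$ whose $64^{-1}$-trimmed version $(\alpha,\beta]$ satisfies $\eta_k\in(\alpha,\beta]\subseteq(s,e]$ with $\min(\eta_k-\alpha,\beta-\eta_k)\gtrsim \beta-\alpha\gtrsim\Delta$. Applying Lemma~\ref{lemma-s-15} yields $\|\widetilde{\mathbf{P}}^{\alpha,\beta}(\eta_k)\|_{\mathrm F}^2\gtrsim\kappa^2\Delta$, and the upper bound in \eqref{eq-tau-def-thm_f_new} forces $a_{\mathcal{I}^*}>\tau$, so detection occurs.

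Once detection occurs at $\mathcal{I}^*=(\alpha^*,\beta^*]$ with maximizer $b_{\mathcal{I}^*}$, the localization rate is obtained by matching the realized statistic against its population counterpart. Lemma~\ref{lemma-s-15} shows that $t\mapsto\|\widetilde{\mathbf{P}}^{\alpha^*,\beta^*}(t)\|_{\mathrm F}^2$ is piecewise quadratic around each $\eta_k\in(\alpha^*,\beta^*]$ with curvature of order $\kappa_k^2/(\beta^*-\alpha^*)$ and peak value of order $\kappa_k^2 \cdot\min(\eta_k-\alpha^*,\beta^*-\eta_k)$. Comparing, via $\mathcal{E}$, this population curve to the observed CUSUM-inner-product process and applying a standard CUSUM localization argument (analogous to Lemma~12 of \cite{wang2017optimal} and the corresponding step in \cite{wang2021optimal}) yields
\[
|b_{\mathcal{I}^*}-\eta_k|\le C_{\tilde\epsilon}\log(T)\Big\{\tfrac{n\sqrt{L}\log^{1/2}(T)}{\kappa^2}+\tfrac{\sqrt{\Delta}}{\kappa}\Big\}=\tilde\epsilon,
\]
which preserves the induction invariant for the child calls on $(s,b_{\mathcal{I}^*})$ and $(b_{\mathcal{I}^*},e)$. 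Combining the two cases over all recursion levels gives $\widetilde{K}=K$ together with the claimed localization bound.

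The main obstacle is Step~1, specifically obtaining the dimension factor $n\sqrt{L}$ (rather than the naive $n\sqrt{nL}$ from the Cauchy–Schwarz bound $\|\widetilde{\mathbf{E}}_A\|_{\mathrm F}\|\widetilde{\mathbf{E}}_B\|_{\mathrm F}$) in the uniform control of the bilinear noise term $\langle\widetilde{\mathbf{E}}_A,\widetilde{\mathbf{E}}_B\rangle$. The right rate is recovered by a conditional sub-Gaussian argument, treating $\langle\widetilde{\mathbf{E}}_A,\widetilde{\mathbf{E}}_B\rangle$ as linear in $\widetilde{\mathbf{E}}_A$ given $\widetilde{\mathbf{E}}_B$, and then controlling $\|\widetilde{\mathbf{E}}_B\|_{\mathrm F}$ by a concentration inequality for sums of independent sub-exponential Bernoulli squares; the $\log^{3/2}(T)$ factor then absorbs the union bound over the $O(T)$ seeded intervals, the $O(T)$ candidate split points, and a separate tail event on $\|\widetilde{\mathbf{E}}_B\|_{\mathrm F}$. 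A secondary subtlety is verifying that the $64^{-1}$-trimming in Stage~I is large enough so that the seeded-interval balance property used to lower-bound the signal survives the insertion of previously detected endpoints into the recursion, which is where the induction hypothesis $\tilde\epsilon\le\Delta/6$ is essential.
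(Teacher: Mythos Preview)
Your overall architecture---concentration event plus induction over the SBS recursion tree with separate false-alarm and detection cases---matches the paper's proof. However, your concentration event $\mathcal{E}$ as written does not hold, and this is not a cosmetic issue: it is exactly where the second term of $\tilde\epsilon$ comes from.

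The deviation $\langle \widetilde{\mathbf{A}}^{\alpha,\beta}(t),\widetilde{\mathbf{B}}^{\alpha,\beta}(t)\rangle-\|\widetilde{\mathbf{P}}^{\alpha,\beta}(t)\|_{\mathrm F}^2$ splits, as you say, into the bilinear term $\langle\widetilde{\mathbf{E}}_A,\widetilde{\mathbf{E}}_B\rangle$ and the cross terms $\langle\widetilde{\mathbf{P}},\widetilde{\mathbf{E}}_A+\widetilde{\mathbf{E}}_B\rangle$. The bilinear piece is indeed $O(n\sqrt{L}\log^{3/2}(T))$ by the conditional argument you describe. But Hoeffding on the cross term gives a fluctuation of order $\|\widetilde{\mathbf{P}}^{\alpha,\beta}(t)\|_{\mathrm F}\sqrt{\log(T)}$, and there is \emph{no} assumption forcing $\|\widetilde{\mathbf{P}}\|_{\mathrm F}\lesssim \kappa\sqrt{\Delta}\lesssim n\sqrt{L}\log(T)$; Assumption~\ref{ass-SNR} is a \emph{lower} bound on $\kappa\sqrt{\Delta}$, not an upper bound, and $\|\widetilde{\mathbf{P}}\|_{\mathrm F}$ can be as large as $\kappa_{\max}\sqrt{T}$. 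Consequently your uniform event $\mathcal{E}$ fails. The paper (via Lemma~S.4 of \cite{wang2021optimal}) works instead on the signal-dependent event
\[
\big|\langle\widetilde{\mathbf{A}}^{s,e}(t),\widetilde{\mathbf{B}}^{s,e}(t)\rangle-\|\widetilde{\mathbf{P}}^{s,e}(t)\|_{\mathrm F}^2\big|\le C\log(T)\big\{\|\widetilde{\mathbf{P}}^{s,e}(t)\|_{\mathrm F}+n\sqrt{L}\log^{1/2}(T)\big\},
\]
and this extra $\|\widetilde{\mathbf{P}}\|_{\mathrm F}$ term is exactly what, after dividing by $\|\widetilde{\mathbf{P}}^{\alpha^*,\beta^*}(\eta_k)\|_{\mathrm F}^2\asymp\kappa^2\Delta$ in the localization lemma, produces the $\sqrt{\Delta}\log(T)/\kappa$ contribution to $\tilde\epsilon$. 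With only your uniform bound, the localization step would return $n\sqrt{L}\log^{3/2}(T)/\kappa^2$ alone, contradicting the $\tilde\epsilon$ you display. The fix is simply to keep the event signal-dependent and carry both terms through Steps~2 and~3.

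A smaller point: your false-alarm case ``$\mathcal{K}_{s,e}=\varnothing\Rightarrow\widetilde{\mathbf{P}}^{\alpha,\beta}(t)=0$'' skips the situation where $(s,e)$ still contains a true $\eta_k$ within $\tilde\epsilon$ of an endpoint (a previously-detected change point). For such intervals $\widetilde{\mathbf{P}}^{\alpha',\beta'}(t)$ need not vanish; it is the $64^{-1}$-trimming combined with $\tilde\epsilon\le(\beta'-\alpha')/64$ that pushes these boundary change points out of $(\alpha,\beta)$. The paper treats this explicitly as three sub-cases in its Step~1; you allude to the trimming only in the detection context, but it is needed here too.
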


\begin{proof}
The proof presented here is a minor modification of Theorem 1 in \cite{wang2021optimal}. For completeness, we include the full details below.

For $0 \leq s < t < e \leq T $, we define the event
\[
\mathcal{A}(s, t, e) = \Big\{
\big| 
\big\langle \widetilde{\mathbf{A}}^{s,e}(t), \widetilde{\mathbf{B}}^{s,e}(t) \big\rangle - \|\widetilde{\mathbf{P}}^{s,e}(t)\|_{\mathrm{F}}^2 
\big| 
\leq C_{\mathcal{A}} \log(T) \big( \|\widetilde{\mathbf{P}}^{s,e}(t)\|_{\mathrm{F}} + \log^{1/2}(T) n\sqrt{L} \big) 
\Big\},
\]
where $\widetilde{\mathbf{P}}^{s,e}(t)$ is defined in \eqref{def-average} and $C_{\mathcal{A}} > 0 $ is a constant.
Due to Lemma S.4 in \cite{wang2021optimal}, it holds that $ \mathbb{P}(\mathcal{A}(s,t,e)^c) \leq C_1 T^{-c_1}$ for some constants $C_1>0$ and $c_1 > 3$. By an union bound argument, it holds that 
\[
\mathbb{P}(\mathcal{A}) = \mathbb{P} \bigg\{ \bigcup_{0 \leq s < t < e \leq T} \mathcal{A}(s,t,e) \bigg\} \geq 1 -  C_1 T^{3-c_1}.
\]
All subsequent analysis in this proof is carried out under the event $\mathcal{A} $.

We aim to demonstrate that, conditioned on the event $\mathcal{A}$ and assuming that the algorithm has accurately detected and localized change points so far, the procedure will also successfully identify any remaining undetected change point, if one exists,  and estimate its location within an error of $\tilde{\epsilon} $. To this end, it suffices to consider an arbitrary time interval $ 0 \leq s  < e  \leq  T$ that satisfies
\[
\eta_{r-1} \leq s < \eta_r < \dots < \eta_{r+q} < e \leq  \eta_{r+q+1}, \quad q \geq -1,
\]
and
\[
\max \big\{ \min \{\eta_r - s, s - \eta_{r-1} \}, \min \{\eta_{r+q+1} - e, e - \eta_{r+q} \} \big\} \leq \tilde{\epsilon},
\]
where $q = -1$ indicates that there is no change point contained in $(s,e)$ and 
\[ 
\tilde{\epsilon} = C_{\tilde{\epsilon}}       \log(T) \bigg\{ \frac{n \sqrt{L} \log^{1/2}(T)}{\kappa^2} + \frac{\sqrt{\Delta}}{\kappa}  \bigg\}.
\]
with an absolute constant $ C_{\tilde{\epsilon}} >0$.
By Assumption~\ref{ass-SNR}, we have that 
\begin{equation}\label{eq-epsilon}
\tilde{\epsilon} \leq C_{\tilde{\epsilon}} \Delta \bigg( \frac{1}{C_{\mathrm{SNR}}^2 \log^{1/2 }(T)} + \frac{1}{C_{\mathrm{SNR}} \sqrt{nL^{1/2}}} \bigg) \leq \Delta/64,
\end{equation}
where the final inequality follows that $C_{\mathrm{SNR}}$ is large enough. Consequently, for any change point~$\eta_k$, it must be that either $\vert \eta_k - s \vert \leq \tilde{\epsilon} $ or $\vert \eta_k - s  \vert \geq \Delta - \tilde{\epsilon}  \geq 3\Delta/4$. This implies that if
$\min\{ \vert e -\eta_k \vert, \vert\eta_k - s\vert \} \leq \tilde{\epsilon}$,  then $\eta_k$ corresponds to a change point that has already been detected and estimated within an error of at most $\tilde{\epsilon}$ during the previous induction step. We refer to a change point $ \eta_k $ in $(s,e)$ as undetected if 
\[
\min \{ \eta_k - s, \eta_k - e \} \geq 3\Delta/4.
\]

To complete the induction step, it suffices to show that $ \mathrm{SBS}\big( (s, e), \tau,\mathcal{J} \big)$  satisfies the following tow properties:
$(i)$ It does not detect any new change point within $(s,e)$ if all change points in that interval have already been detected; and 
$(ii)$ It detects a point  $b$ in $(s,e)$ such that $\vert\eta_k - b\vert \leq \tilde{\epsilon} $ if there exists at least one previously undetected change point $\eta_k$ in $(s,e)$.

\medskip \noindent
\textbf{Step 1.} Assume that there are no undetected change points within the interval $(s,e)$. Then, for any interval $  (\alpha', \beta'] \in  \mathcal{J}$ with $ (\alpha', \beta'] \subseteq (s,e]$, one of the following scenarios must hold:
$(i)$ The interval $(\alpha', \beta')$ contains no change points;
$(ii)$ The interval $(\alpha', \beta')$ contains exactly one change point $\eta_k$ and $\min\{\eta_k - \alpha', \beta' - \eta_k\} \leq \tilde{\epsilon}$; 
$(iii)$ The interval  $(\alpha', \beta')$ contains two change points  $\eta_k$ and $\eta_{k+1}$, and $\max\{\eta_k - \alpha', \beta' - \eta_{k+1}\} \leq \tilde{\epsilon}$.

We focus on analyzing the scenario $(iii)$, as the other two scenarios are similar and more straightforward. 
If scenario $(iii)$ holds, then by \eqref{eq-epsilon}, we have
\[
\tilde{\epsilon} \leq \Delta/64 \leq  (\beta' - \alpha')/64,
\]
This implies that the interval
\[
(\alpha, \beta] = \big(\alpha' + 64^{-1}(\beta' - \alpha') ,\beta' - 64^{-1}(\beta' - \alpha') \big],
\]
contains no change points. Note that $\widetilde{\mathbf{P}}^{\alpha, \beta}(t) = 0$ for all $t \in (\alpha, \beta)$, since there are no true change points within $(\alpha, \beta)$. Moreover, by the event $\mathcal{A}$, 
\[
\max_{\alpha < t < \beta} \big \langle \widetilde{\mathbf{A}}^{\alpha, \beta}(t), \widetilde{\mathbf{B}}^{\alpha, \beta}(t) \big \rangle \leq  C_{\mathcal{A}}  n\sqrt{L} \log^{3/2}(T).
\]
Therefore, if the input parameter $\tau$ satisfies
\[
  \tau >  C_{\mathcal{A}}  n\sqrt{L} \log^{3/2}(T),
\]
Algorithm~\ref{offline-algorithm} will correctly reject the existence of undetected change points.

\medskip \noindent
\textbf{Step 2.} Now suppose there exists a change point $\eta_k \in (s, e)$ such that 
\[
\min\{\eta_k - s, \eta_k - e\} \geq 3\Delta/4.
\]
Let $a_{\mathcal{I}}, b_{\mathcal{I}}$ and $\mathcal{I}^*$ be as defined in the procedure $\mathrm{SBS}\big((s, e), \tau, \mathcal{J} \big)$.  Denote  $\mathcal{I}^* = (\alpha^{* \prime}, \beta^{* \prime}]$. By Lemma 8 in \cite{madrid2022change}, for any change point $\eta_k \in (s,e)$ satisfying $\min\{\eta_k - s, e - \eta_k\} \geq 3\Delta/4$, 
there exists an interval $(\alpha', \beta'] \subseteq (s,e]$ containing only one $\eta_k$ such that
\[
\eta_k - 3\Delta/4 \leq \alpha' \leq \eta_k - 3 \Delta/16 \quad \mbox{and} \quad \eta_k + 3\Delta/16 \leq \beta' \leq \eta_k + 3\Delta/4.
\]
Since $(\alpha, \beta] = [\alpha' + (\beta' - \alpha')/64, \beta' - (\beta' - \alpha')/64]$, we have 
\[
\eta_k - \Delta3/4 \leq \alpha \leq \eta_k - \Delta/8 \quad \mbox{and} \quad \eta_k + \Delta/8 \leq \beta \leq \eta_k + \Delta3/4.
\]
On the event $\mathcal{A}$, it holds that
\begin{equation}\label{theorem1_eq1}
\big\langle \widetilde{\mathbf{A}}^{\alpha,\beta}(\eta_k), \widetilde{\mathbf{B}}^{\alpha,\beta}(\eta_k) \big\rangle \geq \|\widetilde{\mathbf{P}}^{\alpha,\beta}(\eta_k)\|^2_{\mathrm{F}} -  C_{\mathcal{A}} \log(T) \big\{\log^{1/2}(T)n \sqrt{L} + \|\widetilde{\mathbf{P}}^{\alpha,\beta}(\eta_k)\|_{\mathrm{F}} \big\}.
\end{equation}
Furthermore, by Lemma~\ref{lemma-s-15}, it hold that 
\[
\|\widetilde{\mathbf{P}}^{\alpha,\beta}(\eta_k)\|^2_{\mathrm{F}} = \frac{(\eta_k - \alpha)(\beta - \eta_k)}{\beta - \alpha} \kappa_k^2. 
\]
Then we can derive that 
\begin{equation}\label{theorem1_eq2}
\|\widetilde{\mathbf{P}}^{\alpha,\beta}(\eta_k)\|^2_{\mathrm{F}} \geq \kappa_k^2\Delta/16  \quad \mbox{and} \quad \|\widetilde{\mathbf{P}}^{\alpha,\beta}(\eta_k)\|^2_{\mathrm{F}} \leq 3\kappa_k^2\Delta/4.
\end{equation}
Combining \eqref{theorem1_eq1} and \eqref{theorem1_eq2}, and using Assumption~\ref{ass-SNR} along with fact that $ C_{\mathrm{SNR}}$ is a sufficiently large constant, we obtain
\begin{align*}
\big\langle \widetilde{\mathbf{A}}^{\alpha,\beta}(\eta_k), \widetilde{\mathbf{B}}^{\alpha,\beta}(\eta_k) \big\rangle \geq  \kappa_k^2\Delta /16 - \kappa_k^2\Delta /64 - \kappa_k^2\Delta /64 \geq   \kappa_k^2\Delta /32.
\end{align*}
By the definition of $\mathcal{I}^*$,  it follows that 
\begin{equation}\label{eq-optimal}
a_{\mathcal{I}^*} = \big\langle\tilde{A}^{\alpha^*, \beta^*}(b_{\mathcal{I}^*}), \tilde{B}^{\alpha^*, \beta^*}(b_{\mathcal{I}^*}) \big\rangle \geq (\kappa_{\max}^{s,e})^2 \Delta/32,
\end{equation}
where
\[
\kappa_{\max}^{s,e} = \max \big\{\kappa_k \colon \min\{\eta_k - s, e - \eta_k\} \geq 3\Delta/4 \big\}.
\]
Therefore, if the threshold $\tau$ satisfies
\[
\tau <  \kappa^2 \Delta /32,
\]
Algorithm~\ref{offline-algorithm} will consistently detect the existence of any previously undetected change points within the interval.

\medskip \noindent
\textbf{Step 3.}  Suppose there exists at least one undetected change point $\eta_k \in (s,e)$ such that
\[
\min\{\eta_k - s, \eta_k - e\} \geq 3\Delta/4.
\]
Let $a_{\mathcal{I}}, b_{\mathcal{I}}$ and $\mathcal{I}^*$ be defined according to the procedure $\mathrm{SBS} \big((s, e), \tau, \mathcal{J} \big)$, and  denote  $\mathcal{I}^* =  (\alpha^{* \prime}, \beta^{* \prime}]$. To complete the induction step, it suffices to establish that there exists an undetected change point $\eta_k \in  (\alpha^{* \prime}, \beta^{* \prime}) $ satisfying
\begin{equation}\label{eq-step3.1}
\min\{\eta_k - \alpha^{* \prime}, \beta^{* \prime} - \eta _k \} \geq 3\Delta/4,
\end{equation}
and that
\begin{equation}\label{eq-step3.2}
|b_{\mathcal{I}^*} - \eta_k| \leq \tilde{\epsilon}.
\end{equation}

\medskip \noindent
\textbf{Step 3.1. Proof of \eqref{eq-step3.1}.}
Let
\begin{equation}\label{eq-interval-refined}
(\alpha^*, \beta^*] = (\alpha^{* \prime}+ (\beta^{*\prime} - \alpha^{*\prime})/64, \beta^{*\prime}-  (\beta^{* \prime} - \alpha^{* \prime})/64].
\end{equation}
Assume for contradiction that
\begin{equation}\label{theorem1_eq3}
\max_{\alpha^* < t < \beta^*} \big\|\widetilde{\mathbf{P}}^{\alpha^*, \beta^*}(t) \big\|_\mathrm{F}^2 < (\kappa_{\max}^{s,e})^2 \Delta/64.
\end{equation}
Then on the event $\mathcal{A}$, we obtain
\begin{align} 
& \max_{\alpha^* < t < \beta^*} \big\langle \widetilde{A}^{\alpha^*, \beta^*}(t), \widetilde{B}^{\alpha^*, \beta^*}(t) \big \rangle \nonumber\\
\leq &   \max_{\alpha^* < t < \beta^*} \big\|\widetilde{\mathbf{P}}^{\alpha^*, \beta^*}(t) \big\|_\mathrm{F}^2  + C_{\mathcal{A}} \log(T) \max_{\alpha^* < t < \beta^*} \big\|\widetilde{\mathbf{P}}^{\alpha^*, \beta^*}(t) \big\|_\mathrm{F}  + C_{\mathcal{A}} \log^{3/2}(T) n \sqrt{L} \nonumber\\
<  & (\kappa_{\max}^{s,e})^2 \Delta/64 +  C_{\mathcal{A}} \log(T)   \kappa_{\max}^{s,e} \sqrt{\Delta}/8 +  C_{\mathcal{A}} \log^{3/2}(T) n \sqrt{L}  \nonumber\\
\leq & (\kappa_{\max}^{s,e})^2 \Delta/64  +  (\kappa_{\max}^{s,e})^2 \Delta/128  +  (\kappa_{\max}^{s,e})^2 \Delta/128 = (\kappa_{\max}^{s,e})^2 \Delta/32,\nonumber
\end{align}
where the second inequality follows from \eqref{theorem1_eq3}, the third inequality follows from Assumption~\ref{ass-SNR} and the fact that $C_{\mathrm{SNR}}$ is a large enough constant. This contradicts the inequality \eqref{eq-optimal}.
Thus,
\begin{equation}\label{eq-step3.1-main}
\max_{\alpha^* < t < \beta^*} \big\|\widetilde{\mathbf{P}}^{\alpha^*, \beta^*}(t) \big\|_\mathrm{F}^2 \geq (\kappa_{\max}^{s,e})^2 \Delta/64.
\end{equation}

Now, observe
$(i)$ If $(\alpha^*, \beta^*)$ contains at least two change points, then $\beta^* - \alpha^* \geq \Delta$.
$(ii)$ If it contains exactly one change point $\eta_k$, but 
    $\min\{\eta_k - \alpha^*, \beta^* - \eta_k\} < \Delta/64$, then by Lemma~\ref{lemma-s-15}, we would have
\begin{align*}
\max_{\alpha^* < t < \beta^*} \big\|\widetilde{P}^{\alpha^*, \beta^*}(t) \big\|_{\mathrm{F}}^2
= & \big\|\widetilde{P}^{\alpha^*, \beta^*}(\eta_k) \big\|_{\mathrm{F}}^2 = \frac{(\eta_k - \alpha^*)(\beta^* - \eta_k)}{\beta^* - \alpha^*} \kappa_k^2  \nonumber\\
\leq  & \min\{\beta^* - \eta_k, \eta_k - \alpha^*\} \kappa_k^2 < (\kappa_{\max}^{s,e})^2 \Delta/64,
\end{align*} 
contradicting \eqref{eq-step3.1-main}. Therefore, it has to be the case that $\min\{\eta_k - \alpha^*, \beta^* - \eta_k\} \geq \Delta/64$. 
Moreover,  by \eqref{eq-interval-refined}, it holds that \[
\beta^{*, \prime} - \alpha^{*, \prime}  \geq \beta^* - \alpha^* \geq \Delta/64.
\] 
Then, using Assumption~\ref{ass-SNR}  and the the fact that $C_{\mathrm{SNR}}$ is large enough, we have that
\[
\tilde{\epsilon} \leq C_{\tilde{\epsilon}} \Delta \bigg( \frac{1}{C_{\mathrm{SNR}}^2 \log^{1/2 + 2\xi}(T)} + \frac{1}{C_{\mathrm{SNR}} \sqrt{nL^{1/2}} \log^{1+\xi}(T)} \bigg) \leq (\beta^{*, \prime} - \alpha^{*, \prime} )/64,
\]
Hence, by a similar argument as in \textbf{Step 1},  no previously detected change point lies in $(\alpha^*, \beta^*)$.  Note that by \eqref{eq-optimal}, there is at least one undetected change point in $(\alpha^*, \beta^*)$.

\medskip \noindent
\textbf{Step 3.2. Proof of  \eqref{eq-step3.2}.}  To this end, we apply  Lemma S.5 in \cite{wang2021optimal}. Define
\begin{equation} \label{eq-lemma-1}
\lambda = \max_{\alpha^* < t < \beta^*} \Big\vert \big\langle \widetilde{\mathbf{A}}^{\alpha^*, \beta^*}(t), \widetilde{\mathbf{B}}^{\alpha^*, \beta^*}(t) \big\rangle - \big\|\widetilde{\mathbf{P}}^{\alpha^*, \beta^*}(t)\big\|_{\mathrm{F}}^2  \Big\vert. 
\end{equation}
From \eqref{eq-step3.1-main}, Assumption~\ref{ass-SNR} and the fact that $C_{\mathrm{SNR}}$ is a sufficiently large constant, it follows that 
\begin{align}\label{eq-step3.2-main} 
   c_2 \max_{\alpha^* < t < \beta^*}  \big\|  \widetilde{\mathbf{P}}^{\alpha^*, \beta^*} (t) \big \|_{\mathrm{F}}^2 /2  \geq  &\max\big\{C_{\mathcal{A}} \log(T) \max_{\alpha^* < t < \beta^*}   \big\|  \widetilde{\mathbf{P}}^{\alpha^*,
   \beta^*} (t) \big \|_{\mathrm{F}},\nonumber \\
    & \hspace{1cm} 
    C_{\mathcal{A}} \log^{3/2}(T) n\sqrt{L} \Big\},
\end{align}
where $c_2>0$ is a small enough constant. By the definition of the event $\mathcal{A}$, we obtain
\begin{align}\label{eq-lemma-2}
    \lambda \leq &  C_{\mathcal{A}} \log(T) \big\{ \max_{ \alpha^* < t < \beta^*} \big\|  \widetilde{\mathbf{P}}^{\alpha^*, \beta^*} (t) \big \|_{\mathrm{F}} + \log^{1/2}(T) n\sqrt{L} \big) \big\}  \leq    c_2 \max_{ \alpha^* < t < \beta^*} \big\|  \widetilde{\mathbf{P}}^{\alpha^*, \beta^*} (t) \big \|_{\mathrm{F}}^2,
\end{align}
where the last inequality follows from \eqref{eq-step3.2-main}.
Note that \eqref{eq-optimal}, \eqref{eq-lemma-1} and \eqref{eq-lemma-2} verify conditions (2), (3),  (4) of  Lemma S.5 in \cite{wang2021optimal}, respectively. Therefore,  Lemma S.5 in \cite{wang2021optimal} implies that there exists an undetected change point $\eta_k$ within $(s, e)$ such that
\begin{align}\label{eq-lemma-3}
\vert \eta_k - b_{\mathcal{I}^*}\vert  \leq \frac{C_3 \Delta \lambda}{\|\widetilde{P}^{\alpha^*, \beta^*}(\eta_k)\|^2_{\mathrm{F}}} \quad \mbox{and} \quad \|\widetilde{P}^{\alpha^*, \beta^*}(\eta_k)\|^2_{\mathrm{F}} \geq c_4 \max_{\alpha^* \leq t \leq \beta^*} \|\widetilde{P}^{\alpha^*, \beta^*}(t)\|^2_{\mathrm{F}},
\end{align}
where $C_3, c_4 > 0$ are constants. Then combining \eqref{eq-lemma-2} and \eqref{eq-lemma-3}, we can derive that
\begin{align}
\vert \eta_k - b_{\mathcal{I}^*}\vert    \leq & \frac{C_3 C_{\mathcal{A}}\Delta  \log(T) \big\{ \max_{ \alpha^* < t < \beta^*} \big\|  \widetilde{\mathbf{P}}^{\alpha^*, \beta^*} (t) \big \|_{\mathrm{F}} + \log^{1/2}(T) n\sqrt{L} \big) \big\}\ }{ c_4 \max_{\alpha^* \leq t \leq \beta^*} \|\widetilde{\mathbf{P}}^{\alpha^*, \beta^*}(t)\|^2_{\mathrm{F}}} \nonumber\\
= &  \frac{C_3 C_{\mathcal{A}}  \log(T) }{c_4} \bigg\{ \frac{\Delta}{ \max_{ \alpha^* < t < \beta^*} \big\|  \widetilde{\mathbf{P}}^{\alpha^*, \beta^*} (t) \big \|_{\mathrm{F}}} + \frac{\Delta \log^{1/2}(T) n\sqrt{L}  }{ \max_{\alpha^* \leq t \leq \beta^*} \|\widetilde{P}^{\alpha^*, \beta^*}(t)\|^2_{\mathrm{F}}}\bigg\} \nonumber\\
\leq &  \frac{C_3 C_{\mathcal{A}}  \log(T) }{c_4} \bigg\{ \frac{8\sqrt{\Delta} }{ \kappa } + \frac{64\log^{1/2}(T) n\sqrt{L} }{ \kappa^2 } \bigg\} \nonumber\\
\leq & C_5  \log(T) \bigg\{ \frac{\sqrt{\Delta} }{ \kappa } + \frac{\log^{1/2}(T) n\sqrt{L} }{ \kappa^2 } \bigg\}, \nonumber
\end{align}
where the second inequality follows form \eqref{eq-step3.1-main} and $C_5 > 0$ is an constant. This completes the induction step and therefore, the proof.

\end{proof}

\begin{lemma}\label{lemma-s-2}
Let $\{\mathbf{A}(t)\}_{t \in [T]}$ follow $\mathrm{D}\mbox{-}\mathrm{MRDPGs}$ as in Definition~\ref{def-umrdpg-f-dynamic}. Let $ \mathbf{V} \in \mathbb{R}^{n\times n \times L}$ and $ \{w_t\}_{t=1}^T \subset \mathbb{R} $ satisfy $\sum_{t=1}^T w_t^2 = 1$. Then for any $ \varepsilon > 0 $, it holds that
\[
\mathbb{P} \bigg( \bigg \vert   \sum_{t=1}^T w_t \big\langle \mathbf{A}(t) - \mathbf{P}(t), \mathbf{V}/ \| \mathbf{V}\|_{\mathrm{F}} \big\rangle \bigg \vert  \geq \varepsilon \bigg) 
\leq 2 \exp \bigg( - c \frac{\varepsilon^2}{1  + \varepsilon  \| \mathbf{V}\|_{\mathrm{F}}^{-1}\| \mathbf{V}\|_{\infty}\max_{1 \leq t \leq T} \vert w_t \vert }  \bigg),
\]
where $c> 0$ is an absolute constant and  $ \| \mathbf{V}\|_{\infty}  = \max_{i, j \in [n], l \in [L]} \vert V_{i, j, l}\vert $.
\end{lemma}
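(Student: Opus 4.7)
The quantity $S := \sum_{t=1}^T w_t \langle \mathbf{A}(t) - \mathbf{P}(t), \mathbf{V}/\|\mathbf{V}\|_{\mathrm{F}}\rangle$ is a linear combination of independent, centered, bounded Bernoulli random variables, so the natural plan is to apply a classical Bernstein-type inequality. First, I would expand $S$ over the indices $(t,i,j,l)$ with $1 \le i \le j \le n$ and $l \in [L]$, so that the underlying sources of randomness $\mathbf{A}_{i,j,l}(t) - \mathbf{P}_{i,j,l}(t)$ are mutually independent: independence across $t$ is built into Definition~\ref{def-umrdpg-f-dynamic}, and independence across $(i,j)$ with $i\le j$ and across $l$ comes from the MRDPG likelihood in Definition~\ref{umrdpg-f}. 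The coefficient of the $(t,i,j,l)$ summand is
\[
c_{i,j,l}^{(t)} \;=\; \frac{w_t\,\bigl(V_{i,j,l} + V_{j,i,l}\,\mathbbm{1}\{i<j\}\bigr)}{\|\mathbf{V}\|_{\mathrm{F}}},
\]
where the extra term accounts for the collapsing of symmetric off-diagonal pairs in the inner product.

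Next, I would control the two ingredients that drive Bernstein's bound. The almost-sure magnitude of each summand is bounded by $M := 2\,(\max_{1\le t \le T} |w_t|)\,\|\mathbf{V}\|_{\infty}/\|\mathbf{V}\|_{\mathrm{F}}$, since $|V_{i,j,l}| \le \|\mathbf{V}\|_{\infty}$. Using $\Var(\mathbf{A}_{i,j,l}(t)) \le 1/4$ together with the identity $\sum_{i,j,l} V_{i,j,l}^2 = \|\mathbf{V}\|_{\mathrm{F}}^2$ and the normalization $\sum_t w_t^2 = 1$, the total variance satisfies
\[
\sum_{t,\,i\le j,\,l} \bigl(c_{i,j,l}^{(t)}\bigr)^2 \Var(\mathbf{A}_{i,j,l}(t)) \;\le\; C_1 \sum_{t=1}^T w_t^2 \;=\; C_1,
\]
for an absolute constant $C_1>0$ absorbing the factors of $2$ from symmetrization.

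With the variance proxy $\sigma^2 \le C_1$ and the uniform bound $M$ in hand, the classical Bernstein inequality for sums of independent bounded centered random variables yields
\[
\mathbb{P}(|S| \ge \varepsilon) \;\le\; 2\exp\!\Bigl(-\tfrac{\varepsilon^2/2}{\sigma^2 + M\varepsilon/3}\Bigr) \;\le\; 2\exp\!\Bigl(-c\,\tfrac{\varepsilon^2}{1+\varepsilon M}\Bigr),
\]
and substituting the expression for $M$ recovers the claimed bound. I do not expect a real obstacle here: the only mildly subtle point is being careful about the symmetry of $\mathbf{A}(t)$ so as not to double-count off-diagonal Bernoulli entries when passing from $\langle \mathbf{A}(t), \mathbf{V}\rangle$ to a sum over genuinely independent summands, but the resulting constant factor is harmlessly absorbed into the absolute constant $c$.
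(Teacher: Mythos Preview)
Your proposal is correct and follows essentially the same route as the paper: expand the sum over all indices, control the variance proxy via $\sum_t w_t^2 = 1$ and $\sum_{i,j,l} V_{i,j,l}^2 = \|\mathbf{V}\|_{\mathrm{F}}^2$, bound the maximal coefficient by $\|\mathbf{V}\|_\infty \max_t |w_t|/\|\mathbf{V}\|_{\mathrm{F}}$, and invoke Bernstein's inequality. If anything, your treatment of the undirected symmetry (collapsing to $i\le j$ and absorbing the factor of $2$ into $c$) is more careful than the paper's version, which simply sums over all $i,j\in[n]$ and asserts mutual independence.
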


\begin{proof}
By definition of the tensor inner product, we have that
\[
\sum_{t=1}^T w_t \big\langle \mathbf{A}(t) - \mathbf{P}(t), \mathbf{V}/ \| \mathbf{V}\|_{\mathrm{F}} \big\rangle  
=  \sum_{t=1}^{T} \sum_{i = 1}^n \sum_{j=1}^n \sum_{l=1}^L   \| \mathbf{V}\|_{\mathrm{F}}^{-1} w_t \mathbf{V}_{i, j, l}  \Big\{ \big(\mathbf{A}(t) \big)_{i, j, l} - \big( \mathbf{P}(t) \big)_{i, j, l}  \Big\}. 
\]
We can derive  that
\[
 \sum_{t=1}^{T} \sum_{i = 1}^n \sum_{j=1}^n \sum_{l=1}^L   \| \mathbf{V}\|_{\mathrm{F}}^{-2} w_t^2 \mathbf{V}_{i, j, l}^2 = 1,
\]
and 
\[
\max_{t \in [T], i, j \in [n], l \in [L]} \Big\{  \| \mathbf{V}\|_{\mathrm{F}}^{-1} w_t \mathbf{V}_{i, j, l}\Big\} \leq  \| \mathbf{V}\|_{\mathrm{F}}^{-1}  \| \mathbf{V}\|_{\infty}  \max_{1 \leq t \leq T}\vert w_t \vert.
\]
Since 
\[
\{ \big(\mathbf{A}(t)\big)_{i, j, l} - \big(\mathbf{P}(t)\big)_{i, j, l} \}_{i,j\in [n], l \in [L], t \in [T]}
\]
are mutually independent centered Bernoulli random variables By Bernstein inequality \citep[e.g.~Theorem 2.8.2 in][]{vershynin2018high}, it holds with an absolute constant $c_0 > 0$ that 
\[
\mathbb{P} \bigg( \bigg \vert   \sum_{t=1}^T w_t \big\langle \mathbf{A}(t) - \mathbf{P}(t), \mathbf{V}/ \| \mathbf{V}\|_{\mathrm{F}} \big\rangle \bigg \vert  \geq \varepsilon \bigg) 
\leq 2 \exp \bigg( - c_0 \frac{\varepsilon^2}{1  + \varepsilon  \| \mathbf{V}\|_{\mathrm{F}}^{-1} \| \mathbf{V}\|_{\infty}\max_{1 \leq t \leq T} \vert w_t \vert } \bigg),
\]
which completes the proof. 

\end{proof}

\begin{lemma}\label{lemma-s-15}

Suppose the adjacency tensor sequence $ \{\mathbf{B}(t)\}_{t \in [T]} \subset \{0, 1\}^{n \times n \times L}$ is generated according to Definition~\ref{def-umrdpg-f-dynamic} and satisfy Model~\ref{model-1}. For any $ 0 \leq s < t< e \leq T$, let $\widetilde{\mathbf{P}}^{s, e }(t)$ be defined 
as in \eqref{def-average}.
If $(s, e)$ contains exactly one change point $\eta_k$, then for any $t \in (s, e)$
\[
\big\|\widetilde{\mathbf{P}}^{s,e}(t)\big\|_{\mathrm{F}}^2 
=
\begin{cases}
\frac{t - s}{(e - s)(e - t)}(e- \eta_k)^2 
\kappa_k^2,
& s <t \le \eta_k, \\
\frac{e - t}{(e - s)(t - s)}(\eta_k - s)^2 
\kappa_k^2,
& \eta_k <  t <e.
\end{cases}
\]
\end{lemma}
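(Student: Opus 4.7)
The plan is to prove this by a direct computation from the definitions in Definition~\ref{def-cusum-f} and Model~\ref{model-1}. Since the interval $(s,e)$ contains exactly one change point $\eta_k$, the expected adjacency tensors $\mathbf{P}(u)$ take only two distinct values on $\{s+1,\dots,e\}$: namely $\mathbf{P}(u)=\mathbf{P}(\eta_k)$ for $u\in(s,\eta_k]$ and $\mathbf{P}(u)=\mathbf{P}(\eta_{k+1})$ for $u\in(\eta_k,e]$. Consequently, all the sums defining $\widetilde{\mathbf{P}}^{s,e}(t)=\sum_{u=s+1}^e \omega_{s,e}^t(u)\mathbf{P}(u)$ reduce to counting indices in each of these two regimes and collecting coefficients in front of the two tensors $\mathbf{P}(\eta_k)$ and $\mathbf{P}(\eta_{k+1})$.

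For the case $s<t\le\eta_k$, I would split the CUSUM sum according to the two weight values in \eqref{def-omega}. The positive part $\sum_{u=s+1}^t \omega_{s,e}^t(u)\mathbf{P}(u)$ collapses to $(t-s)\sqrt{(e-t)/((e-s)(t-s))}\,\mathbf{P}(\eta_k)=\sqrt{(e-t)(t-s)/(e-s)}\,\mathbf{P}(\eta_k)$, while the negative part $-\sum_{u=t+1}^e\omega_{s,e}^t(u)\mathbf{P}(u)$ splits further at $\eta_k$ and equals $-\sqrt{(t-s)/((e-s)(e-t))}\bigl[(\eta_k-t)\mathbf{P}(\eta_k)+(e-\eta_k)\mathbf{P}(\eta_{k+1})\bigr]$. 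Factoring out $\sqrt{(t-s)/((e-s)(e-t))}$ from both pieces and simplifying the coefficient of $\mathbf{P}(\eta_k)$ via $(e-t)-(\eta_k-t)=e-\eta_k$ yields
\[
\widetilde{\mathbf{P}}^{s,e}(t)=\sqrt{\tfrac{t-s}{(e-s)(e-t)}}\,(e-\eta_k)\bigl(\mathbf{P}(\eta_k)-\mathbf{P}(\eta_{k+1})\bigr).
\]
Taking Frobenius norms squared and applying $\|\mathbf{P}(\eta_{k+1})-\mathbf{P}(\eta_k)\|_{\mathrm F}=\kappa_k$ from Model~\ref{model-1}$(ii)$ gives the first case.

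The case $\eta_k<t<e$ is completely symmetric: the positive sum now straddles $\eta_k$ and splits into two pieces, while the negative sum lies entirely in the $\mathbf{P}(\eta_{k+1})$ regime. The analogous factoring and cancellation produce $\widetilde{\mathbf{P}}^{s,e}(t)=\sqrt{(e-t)/((e-s)(t-s))}\,(\eta_k-s)\bigl(\mathbf{P}(\eta_{k+1})-\mathbf{P}(\eta_k)\bigr)$, which squares to the second formula. There is no real obstacle in this proof; the only thing to be careful about is the bookkeeping of signs and endpoints when splitting the CUSUM sum at $\eta_k$ and at $t$, so that the right coefficient $(e-\eta_k)$ or $(\eta_k-s)$ emerges from the telescoping of constant segments.
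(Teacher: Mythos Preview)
Your proposal is correct and is exactly the direct computation the paper has in mind; the paper's own proof consists of the single sentence ``This follows directly the definition of $\widetilde{\mathbf{P}}^{s,e}(t)$ in \eqref{def-average},'' so you have simply written out the details of that computation. (A minor remark: in your second case the tensor comes out as $\mathbf{P}(\eta_k)-\mathbf{P}(\eta_{k+1})$ rather than $\mathbf{P}(\eta_{k+1})-\mathbf{P}(\eta_k)$, but of course this is irrelevant for the squared Frobenius norm.)
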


\begin{proof}
This follows directly the definition of $\widetilde{\mathbf{P}}^{s,e}(t)$ in \eqref{def-average}. 
\end{proof}

\section{Proofs of Theorems~\ref{theorem-inference} and \ref{theorem-inference-cont}}\label{proof-theorem-inference}
\begin{proof}

\medskip
\noindent
\textbf{Step 1.} Preliminary bounds.
We first define the event
\[
\mathcal{A}   = \Big\{ \widetilde{K} = K \mbox{ and } \vert \widetilde{\eta}_k - \eta_k \vert  \leq \epsilon_k, \, \forall k \in [K] \Big\}, \quad \mbox{where} \quad  \epsilon_k = C_{\epsilon}    \frac{\log(T)}{\kappa_k^2}.
\]
Then by Theorem~\ref{theorem-2}, it holds that  
\[
\mathbb{P}\{ \mathcal{A}\} \geq 1 - C_0T^{-c_0},
\]
and $C_{\tilde{\epsilon}}, C_0, c_0 >0$ are absolute constants. Since $\mathcal{A}$ holds with probability tending to $1$ as $T\to \infty$,we
condition the remainder of the proof on  $\mathcal{A}$.

From $\mathcal{A}$, Assumption~\ref{ass-SNR} and the fact that $C_{\mathrm{SNR}}$ is a sufficiently large constant, we have for all $k \in [K]$ that
$ \eta_k \in [\widetilde{\eta}_{k-1}, \widetilde{\eta}_{k+1}]$, 
\begin{equation}\label{theorem-3-1.1}
    \eta_k - \widetilde{\eta}_{k-1} \geq
  \eta_k - \eta_{k-1} - \vert \eta_{k-1} - \widetilde{\eta}_{k-1}\vert 
   \geq  \Delta - C_{\epsilon} \frac{\log(T)}{\kappa^2}  \geq  \Delta - \Delta/6  = 5\Delta/6,
\end{equation}
and 
\begin{equation}\label{theorem-3-1.2}
\widetilde{\eta}_{k+1} - \eta_k \geq  \eta_{k+1} - \eta_k - \vert \eta_{k+1} - \widetilde{\eta}_{k+1} \vert \geq  \Delta - C_{\epsilon} \frac{\log(T)}{\kappa^2}  \geq  \Delta - \Delta/6  = 5\Delta/6.
\end{equation}
Similarly, we can derive that for any $k \in [K+1]$
\begin{equation}\label{theorem-3-1.3}
\widetilde{\eta}_k - \widetilde{\eta}_{k-1} \geq  2\Delta/3.
\end{equation}
As a result, the working interval 
\[ 
(\tilde{s}_k,\tilde{e}_k] =  (\widetilde{\eta}_{k-1}  +(\widetilde{\eta}_k - \widetilde{\eta}_{k-1})/2,\;
   \widetilde{\eta}_{k+1} - (\widetilde{\eta}_{k+1} - \widetilde{\eta}_k)/2],
\]
contains exactly one change point $\eta_k$.

Next, by Theorem 4 and Lemma 5 in \cite{wang2023multilayer}, and Assumption~\ref{ass_X_f_Q_f} $(i)$ and $(iii)$, for any $k \in [K+1]$, we have
\[
\big\| \widehat{\mathbf{P}}^{\widetilde{\eta}_{k-1},\widetilde{\eta}_{k}} - \mathbf{P}^{\widetilde{\eta}_{k-1},\widetilde{\eta}_{k}} \big\|_{\mathrm{F}} = O_p \bigg( \sqrt{\frac{(d^2m_{\widetilde{\eta}_{k-1}, \widetilde{\eta}_{k}} + nd + Lm_{\widetilde{\eta}_{k-1}, \widetilde{\eta}_{k}} )   \log(T)}{\widetilde{\eta}_{k} - \widetilde{\eta}_{k-1}}} \bigg), 
\]
for some absolute constant  $C_{1}>0$. For any $k \in [K+1]$,  by \eqref{theorem-3-1.1} and \eqref{theorem-3-1.2}, each interval $(\widetilde{\eta}_{k-1},\widetilde{\eta}_{k})$
contains at most two true change points $\eta_k-1$ and $\eta_k$. Consequently, we have that  
\[
    m_{\widetilde{\eta}_{k-1},\widetilde{\eta}_{k}} \leq m_{k-1} + m_k + m_{k+1}\leq 3 m_{\max}.
\]
Thus, it holds that
\begin{equation}\label{def-event-A-2}
\big\| \widehat{\mathbf{P}}^{\widetilde{\eta}_{k-1},\widetilde{\eta}_{k}} - \mathbf{P}^{\widetilde{\eta}_{k-1},\widetilde{\eta}_{k}} \big\|_{\mathrm{F}} \leq C_{2} \sqrt{\frac{ (d^2m_{\max} + nd+Lm_{\max})   \log(T)}{\widetilde{\eta}_k -\widetilde{\eta}_{k-1}}}, \quad  \forall k \in [K+1],
\end{equation}
with an absolute constant $C_{2}>0$.

\medskip
\noindent
\textbf{Step 2.} Characterization of bias.
From \eqref{theorem-3-1.1} and \eqref{theorem-3-1.2}, for any $k \in [K]$, the interval $ (\widetilde{\eta}_{k-1}, \widetilde{\eta}_{k+1})$ may contain one, two or three change points.
One example that contains three change points is illustrated in the figure below. We analyze the biases in this case. The analyses for the other scenarios are similar but simpler and therefore omitted.

\vspace{0.3cm}
\begin{center}
\begin{tikzpicture}[scale=6.5,decoration=brace]
\draw[-, thick] (-0.65,0) -- (0.65,0); 
\foreach \x/\xtext in {-0.55/$\widetilde{\eta}_{k-1}$,-0.45/$\eta_{k-1}$,-0.05/$\eta_{k}$,0.1/$\widetilde{\eta}_{k}$,0.5/$\eta_{k+1}$,0.6/$\widetilde{\eta}_{k+1}$}
\draw[thick] (\x,0.5pt) -- (\x,-0.5pt) node[font = {\footnotesize}, below] {\xtext};
\end{tikzpicture}
\end{center}

In the following, we analyze three types of bias terms.  Denote $\alpha_T =\log(T)$, then $\alpha_T \to \infty$ as $T \to \infty$. Observe that 
\begin{align}\label{eq-inference-1}
 &   \big\| \mathbf{P}(\eta_{k})  -  \mathbf{P}^{\widetilde{\eta}_{k-1},\widetilde{\eta}_{k}}  \big\|_{\mathrm{F}} \nonumber\\
=  &  \bigg\| \mathbf{P}(\eta_{k})  -  
\frac{ \eta_{k-1} - \widetilde{\eta}_{k-1} }{\widetilde{\eta}_k - \widetilde{\eta}_{k-1}   }
  \mathbf{P}(\eta_{k-1})  - \frac{ \eta_{k} - \eta_{k-1}}{\widetilde{\eta}_k - \widetilde{\eta}_{k-1} }
  \mathbf{P}(\eta_{k}) -  \frac{ \widetilde{\eta}_{k} - \eta_{k}}{\widetilde{\eta}_k - \widetilde{\eta}_{k-1} }
  \mathbf{P}(\eta_{k+1})    \bigg \|_{\mathrm{F}} \nonumber\\
 \leq  & \frac{\eta_{k-1} - \widetilde{\eta}_{k-1} }{\widetilde{\eta}_k - \widetilde{\eta}_{k-1}   } \big\|   \mathbf{P}(\eta_{k})   - 
  \mathbf{P}(\eta_{k-1}) \big \|_{\mathrm{F}} +   \frac{ \widetilde{\eta}_{k} - \eta_{k}}{\widetilde{\eta}_k - \widetilde{\eta}_{k-1} }
   \big\|  \mathbf{P}(\eta_{k+1})  -  \mathbf{P}(\eta_{k})   \big \|_{\mathrm{F}}  \nonumber\\
   \leq &  \frac{3  C_{\epsilon} \log(T)}{2  \Delta \kappa_{k-1}^2}  \kappa_{k-1}+ \frac{ 3 C_{\epsilon} \log(T)}{2  \Delta \kappa_{k}^2}  \kappa_{k}
   \leq \alpha_T^{-1}\kappa_{k},
\end{align}
where the second inequity follows from the event $\mathcal{A}$ and \eqref{theorem-3-1.3}, and the last inequality follows from Assumption~\ref{ass-SNR} and the fact that $C_{\mathrm{SNR}}$ is a large enough constant. 
Similarly, we have that 
\begin{align}\label{eq-inference-2}
   \big\| \mathbf{P}(\eta_{k+1})  -  \mathbf{P}^{\widetilde{\eta}_{k},\widetilde{\eta}_{k+1}}  \big\|_{\mathrm{F}}
=  &  \bigg\| \mathbf{P}(\eta_{k+1})     - \frac{ \eta_{k+1} -\widetilde{\eta}_{k}  }{\widetilde{\eta}_{k+1} - \widetilde{\eta}_{k} }
  \mathbf{P}(\eta_{k+1}) -  \frac{ \widetilde{\eta}_{k+1} - \eta_{k+1}}{\widetilde{\eta}_{k+1} - \widetilde{\eta}_{k} }
  \mathbf{P}(\eta_{k+2})    \bigg \|_{\mathrm{F}} \nonumber\\
 =  & \frac{ \widetilde{\eta}_{k+1} - \eta_{k+1}}{\widetilde{\eta}_{k+1} - \widetilde{\eta}_{k} } \big\| \mathbf{P}(\eta_{k+1}) - 
  \mathbf{P}(\eta_{k+2})    \big\|_{\mathrm{F}} \nonumber\\
   \leq & \frac{3  C_{\epsilon} \log(T)}{2\Delta\kappa_{k+1}^2} \kappa_{k+1}  \leq \alpha_T^{-1}\kappa_{k},
\end{align}
and
\begin{align}\label{eq-inference-3}
  & \big\| \mathbf{P}(\eta_{k+1})  -  \mathbf{P}^{\widetilde{\eta}_{k-1},\widetilde{\eta}_{k}}  \big\|_{\mathrm{F}} \nonumber \\
=  &  \bigg\| \mathbf{P}(\eta_{k+1})  -  
\frac{ \eta_{k-1} - \widetilde{\eta}_{k-1} }{\widetilde{\eta}_k - \widetilde{\eta}_{k-1}   }
  \mathbf{P}(\eta_{k-1})  - \frac{ \eta_{k} - \eta_{k-1}}{\widetilde{\eta}_k - \widetilde{\eta}_{k-1} }
  \mathbf{P}(\eta_{k}) -  \frac{ \widetilde{\eta}_{k} - \eta_{k}}{\widetilde{\eta}_k - \widetilde{\eta}_{k-1} }
  \mathbf{P}(\eta_{k+1})    \bigg \|_{\mathrm{F}} \nonumber\\
 \leq  & \frac{\eta_{k-1} - \widetilde{\eta}_{k-1} }{\widetilde{\eta}_k - \widetilde{\eta}_{k-1}   } \big\|   \mathbf{P}(\eta_{k+1})   - 
  \mathbf{P}(\eta_{k-1}) \big \|_{\mathrm{F}} +   \frac{ \eta_{k} - \eta_{k-1}}{\widetilde{\eta}_k - \widetilde{\eta}_{k-1} }
   \big\|  \mathbf{P}(\eta_{k+1})  -  \mathbf{P}(\eta_{k})   \big \|_{\mathrm{F}}  \nonumber\\
   \leq & \frac{ 3  C_{\epsilon} \log(T)}{2\Delta\kappa_{k-1}^2}   (\kappa_{k-1} + \kappa_k)+  \kappa_{k}
   \leq \alpha_T^{-1}\kappa_{k} +\kappa_k \leq C_3 \kappa_k,
\end{align}
for some constant $C_3 >0$.

\medskip
\noindent
\textbf{Step 3.} Uniform tightness of $\kappa_k^2 |\widehat{\eta}_k - \eta_k|$. In this step, we show that  $\kappa_k^2 |\widehat{\eta}_k - \eta_k|  = O_p (1)$. Let $ r = \widehat{\eta}_k - \eta_k $ and without loss of generality, assume $ r \geq 0 $. Our goal is to establish that 
\[
r \kappa_k^2  = O_p (1)
\]
If $r\kappa_k^2  < 1 $, the conclusion holds trivially.  Thus, for the remainder of the argument, we assume that~$r\kappa_k^2  \geq 1 $.
Since $ \widehat{\eta}_k = \eta_k + r $,  it follows that
\[
\mathcal{Q}_k(\eta_k+ r)  -  \mathcal{Q}_k(\eta_k)   \leq 0.
\]
Now observe that 
\begin{align}
\mathcal{Q}_k(\eta_k+ r)  -  \mathcal{Q}_k(\eta_k)   = &  \sum_{t=\eta_k+1}^{\eta_k + r} \big\| \mathbf{A}(t)  - \widehat{\mathbf{P}}^{\widetilde{\eta}_{k-1}, \widetilde{\eta}_{k}} \big\|_{\mathrm{F}}^2 - \big\| \mathbf{A}(t)  - \widehat{\mathbf{P}}^{\widetilde{\eta}_{k}, \widetilde{\eta}_{k+1}} \big\|_{\mathrm{F}}^2 \nonumber\\
= &  \sum_{t=\eta_k+1}^{\eta_k + r}  \Big\{ \big\| \mathbf{A}(t)  - \widehat{\mathbf{P}}^{\widetilde{\eta}_{k-1}, \widetilde{\eta}_{k}} \big\|_{\mathrm{F}}^2 - \big\| \mathbf{A}(t)  - \mathbf{P}^{\widetilde{\eta}_{k-1}, \widetilde{\eta}_{k}} \big\|_{\mathrm{F}}^2 \Big\} \nonumber\\
& \hspace{0.5cm} - \sum_{t=\eta_k+1}^{\eta_k + r}   \Big\{ \big\| \mathbf{A}(t)  - \widehat{\mathbf{P}}^{\widetilde{\eta}_{k}, \widetilde{\eta}_{k+1}} \big\|_{\mathrm{F}}^2 - \big\| \mathbf{A}(t)   - \mathbf{P}^{\widetilde{\eta}_{k}, \widetilde{\eta}_{k+1}} \big\|_{\mathrm{F}}^2  \Big\} \nonumber\\
&  \hspace{0.5cm}  + \sum_{t=\eta_k+1}^{\eta_k + r}    \Big\{  \big\| \mathbf{A}(t)   - \mathbf{P}^{\widetilde{\eta}_{k-1}, \widetilde{\eta}_{k}} \big\|_{\mathrm{F}}^2  -  \big\| \mathbf{A}(t) - \mathbf{P}(\eta_k) \big\|_{\mathrm{F}}^2  \Big\} \nonumber\\
&  \hspace{0.5cm}  - \sum_{t=\eta_k+1}^{\eta_k + r}     \Big\{  \big\| \mathbf{A}(t)  - \mathbf{P}^{\widetilde{\eta}_{k}, \widetilde{\eta}_{k+1}} \big\|_{\mathrm{F}}^2  -  \big\| \mathbf{A}(t) - \mathbf{P}(\eta_{k+1}) \big\|_{\mathrm{F}}^2  \Big\} \nonumber\\
&   \hspace{0.5cm} + \sum_{t=\eta_k+1}^{\eta_k + r}      \Big\{  \big\| \mathbf{A}(t)  - \mathbf{P}(\eta_{k})\big\|_{\mathrm{F}}^2  -  \big\| \mathbf{A}(t) - \mathbf{P}(\eta_{k+1}) \big\|_{\mathrm{F}}^2  \Big\} \nonumber\\
= & I - II + III - IV + V. \nonumber
\end{align}
Therefore, we have that
\begin{equation}\label{eq-theorem-3-main}
V \leq -I + II - III + IV \leq |I| + |II| + |III| + |IV|. 
\end{equation}

\medskip
\noindent
\textbf{Step 3.1.} Order of magnitude of $I$. We start by analyzing the term
\begin{align}\label{eq-inference-I.00} 
I = &  \sum_{t=\eta_k+1}^{\eta_k + r} \Big\{ \big\| \mathbf{A}(t)  - \widehat{\mathbf{P}}^{\widetilde{\eta}_{k-1}, \widetilde{\eta}_{k}} \big\|_{\mathrm{F}}^2 - \big\| \mathbf{A}(t)  - \mathbf{P}^{\widetilde{\eta}_{k-1}, \widetilde{\eta}_{k}} \big\|_{\mathrm{F}}^2 \Big\}
\nonumber\\
= & \sum_{t=\eta_k+1}^{\eta_k + r} \big\|   \widehat{\mathbf{P}}^{\widetilde{\eta}_{k-1}, \widetilde{\eta}_{k}} -  \mathbf{P}^{\widetilde{\eta}_{k-1}, \widetilde{\eta}_{k}} \big\|_{\mathrm{F}}^2 -  2 \sum_{t=\eta_k+1}^{\eta_k + r}  \big\langle \mathbf{A}(t)  -   \mathbf{P}^{\widetilde{\eta}_{k-1}, \widetilde{\eta}_{k}}, \widehat{\mathbf{P}}^{\widetilde{\eta}_{k-1}, \widetilde{\eta}_{k}} -  \mathbf{P}^{\widetilde{\eta}_{k-1}, \widetilde{\eta}_{k}} \big\rangle \nonumber\\
 = & \sum_{t=\eta_k+1}^{\eta_k + r} \big\|   \widehat{\mathbf{P}}^{\widetilde{\eta}_{k-1}, \widetilde{\eta}_{k}} -  \mathbf{P}^{\widetilde{\eta}_{k-1}, \widetilde{\eta}_{k}} \big\|_{\mathrm{F}}^2 -  2 \sum_{t=\eta_k+1}^{\eta_k + r}  \big\langle \mathbf{A}(t)  -   \mathbf{P}(\eta_{k+1}), \widehat{\mathbf{P}}^{\widetilde{\eta}_{k-1}, \widetilde{\eta}_{k}} -  \mathbf{P}^{\widetilde{\eta}_{k-1}, \widetilde{\eta}_{k}} \big\rangle  \nonumber\\
 & \hspace{0.5cm} -  2 \sum_{t=\eta_k+1}^{\eta_k + r}  \big\langle \mathbf{P}(\eta_{k+1})  -   \mathbf{P}^{\widetilde{\eta}_{k-1}, \widetilde{\eta}_{k}}, \widehat{\mathbf{P}}^{\widetilde{\eta}_{k-1}, \widetilde{\eta}_{k}} -  \mathbf{P}^{\widetilde{\eta}_{k-1}, \widetilde{\eta}_{k}} \big\rangle \nonumber\\
= & I.1 - 2I.2 - 2I.3 . 
\end{align}
By \eqref{def-event-A-2}, we have that 
\begin{align}\label{eq-inference-I.0}
\big\|   \widehat{\mathbf{P}}^{\widetilde{\eta}_{k-1}, \widetilde{\eta}_{k}} -  \mathbf{P}^{\widetilde{\eta}_{k-1}, \widetilde{\eta}_{k}} \big\|_{\mathrm{F}}^2  \leq & C_{2}^2 \frac{ (d^2m_{\max} + nd+Lm_{\max})   \log(T)}{\widetilde{\eta}_k -\widetilde{\eta}_{k-1}} \nonumber\\
\leq &   3C_{2}^2 \frac{ (d^2m_{\max} + nd+Lm_{\max})   \log(T)}{2\Delta} \leq 
 \alpha_{T}^{-1}\kappa_{k}^2,
\end{align}
where the second inequality is by \eqref{theorem-3-1.3} and the last inequality follows from Assumption~\ref{ass-SNR} and the fact that $C_{\mathrm{SNR}}$ is a sufficiently large constant. 
This yields that  
\begin{align}\label{eq-inference-I.1}
|I.1| = &  \sum_{t=\eta_k+1}^{\eta_k + r} \big\|   \widehat{\mathbf{P}}^{\widetilde{\eta}_{k-1}, \widetilde{\eta}_{k}} -  \mathbf{P}^{\widetilde{\eta}_{k-1}, \widetilde{\eta}_{k}} \big\|_{\mathrm{F}}^2  = O_p( r \alpha_{T}^{-1}\kappa_{k}^2),
\end{align}
We now turn to the term $I.2$ in \eqref{eq-inference-I.00}. By Lemma~\ref{lemma-s-2} and \eqref{eq-inference-I.0}, we obtain that 
\begin{align}\label{eq-inference-I.2}
\vert I.2 \vert  =   O_p \Big( r^{1/2} \big\|   \widehat{\mathbf{P}}^{\widetilde{\eta}_{k-1}, \widetilde{\eta}_{k}} -  \mathbf{P}^{\widetilde{\eta}_{k-1}, \widetilde{\eta}_{k}} \big\|_{\mathrm{F}} \Big) = O_p (r^{1/2} 
\alpha_T^{-1/2} \kappa_k ). 
\end{align}
Next, by the Cauchy--Schwarz inequality, we derive that
\begin{align}\label{eq-inference-I.3}
\vert I.3\vert  \leq  r \big\| \mathbf{P}(\eta_{k+1})  -   \mathbf{P}^{\widetilde{\eta}_{k-1}, \widetilde{\eta}_{k}}\big\|_{\mathrm{F}}  \big\|\widehat{\mathbf{P}}^{\widetilde{\eta}_{k-1}, \widetilde{\eta}_{k}} -  \mathbf{P}^{\widetilde{\eta}_{k-1}, \widetilde{\eta}_{k}} \big\|_{\mathrm{F}} =  O_p( r \alpha_T^{-1/2} \kappa_k ^2),
\end{align}
where the last inequality follows from  \eqref{eq-inference-3} and \eqref{eq-inference-I.0}.

Combining \eqref{eq-inference-I.00}, \eqref{eq-inference-I.1}, \eqref{eq-inference-I.2} and \eqref{eq-inference-I.3}, we conclude that
\begin{align}\label{eq-inference-I}
(I) =    o_p (r \kappa_k^2 + r^{1/2}\kappa_k).
\end{align}

\medskip
\noindent
\textbf{Step 3.2.} Order of magnitude of $ II$. 
We now analyze the term
\begin{align}\label{eq-inference-II.00}
II = &  \sum_{t=\eta_k+1}^{\eta_k + r} \Big\{  \big\| \mathbf{A}(t)  - \widehat{\mathbf{P}}^{\widetilde{\eta}_{k}, \widetilde{\eta}_{k+1}} \big\|_{\mathrm{F}}^2 - \big\| \mathbf{A}(t)   - \mathbf{P}^{\widetilde{\eta}_{k}, \widetilde{\eta}_{k+1}} \big\|_{\mathrm{F}}^2  \Big\}
\nonumber\\
= & \sum_{t=\eta_k+1}^{\eta_k + r} \big\|   \widehat{\mathbf{P}}^{\widetilde{\eta}_{k}, \widetilde{\eta}_{k+1}} -  \mathbf{P}^{\widetilde{\eta}_{k}, \widetilde{\eta}_{k+1}} \big\|_{\mathrm{F}}^2 -  2 \sum_{t=\eta_k+1}^{\eta_k + r}  \big\langle \mathbf{A}(t)  -   \mathbf{P}^{\widetilde{\eta}_{k}, \widetilde{\eta}_{k+1}}, \widehat{\mathbf{P}}^{\widetilde{\eta}_{k}, \widetilde{\eta}_{k+1}} -  \mathbf{P}^{\widetilde{\eta}_{k}, \widetilde{\eta}_{k+1}} \big\rangle \nonumber\\
 = & \sum_{t=\eta_k+1}^{\eta_k + r} \big\|   \widehat{\mathbf{P}}^{\widetilde{\eta}_{k}, \widetilde{\eta}_{k+1}} -  \mathbf{P}^{\widetilde{\eta}_{k}, \widetilde{\eta}_{k+1}} \big\|_{\mathrm{F}}^2 -  2 \sum_{t=\eta_k+1}^{\eta_k + r}  \big\langle \mathbf{A}(t)  -   \mathbf{P}(\eta_{k+1}), \widehat{\mathbf{P}}^{\widetilde{\eta}_{k}, \widetilde{\eta}_{k+1}} -  \mathbf{P}^{\widetilde{\eta}_{k}, \widetilde{\eta}_{k+1}} \big\rangle  \nonumber\\
 & \hspace{0.5cm} -  2 \sum_{t=\eta_k+1}^{\eta_k + r}  \big\langle \mathbf{P}(\eta_{k+1})  -   \mathbf{P}^{\widetilde{\eta}_{k}, \widetilde{\eta}_{k+1}}, \widehat{\mathbf{P}}^{\widetilde{\eta}_{k}, \widetilde{\eta}_{k+1}} -  \mathbf{P}^{\widetilde{\eta}_{k}, \widetilde{\eta}_{k+1}} \big\rangle \nonumber\\
= & II.1 - 2II.2 - 2II.3 . 
\end{align}
By \eqref{def-event-A-2}, we have that 
\begin{align}\label{eq-inference-II.0}
\big\|   \widehat{\mathbf{P}}^{\widetilde{\eta}_{k}, \widetilde{\eta}_{k+1}} -  \mathbf{P}^{\widetilde{\eta}_{k}, \widetilde{\eta}_{k+1}} \big\|_{\mathrm{F}}^2  \leq & C_{2}^2 \frac{ (d^2m_{\max} + nd+Lm_{\max})   \log(T)}{\widetilde{\eta}_{k+1} -\widetilde{\eta}_{k}} \nonumber\\
\leq &  3 C_{2}^2 \frac{ (d^2m_{\max} + nd+Lm_{\max})   \log(T)}{2\Delta} \leq 
 \alpha_{T}^{-1}\kappa_{k}^2,
\end{align}
where the second inequality follows from \eqref{theorem-3-1.3} and the last inequality follows from Assumption~\ref{ass-SNR} and the fact that $C_{\mathrm{SNR}}$ is a sufficiently large constant. 
It then follows that  
\begin{align}\label{eq-inference-II.1}
|II.1| = &  \sum_{t=\eta_k+1}^{\eta_k + r} \big\|   \widehat{\mathbf{P}}^{\widetilde{\eta}_{k}, \widetilde{\eta}_{k+1}} -  \mathbf{P}^{\widetilde{\eta}_{k}, \widetilde{\eta}_{k+1}} \big\|_{\mathrm{F}}^2  =  O_p( r \alpha_{T}^{-1}\kappa_{k}^2),
\end{align}
To control $II.2$, by Lemma~\ref{lemma-s-2} and \eqref{eq-inference-II.0},  we obtain that 
\begin{align}\label{eq-inference-II.2}
\vert II.2 \vert  = O_p\Big( r^{1/2}  \big\|   \widehat{\mathbf{P}}^{\widetilde{\eta}_{k}, \widetilde{\eta}_{k+1}} -  \mathbf{P}^{\widetilde{\eta}_{k}, \widetilde{\eta}_{k+1}} \big\|_{\mathrm{F}} \Big)  =   O_p( r^{1/2}  \alpha_T^{-1/2} \kappa_k). 
\end{align}
Next, by the Cauchy--Schwarz inequality, we derive that
\begin{align}\label{eq-inference-II.3}
\vert II.3 \vert  \leq  r \big\| \mathbf{P}(\eta_{k+1})  -   \mathbf{P}^{\widetilde{\eta}_{k}, \widetilde{\eta}_{k+1}}\big\|_{\mathrm{F}}  \big\|\widehat{\mathbf{P}}^{\widetilde{\eta}_{k-1}, \widetilde{\eta}_{k}} -  \mathbf{P}^{\widetilde{\eta}_{k}, \widetilde{\eta}_{k+1}} \big\|_{\mathrm{F}}  =  O_p( r \alpha_T^{-3/2} \kappa_k ^2),
\end{align}
where the last inequality follows from \eqref{eq-inference-2} and \eqref{eq-inference-II.0}.

Combining \eqref{eq-inference-II.00}, \eqref{eq-inference-II.1}, \eqref{eq-inference-II.2} and \eqref{eq-inference-II.3}, we conclude that
\begin{align}\label{eq-inference-II}
\vert II \vert  =   o_p (r \kappa_k^2 + r^{1/2}\kappa_k).
\end{align}

\medskip
\noindent
\textbf{Step 3.3.} Order of magnitude of $ III$. We now analyze the term
\begin{align}\label{eq-inference-III.0}
 III = & \sum_{t=\eta_k+1}^{\eta_k + r}    \Big\{ \big\| \mathbf{A}(t)   - \mathbf{P}^{\widetilde{\eta}_{k-1}, \widetilde{\eta}_{k}} \big\|_{\mathrm{F}}^2  -  \big\| \mathbf{A}(t) - \mathbf{P}(\eta_k) \big\|_{\mathrm{F}}^2     \Big\} \nonumber\\
 = &  \sum_{t=\eta_k+1}^{\eta_k + r} \big\|  \mathbf{P}(\eta_k)  - \mathbf{P}^{\widetilde{\eta}_{k-1}, \widetilde{\eta}_{k}} \big\|_{\mathrm{F}}^2  -    2 \sum_{t=\eta_k+1}^{\eta_k + r}  \big\langle \mathbf{A}(t)  -   \mathbf{P}(\eta_k), \mathbf{P}^{\widetilde{\eta}_{k-1}, \widetilde{\eta}_{k}} - \mathbf{P}(\eta_k) \big\rangle
 \nonumber\\
 = &  \sum_{t=\eta_k+1}^{\eta_k + r} \big\|  \mathbf{P}(\eta_k)  - \mathbf{P}^{\widetilde{\eta}_{k-1}, \widetilde{\eta}_{k}} \big\|_{\mathrm{F}}^2  -    2 \sum_{t=\eta_k+1}^{\eta_k + r}  \big\langle \mathbf{P}(\eta_{k+1})  -   \mathbf{P}(\eta_k), \mathbf{P}^{\widetilde{\eta}_{k-1}, \widetilde{\eta}_{k}} - \mathbf{P}(\eta_k) \big\rangle \nonumber\\
 & \hspace{0.5cm}  -    2 \sum_{t=\eta_k+1}^{\eta_k + r}  \big\langle \mathbf{A}(t)  -   \mathbf{P}(\eta_{k+1}), \mathbf{P}^{\widetilde{\eta}_{k-1}, \widetilde{\eta}_{k}} - \mathbf{P}(\eta_k) \big\rangle \nonumber\\
 = & III.1 - 2  III.2 -2 III.3. 
\end{align}
From \eqref{eq-inference-1}, we obtain that 
\begin{align}\label{eq-inference-III.1}
  \vert III.1 \vert =  O_p( r \alpha_T^{-2}  \kappa_{k}^2).
\end{align}
Using the Cauchy--Schwarz inequality and again \eqref{eq-inference-1}, we have that
\begin{align}\label{eq-inference-III.2}
    \vert III.2 \vert  \leq   r  \big\| \mathbf{P}(\eta_{k+1})  -   \mathbf{P}(\eta_k)\big\|_{\mathrm{F}} \big\|\mathbf{P}^{\widetilde{\eta}_{k-1}, \widetilde{\eta}_{k}} - \mathbf{P}(\eta_k) \big\|_{\mathrm{F}} = O_p(  r\alpha_T^{-1} \kappa_k^2).
\end{align}
To bound $III.3$, by Lemma~\ref{lemma-s-2} and \eqref{eq-inference-1}, we get that 
\begin{align}\label{eq-inference-III.3}
\vert III.3 \vert  =  O_p \Big( r^{1/2} \big\|\mathbf{P}^{\widetilde{\eta}_{k-1}, \widetilde{\eta}_{k}} - \mathbf{P}(\eta_k) \big\|_{\mathrm{F}}  \Big) =  O_p(r^{1/2} \alpha_T^{-1/2}   \kappa_k ). 
\end{align}
Combining  \eqref{eq-inference-III.0},  \eqref{eq-inference-III.1}, \eqref{eq-inference-III.2} and \eqref{eq-inference-III.3}, we conclude that
\begin{align}\label{eq-inference-III}
\vert III \vert  =   o_p (r \kappa_k^2+  r^{1/2}\kappa_k).
\end{align}

\medskip
\noindent
\textbf{Step 3.4.} Order of magnitude of $IV$. Consider the term
\begin{align}\label{eq-inference-IV.0}
 IV= &  \sum_{t=\eta_k+1}^{\eta_k + r}  \Big\{  \big\| \mathbf{A}(t)  - \mathbf{P}^{\widetilde{\eta}_{k}, \widetilde{\eta}_{k+1}} \big\|_{\mathrm{F}}^2  -  \big\| \mathbf{A}(t) - \mathbf{P}(\eta_{k+1}) \big\|_{\mathrm{F}}^2  \Big\}   \nonumber\\
 = &  \sum_{t=\eta_k+1}^{\eta_k + r} \big\|  \mathbf{P}(\eta_{k+1})  - \mathbf{P}^{\widetilde{\eta}_{k}, \widetilde{\eta}_{k+1}} \big\|_{\mathrm{F}}^2  -    2 \sum_{t=\eta_k+1}^{\eta_k + r}  \big\langle \mathbf{A}(t)  -   \mathbf{P}(\eta_{k+1}), \mathbf{P}^{\widetilde{\eta}_{k}, \widetilde{\eta}_{k+1}} - \mathbf{P}(\eta_{k+1}) \big\rangle \nonumber\\
 = &  IV.1 - 2 IV.2. 
\end{align}
By \eqref{eq-inference-2}, we derive that 
\begin{align}\label{eq-inference-IV.1}
  \vert IV.1 \vert =O_p( r \alpha_T^{-2}  \kappa_{k}^2).
\end{align}
By Lemma~\ref{lemma-s-2} and \eqref{eq-inference-2}, we have that 
\begin{align}\label{eq-inference-IV.2}
\vert IV.2 \vert  =  O_p \Big( r^{1/2} \big\|  \mathbf{P}(\eta_{k+1})  - \mathbf{P}^{\widetilde{\eta}_{k}, \widetilde{\eta}_{k+1}} \big\|_{\mathrm{F}} \Big) =O_p(r^{1/2} \alpha_T^{-1/2}   \kappa_k ). 
\end{align}
Combining \eqref{eq-inference-IV.0}, \eqref{eq-inference-IV.1}  and \eqref{eq-inference-IV.2}, we conclude that
\begin{align}\label{eq-inference-IV}
\vert IV \vert  =    o_p (r \kappa_k^2 +  r^{1/2}\kappa_k).
\end{align}

\medskip
\noindent
\textbf{Step 3.5.} Order of magnitude of $V$. 
We now analyze the final term
\begin{align}\label{eq-inference-V}
 V =  & \sum_{t=\eta_k+1}^{\eta_k + r}  \big\| \mathbf{A}(t)  - \mathbf{P}(\eta_{k})\big\|_{\mathrm{F}}^2  -  \big\| \mathbf{A}(t) - \mathbf{P}(\eta_{k+1}) \big\|_{\mathrm{F}}^2 \nonumber\\
  = &  \sum_{t=\eta_k+1}^{\eta_k + r} \big\|  \mathbf{P}(\eta_k)  -  \mathbf{P}(\eta_{k+1})\big\|_{\mathrm{F}}^2  -    2 \sum_{t=\eta_k+1}^{\eta_k + r}  \big\langle \mathbf{A}(t)  -   \mathbf{P}(\eta_{k+1}), \mathbf{P}(\eta_{k})- \mathbf{P}(\eta_{k+1}) \big\rangle
  \nonumber\\
  = & r \kappa_{k}^2 -2 V.1
\end{align}
Using Lemma~\ref{lemma-s-2}, we obtain that 
\begin{align}\label{eq-inference-V.1}
\vert V.1 \vert  = O_p\Big( r^{1/2}\big\|  \mathbf{P}(\eta_k)  -  \mathbf{P}(\eta_{k+1})\big\|_{\mathrm{F}}  \Big)  =   O_p( r^{1/2}  \kappa_k ). 
\end{align}

\medskip
\noindent
\textbf{Step 3.6: } 
Combining \eqref{eq-theorem-3-main}, \eqref{eq-inference-I}, \eqref{eq-inference-II}, \eqref{eq-inference-III}, \eqref{eq-inference-IV}, \eqref{eq-inference-V} and \eqref{eq-inference-V.1} we have for all $r \kappa_k^2 \geq 1$ that 
\[
r  \kappa_k^2 = O_p (1).
\]

\medskip
\noindent
\textbf{Step 4.} Limiting Distributions. For any  $t \in (\tilde{s}_k, \tilde{e}_k)$, define
\[
\widetilde{\mathcal{Q}}_k(t)      =  \sum_{u=\tilde{s}_k+1}^{t} \| \mathbf{A}(u) - \mathbf{P}(\eta_{k}) \|_{\mathrm{F}}^2 + \sum_{u=t +1}^{\tilde{e}_k} \| \mathbf{A}(u) - \mathbf{P}(\eta_{k+1})\|_{\mathrm{F}}^2.
\]
Note that the term $V$ defined in \eqref{eq-theorem-3-main} satisfies 
\[
    V = \widetilde{\mathcal{Q}}_k(\eta_k+ r) - \widetilde{\mathcal{Q}}_k (\eta_k), 
\]
and hence by \eqref{eq-theorem-3-main}, \eqref{eq-inference-I}, \eqref{eq-inference-II}, \eqref{eq-inference-III}, \eqref{eq-inference-IV} and $r  \kappa_k^2 = O_p (1)$,  we have that 
\[
   \big\vert  \mathcal{Q}_k(\eta_k+ r) - \mathcal{Q}_k (\eta_k)    - \big\{  \widetilde{\mathcal{Q}}_k(\eta_k+ r) - \widetilde{\mathcal{Q}}_k (\eta_k)   \big\} \big\vert  \leq |I| + |II| + |III| + |IV| \overset{p}{\rightarrow} 0.
\]
Therefore, by Slutsky's theorem, it suffices to derive the limiting distributions of $\widetilde{\mathcal{Q}}_k(\eta_k+ r) - \widetilde{\mathcal{Q}}_k (\eta_k)$ as $T \to \infty$. We consider the two scenarios for $\kappa_k$.

\medskip
\noindent
\textbf{Non-vanishing scenario.} Suppose $\kappa_k \to \rho_k$, as $T \to \infty$, with $\rho_k > 0$ being an absolute constant. For $r < 0$, we have that 
\begin{align}
   \widetilde{\mathcal{Q}}_k(\eta_k+ r) - \widetilde{\mathcal{Q}}_k (\eta_k) = & \sum_{t=\eta_k+r+1}^{\eta_k}    \Big\{  \big\| \mathbf{A}(t)  - \mathbf{P}(\eta_{k+1})\big\|_{\mathrm{F}}^2  -  \big\| \mathbf{A}(t) - \mathbf{P}(\eta_{k}) \big\|_{\mathrm{F}}^2  \Big\}  \nonumber\\
    = & \sum_{t=\eta_k+r+1}^{\eta_k} \big\|  \mathbf{P}(\eta_k)  -  \mathbf{P}(\eta_{k+1})\big\|_{\mathrm{F}}^2  
    \nonumber \\ 
    & \hspace{0.5cm} -    2 \sum_{t=\eta_k+1}^{\eta_k + r}  \big\langle \mathbf{A}(t)  -   \mathbf{P}(\eta_{k}),\mathbf{P}(\eta_{k+1}) -  \mathbf{P}(\eta_{k})  \big\rangle \nonumber \\
    \xrightarrow{\mathcal{D}} &  \,
    - r  \rho_k^2 -2 \rho_k \sum_{t=r+1}^{0}  \langle \boldsymbol{\mathbf{\Psi}}_k ,\mathbf{E}_{k}(t) \rangle, 
\end{align} 
with $\boldsymbol{\mathbf{\Psi}}_k$ defined in Model~\ref{model-1}, and  for any $k \in [K+1]$ and $t \in \mathbb{Z}$,
     $\mathbf{E}_{k}(t) = \mathbf{A}_{k}(t) - \mathbf{P}(\eta_{k})$ with $\{\mathbf{A}_{k}(t)\}_{t \in \mathbb{Z}} \overset{\mathrm{i.i.d.}}{\sim}  \mathrm{MRDPG}(\{X_i\}_{i=1}^{n}, \{W_{(l)}(\eta_{k})\}_{l\in [L]})$.
 
For $r>0$, we have that when $T \to \infty$,
\begin{align}
    \widetilde{\mathcal{Q}}_k(\eta_k+ r) - \widetilde{\mathcal{Q}}_k (\eta_k) = & \sum_{t=\eta_k+1}^{\eta_k + r}    \Big\{  \big\| \mathbf{A}(t)  - \mathbf{P}(\eta_{k})\big\|_{\mathrm{F}}^2  -  \big\| \mathbf{A}(t) - \mathbf{P}(\eta_{k+1}) \big\|_{\mathrm{F}}^2  \Big\}  \nonumber\\
    = & \sum_{t=\eta_k+1}^{\eta_k + r} \big\|  \mathbf{P}(\eta_k)  -  \mathbf{P}(\eta_{k+1})\big\|_{\mathrm{F}}^2     
    \nonumber \\ 
    & \hspace{0.5cm} + 2 \sum_{t=\eta_k+1}^{\eta_k + r}  \big\langle \mathbf{A}(t)  -   \mathbf{P}(\eta_{k+1}),\mathbf{P}(\eta_{k+1}) -  \mathbf{P}(\eta_{k})  \big\rangle \nonumber \\
    \xrightarrow{\mathcal{D}}&  \, r \rho_k^2 + 2 \rho_k  \sum_{t=1}^{r} \langle \boldsymbol{\mathbf{\Psi}}_k ,\mathbf{E}_{k+1}(t) \rangle.
\end{align}
By Slutsky’s theorem and the argmin continuous mapping theorem \citep[see e.g.~Theorem 3.2.2 in][]{wellner2013weak}, we obtain 
\[
     \widehat{\eta}_k - \eta_k \xrightarrow{\mathcal{D}} \argmin \mathcal{P}_k(r),
\]
which completes the proof of part Theorem~\ref{theorem-inference-cont}.

\medskip
\noindent
\textbf{Vanishing scenario.} Let $m = \kappa_k^{-2}$, noting that $m \to \infty$ as $T \to \infty$. For $r > 0$, we have that
\begin{align*}
 &  \widetilde{\mathcal{Q}}_k(\eta_k+ rm) - \widetilde{\mathcal{Q}}_k (\eta_k)   \nonumber\\
   =  & \sum_{t=\eta_k+1}^{\eta_k + rm}    \Big\{  \big\| \mathbf{A}(t)  - \mathbf{P}(\eta_{k})\big\|_{\mathrm{F}}^2  -  \big\| \mathbf{A}(t) - \mathbf{P}(\eta_{k+1}) \big\|_{\mathrm{F}}^2  \Big\}   \nonumber\\
   = &  \sum_{t=\eta_k+1}^{\eta_k + rm} \big\|  \mathbf{P}(\eta_k)  -  \mathbf{P}(\eta_{k+1})\big\|_{\mathrm{F}}^2  +    2 \sum_{t=\eta_k+1}^{\eta_k + rm}  \big\langle \mathbf{A}(t)  -   \mathbf{P}(\eta_{k+1}),\mathbf{P}(\eta_{k+1}) -  \mathbf{P}(\eta_{k})  \big\rangle \nonumber \\
    =  & r + \frac{2}{\sqrt{m}}  \sum_{t=\eta_k+1}^{\eta_k + rm}  \big\langle \mathbf{A}(t)  -   \mathbf{P}(\eta_{k+1}), \boldsymbol{\mathbf{\Psi}}_k \big\rangle. 
\end{align*}
By the functional central limit theorem, we have that when $T \to \infty$,
\[
\frac{1}{\sqrt{m}} \sum_{t = \eta_k+ 1}^{\eta_k + rm } \big\langle \mathbf{A}(t)  -   \mathbf{P}(\eta_{k+1}), \boldsymbol{\mathbf{\Psi}}_k  \big\rangle  \xrightarrow{\mathcal{D}}    \sigma_{k, k+1} \mathbb{B}_1(r),
\]
where $\mathbb{B}_1(r)$ is a standard Brownian motion and  for any $k \in [K]$ and $k' \in \{k, k+1\}$,
$\sigma_{k, k'}^2 =  \operatorname{Var} \big( \langle \boldsymbol{\mathbf{\Psi}}_k ,\mathbf{E}_{k'}(1) \rangle \big)$.  
Consequently, as~$T \to \infty$
\[
\widetilde{\mathcal{Q}}_k(\eta_k+ rm) - \widetilde{\mathcal{Q}}_k (\eta_k) 
 \xrightarrow{\mathcal{D}}  r +2  \sigma_{k, k+1} \mathbb{B}_1(r).
\]
Similarly, for $r < 0$, we have that when $T \to \infty$
\[
\widetilde{\mathcal{Q}}_k(\eta_k+ rm) - \widetilde{\mathcal{Q}}_k (\eta_k) 
\overset{D}{\to} - r + 2 \sigma_{k, k} \mathbb{B}_2(-r),
\]
where $\mathbb{B}_2(r)$ is a standard Brownian motion.
Applying Slutsky’s theorem and the argmin continuous mapping theorem \citep[see e.g.~Theorem 3.2.2 in][]{wellner2013weak}, we conclude that 
\[
     \kappa_k^2 (\widehat{\eta}_k - \eta_k)\xrightarrow{\mathcal{D}} \argmin \mathcal{P}_k'(r),
\]
which completes the proof of Theorem~\ref{theorem-inference}.

\end{proof}

\section{Additional details and results in Section \ref{sec:experiment}}\label{sec-add-simulation}

All experiments were run on a CPU with 16GB RAM. For each synthetic scenario with node size $n=100$, number of layers $L=4$ and time span $T=200$, the compute time is about $10$ hours to localize the change points and to construct the confidence intervals over $100$ Monte Carlo trials. For each real data experiment, the computation time is approximately $15$ minutes to perform change point localization and confidence interval construction.

\subsection{Additional results in Section \ref{sec:simulation}}\label{appendix_syn_data}

Table \ref{tbl:s3s4} presents the results for \textbf{Scenarios 3} and \textbf{4}.

\begin{table}[!htb]
\caption{Means of evaluation metrics for networks simulated from Scenarios 3 and 4.}
\vspace{1em}
\label{tbl:s3s4}
\centering
\resizebox{\textwidth}{!}{ 
\begin{tabular}{ ll llll  llll}
\toprule
&  & \multicolumn{4}{c}{Scenario 3} & \multicolumn{4}{c}{Scenario 4} \\
$n$ & Method & $|\widehat{K}-K|\downarrow$ & $d(\widehat{\mathcal{C}}|\mathcal{C})\downarrow$ & $d(\mathcal{C}|\widehat{\mathcal{C}})\downarrow$ & $C(\mathcal{G},\mathcal{G'})\uparrow$ & $|\widehat{K}-K|\downarrow$ & $d(\widehat{\mathcal{C}}|\mathcal{C})\downarrow$ & $d(\mathcal{C}|\widehat{\mathcal{C}})\downarrow$ & $C(\mathcal{G},\mathcal{G'})\uparrow$\\ 
\midrule
\multirow{5}{2em}{$50$} 
& CPDmrdpg       & $0.19$ & $9.64$ & $0.14$ & $95.11\%$                & $0.00$ & $0.02$ & $0.02$ & $99.98\%$ \\
& gSeg (nets.)   & $0.98$ & $\text{Inf}$ & $\text{Inf}$ & $68.93\%$    & $5.00$ & $\text{Inf}$ & $\text{Inf}$ & $0.00\%$ \\
& kerSeg (nets.) & $0.16$ & $0.18$ & $2.06$ & $98.90\%$                & $0.36$ & $0.14$ & $2.65$ & $98.56\%$ \\
& gSeg (frob.)   & $0.92$ & $\text{Inf}$ & $\text{Inf}$ & $66.78\%$    & $1.53$ & $\text{Inf}$ & $\text{Inf}$ & $74.92\%$ \\
& kerSeg (frob.) & $0.82$ & $48.52$ & $5.11$ & $73.55\%$               & $0.40$ & $0.05$ & $3.71$ & $98.12\%$ \\
\midrule
\multirow{5}{2em}{$100$} 
& CPDmrdpg       & $0.00$ & $0.02$ & $0.02$ & $99.98\%$                & $0.00$ & $0.00$ & $0.00$ & $100\%$ \\
& gSeg (nets.)   & $0.69$ & $\text{Inf}$ & $\text{Inf}$ & $80.10\%$    & $4.98$ & $\text{Inf}$ & $\text{Inf}$ & $0.77\%$ \\
& kerSeg (nets.) & $0.17$ & $0.00$ & $3.26$ & $99.16\%$                & $0.34 $& $0.08$ & $2.93$ & $98.47\%$ \\
& gSeg (frob.)   & $0.79$ & $\text{Inf}$ & $\text{Inf}$ & $72.11\%$    & $1.86$ & $\text{Inf}$ & $\text{Inf}$ & $68.57\%$  \\
& kerSeg (frob.) & $0.79$ & $48.82$ & $4.75$ & $73.80\%$               & $0.42$ & $0.06$ & $2.93$ & $98.63\%$ \\
\bottomrule
\end{tabular}
} 
\end{table}

We then present a sensitivity analysis of the threshold constant $c_{\tau,1}$ used in Algorithm~\ref{offline-algorithm}. Specifically, based on Theorem~\ref{theorem-2}, we set the threshold value $\tau=  c_{\tau,1} n\sqrt{L} \log^{3/2}(T)$. The constant $c_{\tau,1}$ was chosen empirically assessing the false positive rate in an MSBM model without any change points. The construction follows the details provided in Section~\ref{sec:simulation}, with four evenly-sized communities. We found that $c_{\tau,1} = 0.1$ detects a change point approximately $1\%$ of the time in this scenario, demonstrating ideal false positive control. Smaller values of $c_{\tau,1}$, such as $c_{\tau,1} \in \{0.05, 0.08\}$, led to more frequent false detections, while larger values $c_{\tau,1} \in \{0.12, 0.15, 0.20\}$ failed to detect change points in this scenario.

Tables \ref{tbl:sensitivity_s1}–\ref{tbl:sensitivity_s4} report the results of the sensitivity analysis for \textbf{Scenarios 1} through \textbf{4}, varying $c_{\tau,1} \in \{0.05, 0.08, 0.10, 0.12, 0.15, 0.20, 0.25\}$. These results demonstrate that our proposed method is relatively robust against the choices of $c_{\tau,1}$.

\begin{table}[!htb]
\caption{Means of evaluation metrics for dynamic networks simulated from Scenario 1, varying $c_{\tau,1}$.}
\vspace{0.4em}
\label{tbl:sensitivity_s1}
\begin{center}
\resizebox{0.7\textwidth}{!}{ 
\begin{tabular}{ llp{1.7cm}p{1.7cm}p{1.7cm}p{1.7cm} }
\toprule
$n$ & $c_{\tau,1}$ & $|\widehat{K}-K|\downarrow$ & $d(\widehat{\mathcal{C}}|\mathcal{C})\downarrow$ & $d(\mathcal{C}|\widehat{\mathcal{C}})\downarrow$ & $C(\mathcal{G},\mathcal{G'})\uparrow$\\
\midrule
\multirow{7}{2em}{$50$}
& $0.25$  & $0.00$    & $0.00$  & $0.00$ & $100\%$ \\
& $0.20$  & $0.00$    & $0.00$  & $0.00$ & $100\%$ \\
& $0.15$  & $0.00$    & $0.00$  & $0.00$ & $100\%$ \\
& $0.12$  & $0.00$    & $0.00$  & $0.00$ & $100\%$ \\
& $0.10$  & $0.01$    & $0.00$  & $0.42$ & $99.86\%$ \\
& $0.08$  & $0.25$    & $0.00$  & $6.68$ & $97.80\%$ \\
& $0.05$  & $5.18$    & $0.00$  & $52.86$ & $67.50\%$ \\
\midrule
\multirow{7}{2em}{$100$}
& $0.25$ & $0.00$    & $0.00$  & $0.00$ & $100\%$ \\
& $0.20$ & $0.00$    & $0.00$  & $0.00$ & $100\%$ \\
& $0.15$ & $0.00$    & $0.00$  & $0.00$ & $100\%$ \\
& $0.12$ & $0.00$    & $0.00$  & $0.00$ & $100\%$ \\
& $0.10$ & $0.00$    & $0.00$  & $0.00$ & $100\%$ \\
& $0.08$ & $0.15$    & $0.00$  & $4.98$ & $98.54\%$ \\
& $0.05$ & $5.02$    & $0.00$  & $53.84$ & $67.56\%$ \\
\bottomrule
\end{tabular}
}
\end{center}
\end{table}

\begin{table}[!htb]
\caption{Means of evaluation metrics for dynamic networks simulated from Scenario 2, varying $c_{\tau,1}$.}
\vspace{0.4em}
\label{tbl:sensitivity_s2}
\begin{center}
\resizebox{0.7\textwidth}{!}{ 
\begin{tabular}{ llp{1.7cm}p{1.7cm}p{1.7cm}p{1.7cm} } 
\toprule
$n$ & $c_{\tau,1}$ & $|\widehat{K}-K|\downarrow$ & $d(\widehat{\mathcal{C}}|\mathcal{C})\downarrow$ & $d(\mathcal{C}|\widehat{\mathcal{C}})\downarrow$ & $C(\mathcal{G},\mathcal{G'})\uparrow$\\
\midrule
\multirow{7}{2em}{$50$}
& $0.25$  & $0.00$    & $0.00$  & $0.00$ & $100\%$ \\
& $0.20$  & $0.00$    & $0.00$  & $0.00$ & $100\%$ \\
& $0.15$  & $0.00$    & $0.00$  & $0.00$ & $100\%$ \\
& $0.12$  & $0.00$    & $0.00$  & $0.00$ & $100\%$ \\
& $0.10$  & $0.00$    & $0.00$  & $0.00$ & $100\%$ \\
& $0.08$  & $0.02$    & $0.00$  & $0.64$ & $99.68\%$ \\ 
& $0.05$  & $3.79$    & $0.00$  & $28.46$ & $75.43\%$ \\
\midrule
\multirow{7}{2em}{$100$}
& $0.25$ & $0.00$    & $0.00$  & $0.00$ & $100\%$ \\
& $0.20$ & $0.00$    & $0.00$  & $0.00$ & $100\%$ \\
& $0.15$ & $0.00$    & $0.00$  & $0.00$ & $100\%$ \\
& $0.12$ & $0.00$    & $0.00$  & $0.00$ & $100\%$ \\
& $0.10$ & $0.00$    & $0.00$  & $0.00$ & $100\%$ \\
& $0.08$ & $0.05$    & $0.00$  & $1.14$ & $99.38\%$ \\
& $0.05$ & $3.53$    & $0.00$  & $28.60$ & $76.50\%$ \\ 
\bottomrule
\end{tabular}
}
\end{center}
\end{table}

\begin{table}[!htb]
\caption{Means of evaluation metrics for dynamic networks simulated from Scenario 3, varying $c_{\tau,1}$.}
\vspace{0.4em}
\label{tbl:sensitivity_s3}
\begin{center}
\resizebox{0.7\textwidth}{!}{ 
\begin{tabular}{ llp{1.7cm}p{1.7cm}p{1.7cm}p{1.7cm} }
\toprule
$n$ & $c_{\tau,1}$ & $|\widehat{K}-K|\downarrow$ & $d(\widehat{\mathcal{C}}|\mathcal{C})\downarrow$ & $d(\mathcal{C}|\widehat{\mathcal{C}})\downarrow$ & $C(\mathcal{G},\mathcal{G'})\uparrow$\\
\midrule
\multirow{7}{2em}{$50$}
& $0.25$ & $1.00$ & $50.00$  & $0.00$ & $75.00\%$ \\
& $0.20$ & $0.96$ & $48.00$  & $0.00$ & $76.00\%$ \\
& $0.15$ & $0.64$ & $32.00$  & $0.00$ & $84.00\%$ \\
& $0.12$ & $0.39$ & $19.58$  & $0.08$ & $90.17\%$ \\
& $0.10$ & $0.19$ & $ 9.64$  & $0.14$ & $95.11\%$ \\
& $0.08$ & $0.09$ & $ 4.30$  & $0.50$ & $97.61\%$ \\
& $0.05$ & $4.27$ & $0.36$   & $32.54$ & $71.55\%$ \\
\midrule
\multirow{7}{2em}{$100$}
& $0.25$ & $0.43$  & $21.50$ & $0.00$ & $89.25\%$ \\
& $0.20$ & $0.15$  & $7.52$  & $0.02$ & $96.23\%$ \\
& $0.15$ & $0.00$  & $0.02$  & $0.02$ & $99.98\%$ \\
& $0.12$ & $0.00$  & $0.02$  & $0.02$ & $99.98\%$ \\
& $0.10$ & $0.00$  & $0.02$  & $0.02$ & $99.98\%$ \\
& $0.08$ & $0.06$  & $0.02$  & $1.08$ & $99.57\%$ \\
& $0.05$ & $4.04$  & $0.02$  & $32.76$ & $73.62\%$ \\
\bottomrule
\end{tabular}
}
\end{center}
\end{table}

\begin{table}[!htb]
\caption{Means of evaluation metrics for dynamic networks simulated from Scenario 4, varying $c_{\tau,1}$.}
\vspace{0.4em}
\label{tbl:sensitivity_s4}
\begin{center}
\resizebox{0.7\textwidth}{!}{
\begin{tabular}{ llp{1.7cm}p{1.7cm}p{1.7cm}p{1.7cm} } 
\toprule
$n$ & $c_{\tau,1}$ & $|\widehat{K}-K|\downarrow$ & $d(\widehat{\mathcal{C}}|\mathcal{C})\downarrow$ & $d(\mathcal{C}|\widehat{\mathcal{C}})\downarrow$ & $C(\mathcal{G},\mathcal{G'})\uparrow$\\
\midrule
\multirow{7}{2em}{$50$}
& $0.25$ & $2.67$  & $83.20$  & $0.00$ & $62.47\%$ \\
& $0.20$ & $1.19$  & $28.40$  & $0.00$ & $85.63\%$ \\
& $0.15$ & $0.13$  & $2.60$   & $0.00$ & $98.67\%$ \\
& $0.12$ & $0.01$  & $0.22$   & $0.02$ & $99.88\%$ \\
& $0.10$ & $0.00$  & $0.02$   & $0.02$ & $99.98\%$ \\
& $0.08$ & $0.00$  & $0.02$   & $0.02$ & $99.98\%$ \\
& $0.05$ & $0.75$  & $0.02$   & $11.94$ & $93.36\%$ \\
\midrule
\multirow{7}{2em}{$100$}
& $0.25$ & $0.01$  & $0.20$  & $0.00$ & $99.90\%$ \\
& $0.20$ & $0.00$  & $0.00$  & $0.00$ & $100\%$ \\
& $0.15$ & $0.00$  & $0.00$  & $0.00$ & $100\%$ \\
& $0.12$ & $0.00$  & $0.00$  & $0.00$ & $100\%$ \\
& $0.10$ & $0.00$  & $0.00$  & $0.00$ & $100\%$ \\
& $0.08$ & $0.00$  & $0.00$  & $0.00$ & $100\%$ \\
& $0.05$ & $0.89$  & $0.00$  & $12.46$ & $92.55\%$ \\
\bottomrule
\end{tabular}
}
\end{center}
\end{table}

\subsection{Additional details and results in Section \ref{sec:real}}\label{appendix_real_data}

This section provides a detailed analysis of the U.S.~air transportation network data, evaluates the performance of competing methods (introduced in Section~\ref{sec:simulation}) on both real datasets and presents the constructed confidence intervals using the procedure in Section~\ref{sec:CI}. 

\noindent\textbf{The U.S.~air transportation network data} consist of monthly data from January 2015 to June 2022 ($T=90$) and are available from \cite{BTS2022}. Each node corresponds to an airport and each layer represents a commercial airline. A directed edge in a given layer indicates a direct flight operated by a specific commercial airline between two airports. We choose the $L=4$ airlines with the highest flight volumes and the $n=50$ airports with the most departures and arrivals. Our method identifies change points in December 2015, June 2017, February 2019, February 2020 and February 2021, all corresponding to major abruptions in the U.S.~aviation industry. 

The change point in December 2015 coincides with increased regulatory scrutiny over airline consolidation, following concerns raised by the American Antitrust Institute about reduced market competition after a series of mergers. The June 2017 change point aligns with the proposal of the Aviation Innovation, Reform and Reauthorization Act, which advocated for privatizing air traffic control and influenced route planning among carriers. Moreover, the February 2019 change point follows the U.S.~government shutdown (December 2018 - January 2019), which caused Transportation Security Administration staffing shortages and significant operational disruptions, prompting stabilization efforts in the months that followed.  Lastly, the most significant structural disruptions emerged in February 2020 and February 2021, aligning with the initial shock and continued fallout of the COVID-19 pandemic, which triggered widespread flight cancellations, demand collapse and structural reconfiguration in the aviation industry.

\noindent \textbf{Performance of competitors.}
Table \ref{tbl:trade} summarizes the change points detected by the proposed and competing methods
for the worldwide agricultural trade network data.  Notably, the gSeg method fails to detect any change points after 2010, regardless of input type. Meanwhile, the kerSeg method detects change points in 1990 and 1992, which are temporally too close. In contrast, our proposed method (CPDmrdpg) identifies four major change points that align well with known geopolitical and policy-related events.

Table \ref{tbl:AirTransp} presents the results for the U.S.~air transportation network data. Although the kerSeg method using networks as input demonstrates a good performance in the simulation study, it detects an excessive number of change points in this real data experiment, making the results unreliable and raising concerns about false positives. Similarly, the kerSeg method that uses layer-wise Frobenius norms as input has detected change points that are too close, yielding clusters of change points that could potentially be grouped together. On the contrary, the gSeg method that uses the Frobenius norms as input detects too few change points, while the gSeg method using networks as input has detected too many change points. The proposed CPDmrdpg method (Algorithm~\ref{offline-algorithm}) yields five change points that align well with known disruptions and policy changes in the aviation sector.

While the competitor methods do detect important and relevant change points in both two real datasets, they tend to either over- or under-segment the time span. These patterns suggest that the change points identified by the competing methods may be less realistic or informative compared to those identified by the proposed method.

\begin{table}[!htb]
\caption{Detected change points for the worldwide agricultural trade network data.}
\vspace{0.5em}
\begin{center}
\label{tbl:trade}
\resizebox{0.5\textwidth}{!}{
\begin{tabular}{ ll }  
\toprule
Method & Detected change points\\
\midrule
CPDmrdpg       & 1991, 1999, 2005, 2013\\
gSeg (nets.)   & 1993, 2002, 2010\\
kerSeg (nets.) & 1990, 1992, 1999, 2005, 2012\\
gSeg (frob.)   & 1993, 2002, 2009\\
kerSeg (frob.) & 1990, 1992, 1997, 2003, 2012\\
\bottomrule
\end{tabular}
}
\end{center}
\end{table}

\begin{table}[!htb]
\caption{Detected change points for the U.S.~air transportation network data.}
\vspace{0.4em}
\label{tbl:AirTransp}
\begin{center}
\resizebox{0.8\textwidth}{!}{ 
\begin{tabular}{ ll }  
\toprule
Method & Detected change points\\
\midrule
CPDmrdpg       & 2015-12, 2017-06, 2019-02, 2020-02, 2021-02\\
gSeg (nets.)   & 2015-11, 2016-10, 2017-09, 2018-09, 2019-09, 2020-10, 2021-08\\
kerSeg (nets.) & 2015-11, 2016-03, 2016-10, 2017-05, 2017-09, 2018-05, 2018-10\\
& 2019-03, 2019-09, 2020-03, 2020-10, 2021-03, 2021-09\\
gSeg (frob.)   & 2015-11, 2020-01, 2021-03\\
kerSeg (frob.) & 2015-11, 2017-10, 2020-01, 2021-03, 2021-05, 2021-09, 2022-01\\
\bottomrule
\end{tabular}
}
\end{center}
\end{table}

\noindent \textbf{Performance of constructed confidence intervals.}
Tables~\ref{tbl:trade_CI} and \ref{tbl:AirTransp_CI} report the detected change point from Algorithm~\ref{offline-algorithm} and the $95\%$ confidence intervals constructed via the procedure from Section~\ref{sec:CI}, for the agricultural trade and air transportation networks, respectively.

\begin{table}[!htb]
\caption{Detected change point from Algorithm~\ref{offline-algorithm} and $95\%$ confidence intervals via Section~\ref{sec:CI} for the worldwide agricultural trade network data.}
\vspace{0.5em}
\label{tbl:trade_CI}
\begin{center}
\resizebox{0.6\textwidth}{!}{ 
\begin{tabular}{ lll }  
\toprule
Detected change points & Time point & Confidence interval\\
\midrule
1991 & $6$  & $(5.97,  6.03)$\\
1999 & $14$ & $(13.98, 14.02)$\\
2005 & $20$ & $(17.97, 18.05)$\\
2013 & $28$ & $(25.99, 26.06)$\\
\bottomrule
\end{tabular}
}
\end{center}
\end{table}

\begin{table}[!htb]
\caption{Detected change point from Algorithm \ref{offline-algorithm} and $95\%$ confidence intervals via Section~\ref{sec:CI}  for the U.S.~air transportation network data.}
\vspace{0.5em}
\label{tbl:AirTransp_CI}
\begin{center}
\resizebox{0.6\textwidth}{!}{
\begin{tabular}{ lll }  
\toprule
Detected change points & Time point & Confidence interval\\
\midrule
2015-12 & $12$ & $(11.55, 12.41)$\\
2017-06 & $30$ & $(28.79, 30.98)$\\
2019-02 & $50$ & $(49.67, 53.22)$\\
2020-02 & $62$ & $(59.66, 60.36)$\\
2021-02 & $74$ & $(73.58, 74.27)$\\
\bottomrule
\end{tabular}
}
\end{center}
\end{table}

\end{document}